  \providecommand\BibTeX{{%
    \normalfont B\kern-0.5em{\scshape i\kern-0.25em b}\kern-0.8em\TeX}}}
\newcommand\vldbpagestyle{plain}
  \newcommand{\revision}[1]{#1}
\definecolor{forestgreen}{rgb}{0.13, 0.55, 0.13}
\newcommand{\modelop}[1]{\texttt{#1}}
\newcommand{\forkins}{\modelop{fork}}
\newcommand{\thread}{thread}
\newcommand{\ifconference}{{{\ifx\fullversion\undefined}}}
\def\dfnt@space@setup{%
\dfnt@preskip=\parskip
  \dfnt@postskip=0pt}
\newtheoremstyle{exampstyle}
{.05in} 
{.05in} 
{} 
{.5em} 
{\sc \bfseries} 
{.} 
{.5em} 
{} 
\theoremstyle{exampstyle} 
\theoremstyle{exampstyle} 
\renewenvironment{proof}[1][\proofname]{\par
\vspace{-\topsep}
\pushQED{\qed}%
\normalfont
\topsep0pt \partopsep0pt 
\trivlist
\item[\hskip\labelsep
      \itshape
  #1\@addpunct{.}]\ignorespaces
}{%
\popQED\endtrivlist\@endpefalse
\addvspace{3pt plus 3pt} 
}
\newcommand{\cf}{\ensuremath{\mathcal{C}}\xspace}
\newcommand{\lmax}{\ensuremath{L_{\max}}}
\newcommand{\level}[1]{\ensuremath{l(#1)}}
\newcommand{\cluster}[1]{\ensuremath{\mathsf{cluster}(#1)}}
\newcommand{\sizeconst}[1]{\ensuremath{2^{ #1 }}}
 \crefname{section}{Sec.}{Sec.}
 \crefname{theorem}{Thm.}{Thm.}
 \crefname{lemma}{Lem.}{Lem.}
 \crefname{corollary}{Col.}{Col.}
 \crefname{table}{Tab.}{Tabs.}
 \crefname{algorithm}{Alg.}{Algs.}
 \Crefname{table}{Tab.}{Tabs.}
\begin{document}


\title{Towards Scalable and Practical Batch-Dynamic Connectivity}



\settopmatter{authorsperrow=3}

\author{Quinten De Man}
\affiliation{
    \institution{University of Maryland}
    \country{}
}
\email{deman@umd.edu}

\author{Laxman Dhulipala}
\affiliation{
    \institution{University of Maryland}
    \country{}
}
\email{laxman@umd.edu}

\author{Adam Karczmarz}
\affiliation{
    \institution{University of Warsaw \& IDEAS NCBR}
    \country{}
}
\email{a.karczmarz@mimuw.edu.pl}

\author{Jakub Łącki}
\affiliation{%
  \institution{Google Research}
    \country{}
}
\email{jlacki@google.com}

\author{Julian Shun}
\affiliation{
    \institution{MIT CSAIL}
    \country{}
}
\email{jshun@mit.edu}

\author{Zhongqi Wang}
\affiliation{
    \institution{University of Maryland}
    \country{}
}
\email{zqwang@umd.edu}



\begin{abstract}
We study the problem of dynamically maintaining the connected components of an undirected graph subject to edge insertions and deletions. We give the first parallel algorithm for the problem which is work-efficient, supports batches of updates, runs in polylogarithmic depth, and uses only linear total space. The existing algorithms for the problem either use super-linear space, do not come with strong theoretical bounds, or are not parallel.

On the empirical side, we provide the first implementation of the \emph{cluster forest algorithm}, the first linear-space and poly-logarithmic update time algorithm for dynamic connectivity. 
\revision{Experimentally, we find that our algorithm uses up to $19.7\times$ less space and is up to $6.2\times$ faster than the level-set algorithm of HDT, arguably the most widely-implemented dynamic connectivity algorithm with strong theoretical guarantees.}

\end{abstract}


\hide{
\begin{CCSXML}
<ccs2012>
 <concept>
  <concept_id>00000000.0000000.0000000</concept_id>
  <concept_desc>Do Not Use This Code, Generate the Correct Terms for Your Paper</concept_desc>
  <concept_significance>500</concept_significance>
 </concept>
 <concept>
  <concept_id>00000000.00000000.00000000</concept_id>
  <concept_desc>Do Not Use This Code, Generate the Correct Terms for Your Paper</concept_desc>
  <concept_significance>300</concept_significance>
 </concept>
 <concept>
  <concept_id>00000000.00000000.00000000</concept_id>
  <concept_desc>Do Not Use This Code, Generate the Correct Terms for Your Paper</concept_desc>
  <concept_significance>100</concept_significance>
 </concept>
 <concept>
  <concept_id>00000000.00000000.00000000</concept_id>
  <concept_desc>Do Not Use This Code, Generate the Correct Terms for Your Paper</concept_desc>
  <concept_significance>100</concept_significance>
 </concept>
</ccs2012>
\end{CCSXML}

\ccsdesc[500]{Do Not Use This Code~Generate the Correct Terms for Your Paper}
\ccsdesc[300]{Do Not Use This Code~Generate the Correct Terms for Your Paper}
\ccsdesc{Do Not Use This Code~Generate the Correct Terms for Your Paper}
\ccsdesc[100]{Do Not Use This Code~Generate the Correct Terms for Your Paper}

\keywords{}
}

\renewcommand\footnotetextcopyrightpermission[1]{} 



\maketitle

\pagestyle{\vldbpagestyle}

\section{Introduction}
The problem of dynamically maintaining the connected components of a
graph is a critical subroutine that is often used when designing dynamic
algorithms for other fundamental and practical problems, e.g., dynamic
DBSCAN~\cite{khan2014dbscan,GT2017}, hierarchical clustering~\cite{monath2023online,Ruan2021}, approximate MST~\cite{connectit}, among other problems.
%
It is also one of the most intensely studied dynamic graph problems,
and has seen extensive algorithmic development over the past three
decades~\cite{henzinger1995randomized, eppstein1997sparsification,
holm2001poly, thorup1999decremental, thorup2000near,
henzinger2001maintaining, wulff2013faster, kapron2013dynamic,
rasumssen16faster, huang17dynconn, nanongkai2017dynamic,
wulff2017fully}.
In the fully-dynamic graph connectivity problem, the goal is to build
a data structure that supports the following three operations on a
dynamic undirected graph $G=(V,E)$ with a fixed set of vertices:
\begin{itemize}
    \item $\textbf{Insert}(u,v)$ inserts edge $(u,v)$ into $G$.
    \item $\textbf{Delete}(u,v)$ deletes edge $(u,v)$ from $G$.
    \item $\textbf{Connected}(u,v)$ determines whether vertices $u$
    and $v$ are in the same connected component of $G$.
\end{itemize}

Despite its importance, dynamic connectivity has not yet been
efficiently solved in practice, and marks a major gap in our
understanding of how to bridge theory and practice in dynamic
algorithms.
In the special case of dynamic forest-connectivity, there are data structures such as
{\em Euler Tour Trees (ETTs)}~\cite{henzinger1995randomized} that are reasonably practical and have
been adapted to support parallel updates~\cite{tseng2019batch}.
However,  few implementations exist for the more complex case of general graphs.
For general graphs, perhaps the best known dynamic connectivity data
structure that provides good theoretical guarantees and
has been implemented is due to Holm, de Lichtenberg, and
Thorup~\cite{holm2001poly}, who developed a dynamic connectivity algorithm that performs updates in $O(\log^2 n)$ amortized time and requires $O(n\log n + m)$ space, for a graph with $n$ vertices and $m$ edges.
Their algorithm is based on $O(\log n)$ layers of dynamic forest-connectivity data structures; we refer to their idea as the {\em HDT algorithm}.

However, existing implementations of the HDT algorithm suffer from
high overheads in space and time, limiting their practical
applicability.
For example, we found that to run on a 1.2 billion edge graph, an optimized dynamic connectivity implementation based on the HDT algorithm requires
up to 360 billion bytes---over $70$ times more than what it takes to store the graph using a simple (static) representation.
%
%
The key limitation of implementations based on the HDT algorithm is
that the connectivity information is stored redundantly in a logarithmic number of 
forest-connectivity data structures across different layers.
At a high level, the HDT algorithm maintains a spanning forest $F$ and hierarchy of nested edge 
subsets $F_1 \subseteq F_2 \subseteq \ldots F_k = F$. With this representation, \revision{the vertices of $G$ are present in $k$ trees}, which results in a space usage of 
$O(n\log n)$ across all of the trees.

On the theoretical side, the space usage was improved to linear by an elegant {\em cluster forest algorithm (CF algorithm)} that is inspired by the HDT algorithm~\cite{thorup2000near}.
The key idea in the CF algorithm is to store a single forest of trees (called
the {\em cluster forest}) that implicitly represents the
connectivity information stored in the nested $O(\log n)$ layers of
the HDT algorithm.
%
In its basic version, the CF algorithm achieves similar update times to HDT, while improving the space bound to $O(n+m)$. 
This was the first algorithm that solved dynamic graph connectivity in
poly-logarithmic update time and linear space. 
The CF algorithm was later simplified and optimized by Wulff-Nilsen~\cite{wulff2013faster}.

\begin{figure*}[ht]
    \vspace{-2em}
    \includegraphics[width=0.98\textwidth]{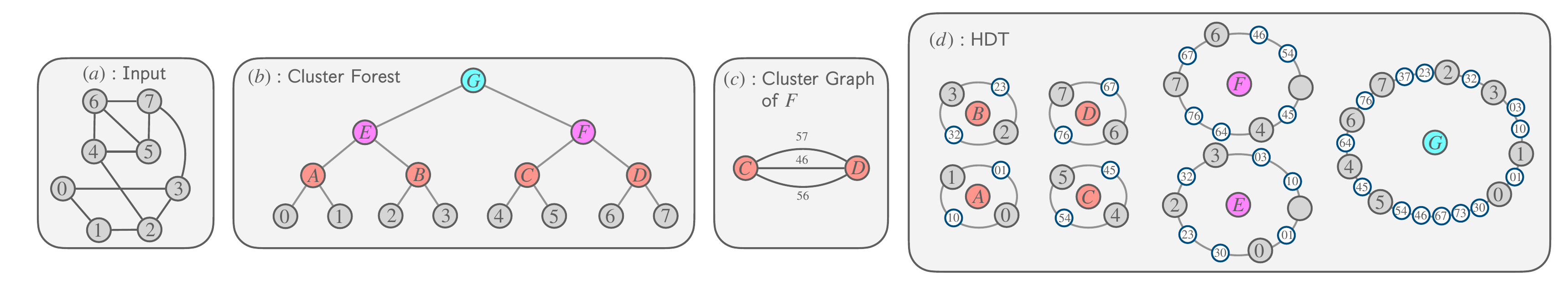}
    \caption{\small {\bf The core data structures used by the cluster forest (CF) and HDT algorithms.} The input graph is shown in \textbf{(a)}. 
    The cluster forest is given in \textbf{(b)}, and represents the nested hierarchy of connected components. 
    The leaves (level 0 nodes) are the original vertices. 
    The internal nodes (level $> 0$) are colored using different colors per-level and are the nested components. \textbf{(c)} shows the cluster graph of the level 2 component $F$. 
    The nodes in the cluster graph are the children of $F$, and the edges in the cluster graph are the level $2$ edges that are incident to vertices that $F$ contains (the level $2$ edges go between level $1$ components). 
    Lastly, \textbf{(d)} shows the same component hierarchy as (b), but as stored by the HDT algorithm. The HDT algorithm stores a {\em separate} Euler Tour Tree (ETT) for {\em every} component in the hierarchy. Each tree edge is stored twice (illustrated as smaller white circles), once per direction.}
    \vspace{-2em}
    \label{fig:introfig}
\end{figure*}

Given the theoretical advantages of the CF algorithm, an important question
is: {\bf \emph{is the CF algorithm practical?}}
For example, does it yield improved space-efficiency or
faster runtime in practice relative to existing dynamic connectivity
implementations?
This question is highly non-trivial due to the complexity of implementing and
optimizing the CF algorithm, which uses a significant amount of indirection
and requires performing sophisticated tree traversals and amortization to 
obtain its update bounds.
A second important question is: {\bf \emph{can the CF algorithm be efficiently parallelized?}}
In particular, can we ensure that each update is processed with low \emph{depth} (longest chain of sequentially
dependent instructions) even in the worst-case?
Furthermore, can we also make the algorithm work efficiently in the \emph{parallel batch-dynamic setting}?
We note that the batch-dynamic setting, in which updates come in batches of arbitrary size, is the  standard modern setting for parallelizing dynamic algorithms~\cite{acar2019parallel, hanauer2022recent, pandey2021terrace, liu2022parallel, besta2019practice, gabert2021shared, dhulipala2020parallel}.
Ideally we would like to parallelize the algorithm without sacrificing
space-efficiency or work-efficiency.
That is, each batch of updates should be performed with low depth, and work (total number of operations) and space matching that of the sequential CF algorithm.
We note that while the HDT algorithm was recently shown to be amenable to an
efficient batch-dynamic algorithm~\cite{acar2019parallel}, it is not space-efficient.

In this paper, we carefully study the CF approach in theory and practice
to answer these open questions.
On the theoretical side, we extend the CF algorithm and show how to achieve low depth.
Specifically, we introduce a new invariant (the
{\em blocked invariant}),
which provides important additional structure that we exploit.
Using our new invariant and our approach to maintaining it, we obtain the first space-efficient and work-efficient parallel algorithm that has poly-logarithmic depth.

On the empirical side, we perform the first experimental study of the CF approach in the sequential setting.
\revision{Compared with the existing state-of-the-art dynamic connectivity implementations with worst-case guarantees based on the HDT algorithm, we find that our implementations use up to $19.7\times$ less space and are up to $6.2\times$ faster than an optimized implementation of HDT.}
%
%
In the next two sections, we formalize  the data structures, and present a technical overview of our results.

\section{Preliminaries}




\myparagraph{Model}
We use the work-span (or work-depth) model for fork-join parallelism to analyze parallel algorithms~\cite{CLRS,blelloch2020optimal}.
The model assumes a set of \thread{}s that share memory.
A \thread{} can \forkins{} $k$ child \thread{s} that run in parallel.
When all children complete, the parent \thread{} continues.
The \defn{work} \revision{$W$} of an algorithm is the total number of instructions and
the \defn{span} (depth) \revision{$D$} is the length of the longest sequence of dependent instructions.
Computations can be executed using a randomized work-stealing
scheduler in practice in $W/P+O(D)$ time \whp{} on $P$ processors~\cite{BL98,ABP01,gu2022analysis}.

\myparagraph{Definitions}
We start by introducing definitions used throughout the paper when describing the dynamic connectivity algorithm.
We are given an undirected graph $G$ with vertices $V$ and edges $E$.
Each edge $e = (u,v) \in E$ has a level, $\level{e} \in [1, \lmax]$ assigned to it, where $\lmax = \lceil \log n \rceil$.
Let $E_i = \{e = (u,v) \in E\ |\ \level{e} \leq i\}$ be the set of all edges with levels $\leq i$.
Let $G_i = (V, E_i)$. We maintain the following size invariant on the connected components of each $G_i$:

\begin{invariant}[Size Invariant]\label{inv:size}
The maximum size of a connected component of $G_i$ is $2^{i}$.
\end{invariant}

We can imagine contracting each of the connected components of $G_i$ to obtain $V_i$, the set of \defn{clusters} at level $i$. The \defn{children} of a cluster $c \in V_i$ are the clusters in $V_{i-1}$ that are merged together using level $i$ edges to obtain $c$.

\myparagraph{Cluster Forest}
By explicitly representing the relationship between clusters on different levels, we obtain the \defn{cluster forest}, which we denote using \cf{}, in which each node has a level in $[0, \lmax]$, and where the nodes at level $i$ represent the clusters of connected components of $G_i$ (the graph containing all edges with level ${\leq i}$). 
The root(s) of $\cf$ are nodes representing clusters at level $\lmax$ in $G$ and correspond to the connected components of the graph. 
The leaves of $\cf$ are nodes representing the clusters at level $0$ in $G$, and correspond to the original vertices of the graph.
Figure~\ref{fig:introfig} (b) illustrates the cluster forest for the graph in Figure~\ref{fig:introfig} (a).
If clusters with only a single child are not stored, it is not difficult to see that the number of nodes in \cf{} is $O(n)$.


Each node $v$ in \cf also stores the size of the cluster it represents, $n(v)$, which is equal to the number of leaves in the subtree rooted at $v$ ($n(v) = 1$ for leaf nodes).
By the size invariant above (Invariant~\ref{inv:size}), for any level $i$ cluster $c$, we have $n(c) \leq \sizeconst{i}$.

For a cluster $c \in \cf{}$ at level $i$, the \defn{cluster graph}, $CG(c)$ of the node is the graph formed by taking its child clusters at level $(i-1)$ as the vertices, and where the edges are the level $i$ edges incident to all leaf vertices in $c$.
Figure~\ref{fig:introfig} (c) illustrates the cluster graph for a vertex in the cluster forest illustrated in Figure~\ref{fig:introfig} (b).

We define a \defn{self-loop edge} as a level $i$ edge for which its endpoints are contained within the same level $(i-1)$ cluster.
\revision{Note that a level $i$ self-loop edge on a level $(i-1)$ cluster $C$ differs from a level $(i-1)$ edge with both endpoints in $C$. The level $i$ self-loop edge appears in the cluster graph of the parent of $C$ as a self-loop on $C$ (thus the name). The level $(i-1)$ edge appears in the cluster graph of $C$ and connects two children of $C$.}


\section{Technical Overview}\label{sec:tech}

\myparagraph{Parallelizing the CF Approach with Low Depth}
In both the CF and HDT approaches, {\em replacement edge search},
i.e., the search for an edge that certifies the connectivity between the endpoints of a deleted edge (i.e., ``replaces'' it) is the most 
complex aspect of the data structure. 
Both algorithms maintain a hierarchy of nested edge 
subsets $E_1 \subseteq E_2 \subseteq \ldots \subseteq E_k$, with the HDT storing a spanning tree of each subset, and the CF algorithm using a more space-efficient representation (see Figure~\ref{fig:introfig}).
%
In both algorithms, replacement search is handled by carefully searching the nested hierarchy of components from the lowest level component containing the deleted edge to the highest level. 
For example, in Figure~\ref{fig:introfig}, if the edge $4$--$6$ is deleted, the edges in component $F$ will be searched, and either $5$--$6$ or $5$--$7$ can be used as a replacement edge.
Both algorithms maintain the 
{\em size invariant} (Invariant~\ref{inv:size}) which ensures that components at level~$i$ have
size at most $2^{i}$.
In each component, the non-tree edges
are searched to find a replacement edge, and the unsuccessfully searched
edges are pushed to a lower level to pay for the cost of
searching them.
The parallel version of the HDT algorithm by Acar et al.~\cite{acar2019parallel} obtains parallelism by performing a
doubling search over the non-tree edges incident to the smaller
of the two components induced by the deleted edge (obtaining the smaller component, and indexing the non-tree edges is made possible using Euler Tour Trees).
Note that it is critical that the edges searched are incident to the {\em smaller} component, since this component (and its non-tree edges) 
will be pushed to a lower level to pay for the search.

Unlike the HDT algorithm, which maintains an Euler Tour Tree for every component (see Figure~\ref{fig:introfig} (d)), and can easily split a component into the smaller/larger halves by deleting the edge in the ETT, due to being more space-efficient, the CF algorithm only has access to the {\em cluster graph} of the component that an edge is deleted from. 
Recall that the cluster graph of a level $i$ node consists of {\em all edges} at level $i$ (all such edges go between the level $i-1$ children of this node).
%
%
Since the CF algorithm cannot simply remove the edge and split the component into smaller/larger halves, the algorithm performs a more careful {\em graph search} that effectively interleaves two searches (e.g., breadth-first searches) from both level $i-1$ clusters incident to the deleted edge.
Like in the HDT algorithm, level $i$ edges that are unsuccessfully inspected can be paid for by pushing them down to level $i-1$.
We give more details in Section~\ref{sec:seqcf}.

The graph search, which needs to discover the connected component of a graph undergoing changes, seems extremely difficult to parallelize work-efficiently using the existing CF algorithm.
%
%
The main issue is that the graph in the cluster forest at this level
can potentially have very high diameter; in fact, the diameter can
be as high as $\Theta(n)$, making a work-efficient parallelization of this
process very challenging unless we are willing to sacrifice depth.

We solve this problem by introducing a new invariant called the {\em blocked invariant} that
ensures that the cluster graph stored at every internal node in the cluster forest is {\em guaranteed to have constant diameter} (in fact the diameter is always at most $2$).
The key idea of the invariant is simple to state: we ensure that
every cluster is incident to at least one edge that cannot be further
pushed down to a lower level without violating the size invariant.
We prove that this property implies that the cluster graph
at each node is guaranteed to have diameter at most $2$.
As a result, during replacement search any parallel graph
procedure (e.g., parallel breadth-first search) will suffice, and help us obtain low
depth, since the diameter is low.
However, maintaining this property dynamically turns out to be very
tricky even in the sequential setting; our main
algorithmic contributions are novel sequential and parallel algorithms to maintain the invariant under 
(batch-) dynamic updates.
Overall, our new parallel batch-dynamic algorithm can perform insertions and deletions with $O(\log^2 n)$ amortized work per edge update and $O(\log^3 n)$ depth per batch.


%

%

\myparagraph{The CF Approach in Practice}
In addition to our theoretical contributions, in this paper we give
the first implementations of any cluster forest data structure, and study
the practicality of the CF approach and our new invariants in the 
sequential setting.
As we discuss in more detail in Section~\ref{sec:exp}, implementing CF algorithms
seems to be even more involved than implementations of the HDT algorithm.
A significant implementation challenge is that a single data 
structure---the cluster forest---stores
both the hierarchy of connected components and the non-tree edges stored at
each component in the hierarchy (in the HDT algorithm, the implementation complexity is
somewhat lower since each component is stored as a separate Euler Tour Tree).

Our experiments show that our new implementations of the CF approach
are significantly more space-efficient and process updates faster than existing state-of-the-art dynamic connectivity implementations.
For example, across a diverse set of graph inputs, our CF implementations
achieve up to 19.7$\times$ lower space usage compared with a carefully 
designed implementation of HDT, and up to 6.2$\times$ faster updates.



%
%
%
%
%
%
%


%

%
%
%
\section{Sequential CF Algorithm}\label{sec:seqcf}

Next, we review how the sequential CF algorithm performs insertions and deletions.

\myparagraph{Insert $e=(u,v)$} 
The CF algorithm first sets the level of the edge, $\level{e} = \lmax$. Let $r_u$ and $r_v$ be the roots of the trees containing $u$ and~$v$, respectively, in \cf.
If $r_u = r_v$, then nothing further needs to be done.
If $r_u \neq r_v$, then we increased the connectivity of the graph and must merge $r_u$ and $r_v$ together.
This requires a {\em merge} operation on \cf{} that takes the roots of two trees and merges them together by adding the children of (without loss of generality) $r_v$ as children of $r_u$ and deleting $r_v$.

\myparagraph{Delete $e=(u,v)$} As in the HDT algorithm, deletion requires performing a {\em replacement edge search} to check if the deletion of $(u,v)$ affects the connectivity of $G$.
Let $i = \level{e}$.
In the CF algorithm, the problem boils down to {\em certifying the connectivity} of the level $i$ cluster\revision{, $P$,} containing $u$ and $v$.
\revision{Recall that the cluster graph of $P$ consists of the level $(i-1)$ clusters that are children of $P$ and the level $i$ edges with both endpoints in $P$.}
Let the level $(i-1)$ clusters containing $u$ and $v$ be $C_u$ and $C_v$, respectively, and the cluster graph of $P$, $CG(P)$, be $G_i$.
If $C_u = C_v$ ($e$ is a self-loop), then we can quit since the connectivity of $G_i$ is unaffected after deleting $e$.

If $C_u \neq C_v$, then we need to check whether $(u,v)$ was a {\em bridge} of $G_i$ (i.e., every possible spanning tree of $G_i$ must use this edge). If  the deleted edge is a bridge, then $G_i$ splits into two components. This means that the cluster forest \cf must be updated, and then the algorithm must recursively check at level $(i+1)$ whether the two split pieces of $G_i$ can be reconnected using a level $(i+1)$ edge.

\begin{figure}
\centering
\includegraphics[width=0.48\textwidth]{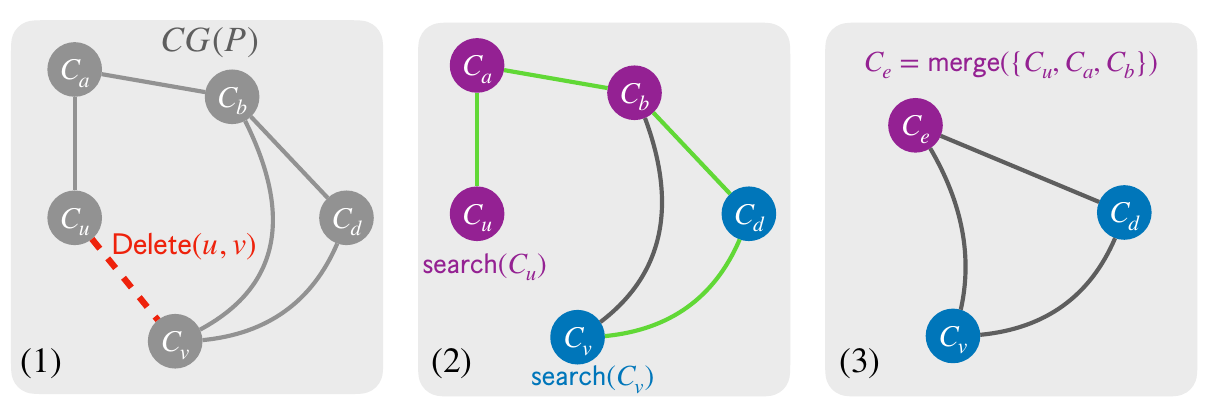}
\caption{\label{fig:wndeletion}
\small{}
    {\bf (1):} The cluster graph of the level $i$ cluster $P$ containing a deleted level $i$ edge $(u,v)$; the vertices of the cluster graph are the level $(i-1)$ child clusters of $P$.
    {\bf (2):} Deleting $(u,v)$ may disconnect $CG(P)$, so $\mathsf{search}(C_u)$ and $\mathsf{search}(C_v)$ are run to check if $C_u$ and $C_v$ are still connected using level $i$ edges in $CG(P)$. Green edges are edges explored during the search.
    The size of all clusters explored by the smaller search must have size $\leq \sizeconst{i-1}$, and can be merged into a single level $(i-1)$ cluster.
    {\bf (3):} In this example, $CG(P)$ remains connected, $\{C_u,C_a,C_b\}$ are merged, and the explored level $i$ edges of the smaller search are pushed to level $(i-1)$.
    }
\end{figure}

To certify connectivity in $G_i$, the CF algorithm runs two searches from both $C_u$ and $C_v$ using any graph search procedure, e.g., depth-first search. Call these searches $\mathsf{search}(C_u)$ and $\mathsf{search}(C_v)$. The searches explore the multi-graph of level $(i-1)$ clusters that are children of $P$, and level $i$ edges between them (see Figure~\ref{fig:wndeletion}).
%
%
Unlike in the HDT algorithm, the CF algorithm doesn't know which of $C_u$ or $C_v$ is in the smaller component after deleting $(u,v)$.
To obtain amortized work bounds, the CF algorithm alternates between running steps of $\mathsf{search}(C_u)$ and $\mathsf{search}(C_v)$. 
The searches stop when either (1) a common vertex in $G_i$ is explored by both searches, or
(2) one of the searches runs out of level $i$ edges to explore, certifying that $C_u$ and $C_v$ are no longer connected using level $i$ edges.

In case (1), $C_u$ and $C_v$ must be connected so $G_i$ is unaffected, and thus the cluster forest \cf does not change. 
Let $S_u$ and $S_v$ be the set of level $(i-1)$ clusters explored by $u$ and $v$'s searches, respectively. Let $n_u = \sum_{c \in S_u} n(c)$ and $n_v = \sum_{c \in S_v} n(c)$. 
Then $\min(n_u, n_v) \leq \sizeconst{i-1}$, and so we can push all of the level $i$ edges explored by the smaller search down to level $(i-1)$. The alternating search ensures that we don't need to worry about pushing down edges incident to the larger component as the work on these edges is paid for by the pushing down of edges in the smaller component. Pushing edges to level $(i-1)$ can require merging level $(i-1)$ clusters.

In case (2), we take the search with smaller total size value, with out loss of generality $\mathsf{search}(C_u)$, and push all of the level $i$ edges it explored down to level $(i-1)$ to pay for the search. This requires merging all level $(i-1)$ clusters explored by $\mathsf{search}(C_u)$. Let $W$ be this new level $(i-1)$ node that they were all merged into. 
Next, we need to update \cf to reflect the fact that $G_i$ split. We first remove $W$ as a child of $P$ in \cf and decrement $n(P)$ by $n(W)$.
We then create a new level $i$ node $P'$ and set the parent of $W$ to $P'$, and add $P'$ as a child of the parent of $P$.
Only in case (2) do we need to continue (recursively) at level $(i+1)$ to check if the level $i$ clusters containing $u$ and $v$ remain connected using the level $(i+1)$ edges, or whether a similar split needs to happen at level $(i+1)$ or a higher level.

\myparagraph{Cluster Forest Interface}
Studying this algorithm, we can identify the following necessary operations on the cluster forest \cf{}:
\begin{enumerate}[label=(\arabic*),topsep=0pt,itemsep=0pt,parsep=0pt,leftmargin=20pt]
\item \textbf{FetchEdge}$(C,i)$ the search needs to be able to iterate over the level $i$ edges incident to a level $(i-1)$ cluster $C$.
\item \textbf{Parent}$(C)$ returns the parent of $C$ in the cluster forest.
\item \textbf{AddChild}$(P,C)$ adds a node $C$ as a child of node $P$.
\item \textbf{RemoveChild}$(P,C)$ removes $C$ from the children of $P$.
\item  \textbf{Cluster}$(v, l)$ returns the level $l$ cluster containing $v$.
\item \textbf{Merge}$(C_1, C_2)$ merges two level $(i-1)$ clusters $C_1$ and $C_2$.
\item \textbf{PushDown}$(e)$ pushes down a level $i$ edge $e$.
\end{enumerate}


\myparagraph{Local Trees Implementation}
Since a node in the cluster forest may have at most $n$ children, the CF algorithm represents the cluster graph of each cluster in \cf using {\em local trees} to allow performing all the above operations in $O(\log{n})$ time. 
A local tree for a cluster $u$ in \cf is a binary tree $L(u)$ where the children of $u$ in \cf are leaves of $L(u)$. 
Let the rank of a child $v$ of $u$ be $r(v) = \lfloor \log n(v) \rfloor$. 
Initially, each child cluster is its own tree.
While there are two trees $r$ and $r'$ of the same rank, they are paired up into a new tree $r''$ with rank one larger.
Once this pairing process terminates, there are at most $\log n$ trees $T_1, \ldots T_k$ \revision{which are called \defn{rank trees}. The collection of rank trees} are combined into a single binary tree by connecting them in order of rank along the right spine of a binary tree.

To efficiently iterate over level $i$ edges during the search, the trees are augmented using a $\log n$-length {\em bitmap} (stored as a single word) where the $i$-th bit is $1$ if and only if there is a level $i$ edge incident to some leaf vertex in the subtree. 
%
All of the operations needed in the sequential CF algorithm can be implemented in $O(\log n)$ worst-case time using local trees~\cite{wulff2013faster}.

\myparagraph{Advantages and Challenges of the CF Algorithm}
One immediate advantage of the cluster forest algorithm~\cite{wulff2013faster} is that the space requirement for \cf can be made $O(n + m)$ by simply ensuring that the cluster forest \cf{} is path compressed, i.e., a level-$i$ cluster $c$ is explicitly represented if and only if either $i=0$ or there is a level $i$ edge $e$ with both endpoints in $c$ (thus either $e$ is a self-loop or $c$ has at least two level-$(i-1)$ child clusters).
The {\em compressed} designation means that we apply path compression on the data structure. That is if a level $i$ node isn't incident to any level $i$ edges (thus it has only one child), then that node is not explicitly represented in the data structure.
On the other hand, the HDT algorithm requires $O(n\log n + m)$ space, since each vertex is potentially present in an Euler Tour Tree at all $O(\log n)$ levels. Since large graphs in practice are often extremely sparse~\cite{dhulipala2018theoretically}, achieving linear total space is an important goal that can lead to practical and theoretically-efficient implementations of dynamic graph connectivity.

As discussed in Section~\ref{sec:tech}, the main challenge with parallelizing the CF algorithm is how to perform the replacement edge search, which requires work-efficiently traversing the cluster graph (which can potentially have very high diameter, even diameter $\Theta(n)$).

\section{The Blocked Cluster Forest}

We will use the idea of {\em blocked edges} to obtain more structured cluster graphs that have bounded diameter and enable us to parallelize the connectivity search. \revision{Here we define blocked edges and establish the main invariant of our new data structure:}

\begin{definition}[Blocked Edge]\label{def:blocked_edge}
A level $i$ edge $e=(u,v)$ is a \defn{blocked edge} if it cannot be pushed to level $(i-1)$ without violating Invariant~\ref{inv:size}. I.e., a level $i$ edge $(u,v)$ is blocked if and only if $n(\cluster{u,i})+n(\cluster{v,i}) > \sizeconst{i-1}$.
An \defn{unblocked edge} is an edge that is not blocked.
\end{definition}

\begin{invariant}[Blocked Edge Invariant]\label{inv:blocked}
Consider any cluster graph $CG(c)$ of a level $i$ cluster $c$. Then every level-$(i-1)$ cluster $X \in CG(c)$ is incident to at least one blocked level -$i$ edge or $c$ is an isolated cluster and only has one child $X$. 
\end{invariant}

A \defn{blocked cluster-forest} is defined as a cluster forest where every cluster satisfies Invariant~\ref{inv:blocked}.
An \defn{isolated cluster} in the cluster forest is a cluster that has only a single child.
The blocked edge invariant is useful since we can show that it implies that every cluster graph has low diameter, making them more amenable to parallel search. 
\revision{Next we describe several important and useful properties of the blocked cluster-forest. The proofs are left to the appendix.}
The key property is that a maximum matching computed over the blocked edges must have size at most $1$, as summarized by the following lemma:

\begin{restatable}{lemma}{blockedmatchingsize}\label{lem:blocked_matching_size}
    Suppose Invariant~\ref{inv:blocked} holds for a cluster graph $CG(c)$ of a level $i$ cluster $c$. Let $M$ be the size of the maximum matching in $CG(c)$ over only the blocked edges. Then $M \leq 1$.
\end{restatable}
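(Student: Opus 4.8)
The plan is to argue by contradiction: suppose the maximum matching $M$ in $CG(c)$ restricted to blocked edges has size at least $2$. Then there exist two disjoint blocked edges, say $e_1 = (A,B)$ and $e_2 = (X,Y)$, where $A,B,X,Y$ are four distinct level-$(i-1)$ child clusters of $c$. Since $e_1$ is blocked, by Definition~\ref{def:blocked_edge} we have $n(A) + n(B) > \sizeconst{i-1}$, i.e. $n(A) + n(B) \geq 2^{i-1} + 1$. Likewise $n(X) + n(Y) \geq 2^{i-1} + 1$. Adding these, $n(A) + n(B) + n(X) + n(Y) \geq 2^i + 2$.

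The key observation is then that $A$, $B$, $X$, and $Y$ are all distinct children of $c$, so their sizes sum to at most $n(c)$ (each leaf of $c$'s subtree lies under at most one of these children). Hence $n(c) \geq n(A) + n(B) + n(X) + n(Y) \geq 2^i + 2 > 2^i$. But $c$ is a level-$i$ cluster, so the Size Invariant (Invariant~\ref{inv:size}), equivalently $n(c) \leq \sizeconst{i}$, is violated --- a contradiction. Therefore $M \leq 1$.

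The only subtlety, and the step I would be most careful about, is justifying that the four endpoints of a size-$2$ matching are genuinely four \emph{distinct} clusters and that their sizes are additively bounded by $n(c)$. Distinctness of $A,B,X,Y$ follows because a matching by definition consists of edges with no shared endpoint; here "endpoint" means a vertex of the cluster graph, i.e. a level-$(i-1)$ child of $c$. (Note that self-loop edges of $CG(c)$ cannot appear in the matching in a way that uses one vertex, but a self-loop is blocked iff $n(A) + n(A) = 2n(A) > \sizeconst{i-1}$; even including such a case, a matching of size $2$ still forces at least three — and with two genuine edges, the counting still yields $n(c) > 2^i$ — but cleanest is to note a self-loop contributes $0$ to a matching, so both matched edges are non-loops with four distinct endpoints.) The additivity $\sum_{c' \text{ child}} n(c') = n(c)$ is immediate from the definition of $n(\cdot)$ as the leaf count of a subtree, since the subtrees rooted at distinct children partition the non-root leaves of $c$'s subtree (and $c$ itself, being internal at level $i \geq 1$, is not a leaf).

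I do not expect any real obstacle here — the lemma is essentially a pigeonhole/counting argument once the definitions are unpacked. The entire proof is: blockedness gives each matched edge's endpoints a combined size exceeding $2^{i-1}$; two disjoint such edges exceed $2^i$ in total size; but all four clusters fit inside $c$, which has size at most $2^i$; contradiction.
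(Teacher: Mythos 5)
Your proof is correct and is essentially the same counting argument as the paper's: two disjoint blocked edges force four distinct children whose sizes exceed $2\cdot\sizeconst{i-1}=\sizeconst{i}$, contradicting the size invariant for $c$. The extra care you take about self-loops and the additivity of child sizes is fine but not needed beyond what the paper already assumes.
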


The invariant also implies that we cannot have a path using only blocked edges of length $\geq 3$, since such a path has a blocked matching of size $M > 1$. This allows to prove the following lemma that bounds the diameter of any cluster graph.

\begin{restatable}{lemma}{diameter}\label{lem:diameter}
Suppose Invariant~\ref{inv:blocked} holds for a cluster graph $CG(c)$ of a level $i$ cluster $c$.
Then $CG(c)$ has diameter $\mathsf{diam}(CG(c)) \leq 2$.
\end{restatable}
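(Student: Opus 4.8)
The plan is to reduce the diameter bound to a structural characterization of the subgraph formed by the blocked edges. First I would dispose of the trivial case: if $c$ is an isolated cluster with a single child, then $CG(c)$ consists of a single vertex and has diameter $0 \le 2$. So assume $c$ has at least two children. Let $H$ be the spanning subgraph of $CG(c)$ (on the same vertex set, namely the level-$(i-1)$ children of $c$) whose edges are exactly the blocked level-$i$ edges; since distances and diameter are unaffected by edge multiplicities, I work with the underlying simple graph. By Invariant~\ref{inv:blocked}, every vertex of $CG(c)$ is incident to a blocked level-$i$ edge, so $H$ has no isolated vertices; in particular, having at least two vertices, $H$ contains at least one edge.

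Next I would invoke Lemma~\ref{lem:blocked_matching_size}, which says the maximum matching of $H$ has size at most $1$; together with the previous step, $H$ has matching number exactly $1$. The key combinatorial step is then the classical fact that a graph whose matching number equals $1$ is, up to isolated vertices, either a star $K_{1,k}$ (for some $k \ge 1$) or a triangle $K_3$. I would prove this by a short case analysis: fix an edge $(a,b)$; every other edge must intersect $\{a,b\}$, for otherwise we obtain a matching of size $2$; if all other edges pass through $a$ (resp.\ $b$) we get a star centered at $a$ (resp.\ $b$); otherwise there are edges $(a,x)$ and $(b,y)$ with $x \ne b$ and $y \ne a$, and since these two edges must intersect we conclude $x = y$, forming a triangle on $\{a,b,x\}$, after which a further check shows no additional edge can be present without creating a matching of size $2$. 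Since $H$ has no isolated vertices, $H$ is exactly a star or a triangle on the full vertex set $V(CG(c))$.

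Finally, both $K_{1,k}$ and $K_3$ are connected and have diameter at most $2$ (a star via its center, a triangle trivially). Since $H$ is a spanning subgraph of $CG(c)$, this gives $\mathsf{diam}(CG(c)) \le \mathsf{diam}(H) \le 2$; note this argument also shows $CG(c)$ is connected, so its diameter is well-defined even without assuming connectivity. The main obstacle is really only making the matching-number-$1$ characterization airtight (the multigraph bookkeeping and the subcase verifying that a triangle admits no fourth blocked edge); everything else follows immediately from Lemma~\ref{lem:blocked_matching_size} and Invariant~\ref{inv:blocked}.
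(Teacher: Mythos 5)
Your proof is correct, but it takes a genuinely different route from the paper. The paper argues by direct contradiction: it assumes a shortest path $p_1,p_2,p_3,p_4$ realizing a diameter greater than $2$, uses Invariant~\ref{inv:blocked} to obtain blocked edges incident to $p_1$ and $p_4$, observes that these edges cannot share an endpoint (a common endpoint would yield a length-$2$ path, and an endpoint equal to $p_1$ or $p_4$ a length-$1$ path), and thus produces a blocked matching of size $2$, contradicting Lemma~\ref{lem:blocked_matching_size}. You instead classify the spanning subgraph $H$ of blocked edges completely: matching number exactly $1$ plus no isolated vertices (which Invariant~\ref{inv:blocked} gives once the single-child case is set aside) forces $H$ to be a star or a triangle on all of $V(CG(c))$, and the diameter bound is then immediate since $H$ is spanning. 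Your argument is heavier for this one statement but buys more: it yields connectivity of $CG(c)$ as a byproduct (the paper's proof tacitly presumes a finite diameter realized by a path), and it essentially pre-proves the structural content that the paper defers to Lemma~\ref{lem:radius} and the three-case picture of Figure~\ref{fig:blocked_structure}, where the paper uses a separate incremental edge-by-edge argument. The one point to keep airtight in your write-up is the reduction to a simple graph: besides multiplicities you should note that self-loops can be discarded (a self-loop is never blocked, since pushing it down merges no level-$(i-1)$ clusters), so a vertex whose invariant witness exists is genuinely non-isolated in $H$; with that remark your case analysis for the matching-number-$1$ characterization is complete and correct.
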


The following two lemmas describe other properties of the blocked cluster forest which enable efficient updates:

\begin{restatable}{lemma}{radius}\label{lem:radius}
Suppose Invariant~\ref{inv:blocked} holds for a graph $CG(c)$ of a level $i$ cluster $c$. Then there exists a {\textbf{center}} node in $CG(c)$ that is connected to every other node by a blocked edge. 
\end{restatable}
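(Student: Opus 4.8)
\textbf{Proof plan for Lemma~\ref{lem:radius} (existence of a center).}
The plan is to leverage Lemma~\ref{lem:blocked_matching_size}, which already tells us that the maximum matching over the blocked edges of $CG(c)$ has size at most $1$. First I would dispose of the degenerate cases: if $c$ is an isolated cluster with a single child $X$, then $X$ is trivially a center (there are no other nodes to connect to). Otherwise, by Invariant~\ref{inv:blocked} every level-$(i-1)$ cluster in $CG(c)$ is incident to at least one blocked edge, so the set of blocked edges forms a subgraph $B$ that covers all vertices of $CG(c)$, i.e., $B$ has no isolated vertices.

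The core of the argument is a structural claim: a graph $B$ on vertex set $V(CG(c))$ with no isolated vertices and maximum matching size $\leq 1$ must be a star. Indeed, since $B$ has at least one edge, its matching number is exactly $1$. A standard fact is that a graph has matching number $1$ if and only if every connected component with an edge is a triangle or a star, and moreover there can be at most one such component (any two disjoint edges in different components give a matching of size $2$). Combined with the no-isolated-vertices condition, $B$ must be connected, hence $B$ is either a single triangle $K_3$ or a star $K_{1,t}$ for some $t \geq 1$. In either case there is a vertex adjacent (in $B$, hence via a blocked edge) to every other vertex: in a star it is the center, and in $K_3$ every vertex works. I would take that vertex as the claimed center.

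The main obstacle, and the only place requiring care, is justifying the "matching number $1$ $\Rightarrow$ star or triangle, and unique nontrivial component" step cleanly rather than hand-waving it. I would argue it directly: suppose $B$ has two edges $e_1=(a,b)$ and $e_2=(x,y)$ that share no endpoint — then $\{e_1,e_2\}$ is a matching of size $2$, contradiction; so any two edges of $B$ share a vertex. A classical consequence (the Helly-type property for edges of a graph with pairwise-intersecting edges) is that either all edges pass through one common vertex (a star), or the edges are exactly the three edges of a triangle. This pairwise-intersection-to-star-or-triangle fact is elementary and can be proved in two lines by case analysis on whether three edges have a common vertex. Once this is in hand, the no-isolated-vertices hypothesis from Invariant~\ref{inv:blocked} forces every vertex of $CG(c)$ to lie in this star/triangle, and the center (or any triangle vertex) is connected by a blocked edge to all others, completing the proof.
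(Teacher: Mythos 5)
Your proposal is correct and takes essentially the same route as the paper: both rest on the fact that no two blocked edges can be vertex-disjoint (you cite Lemma~\ref{lem:blocked_matching_size}, the paper re-derives it from the size invariant), conclude that the blocked-edge subgraph is a star or a triangle, and use Invariant~\ref{inv:blocked} to ensure every node of $CG(c)$ is covered so the star center (or any triangle vertex) serves as the center. The only cosmetic difference is that the paper establishes the star/triangle structure by adding blocked edges incrementally, whereas you invoke the classical pairwise-intersecting-edges fact.
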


\begin{restatable}{lemma}{heavycenter}\label{lem:heavy_center}
Suppose Invariant~\ref{inv:blocked} holds for a graph $CG(c)$ of a level $i$ cluster $c$, and $CG(c)$ has $k \geq 4$ nodes. Then the center node corresponds to the largest cluster in $CG(c)$.
\end{restatable}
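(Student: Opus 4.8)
The plan is to combine the center property of Lemma~\ref{lem:radius} with the size invariant (Invariant~\ref{inv:size}) via a counting argument. Let $z$ be a center node of $CG(c)$, so every other node $X \in CG(c)$ is joined to $z$ by a blocked level-$i$ edge. By the definition of a blocked edge, for each such $X$ we have $n(z) + n(X) > 2^{i-1}$. I would first argue that $z$ must be the largest cluster; suppose for contradiction that some $Y \neq z$ has $n(Y) > n(z)$. Since $c$ is a level-$i$ cluster, Invariant~\ref{inv:size} gives $n(c) \le 2^i$, and $n(c) = \sum_{X \in CG(c)} n(X)$ because the children of $c$ partition the leaves of $c$.

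The key step is the pigeonhole/counting estimate. Because $k \ge 4$, there are at least three nodes other than $z$, say $X_1, X_2, X_3$, each blocked against $z$, so $n(z) + n(X_j) > 2^{i-1}$ for $j = 1,2,3$. I would also use the edge between $z$ and $Y$ (note $Y$ is one of the $X_j$, or we just additionally have $Y$ among the $\ge k-1 \ge 3$ neighbors of $z$): since $n(Y) > n(z)$ and $n(z) + n(Y) > 2^{i-1}$, in particular $2 n(Y) > 2^{i-1}$, i.e.\ $n(Y) > 2^{i-2}$. The cleanest route is: summing $n(z) + n(X_j) > 2^{i-1}$ over three distinct neighbors $X_1, X_2, X_3$ of $z$ gives $3 n(z) + n(X_1) + n(X_2) + n(X_3) > 3 \cdot 2^{i-1}$, hence $n(c) \ge n(z) + n(X_1) + n(X_2) + n(X_3) \ge 3 \cdot 2^{i-1} - 2 n(z) \ge 3 \cdot 2^{i-1} - 2 n(c)$ would overshoot, so I need to be more careful and instead isolate $n(z)$. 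A better formulation: if $n(z)$ is not the maximum, pick $Y$ with $n(Y) \ge n(z)+1 > n(z)$, and pick two further neighbors $X_1, X_2$ of $z$ distinct from $Y$ (possible since $k \ge 4$ means $z$ has $\ge 3$ neighbors). Then $n(c) \ge n(Y) + n(X_1) + n(X_2) + n(z)$. Using $n(X_1) > 2^{i-1} - n(z)$, $n(X_2) > 2^{i-1} - n(z)$, and $n(Y) > n(z)$, we get $n(c) > n(z) + 2(2^{i-1} - n(z)) + n(z) = 2^i$, contradicting Invariant~\ref{inv:size}.

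Therefore no node strictly larger than $z$ exists, so $z$ is a largest cluster; and since the center with the center property is essentially forced (any center must be blocked against all $\ge 3$ others, and the same inequality chain shows two distinct nodes cannot both have this property when $k \ge 4$), the center node corresponds to the largest cluster, as claimed. I expect the main obstacle to be purely bookkeeping: making sure that when $k = 4$ there really are three neighbors of $z$ other than the hypothetical larger node $Y$, and handling the boundary case where several clusters tie for the maximum size (in which case ``the largest cluster'' should be read as ``a largest cluster,'' and the center is one of them); I would state the tie-breaking convention explicitly and note that the inequality $n(c) > 2^i$ is strict, so no slack is lost.
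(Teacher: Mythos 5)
Your final argument is correct, and it reaches the conclusion by a somewhat more direct route than the paper. Both proofs rest on the same ingredients -- Lemma~\ref{lem:radius} (every other node is joined to the center by a blocked edge), the definition of a blocked edge, and Invariant~\ref{inv:size} -- but the paper proceeds in two stages: it first sums the blocked-edge inequalities over all $k-1$ satellites to derive the quantitative lower bound $n(u) > \sizeconst{i-1}/2$ for the center $u$, and then shows that if some satellite were at least as large as $u$, one of the remaining $k-2$ satellites would have size at most $\sizeconst{i-1}/2 - c_0$, so its edge to the center would be unblocked, contradicting Lemma~\ref{lem:radius}. You instead contradict the size invariant directly: the center $z$, a hypothetically larger node $Y$, and two further blocked neighbors $X_1, X_2$ of $z$ (which exist because $k \ge 4$) already satisfy $n(z)+n(Y)+n(X_1)+n(X_2) > n(z)+n(z)+2\bigl(\sizeconst{i-1}-n(z)\bigr) = \sizeconst{i}$, so $n(c) > \sizeconst{i}$. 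This is shorter and avoids the intermediate bound on $n(z)$ entirely. Two small remarks: your worry about ties is unnecessary -- the same chain of inequalities goes through verbatim with $n(Y) \ge n(z)$ in place of $n(Y) > n(z)$, since the strictness already comes from the blocked-edge inequalities, so the center is in fact the strict maximum (which is also what the paper's argument establishes); and the exploratory false start in the middle of your write-up (summing over three neighbors and noting it "overshoots") should simply be deleted, since the corrected formulation that follows is the actual proof.
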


\myparagraph{Using Structure for Parallel Connectivity Search}
Consider a level $i$ cluster $c$. 
We can characterize the cluster graph for a cluster~$c$ as one of following:

\begin{enumerate}
\item The cluster graph is a star (case (1) in Figure~\ref{fig:blocked_structure}).
\item The cluster graph is a triangle (case (2) in Figure~\ref{fig:blocked_structure}).
\item The cluster graph is a single node (case (3) in Figure~\ref{fig:blocked_structure}). 
\end{enumerate}

\begin{figure}
\centering
\vspace{-0.5em}
\includegraphics[width=0.4\textwidth]{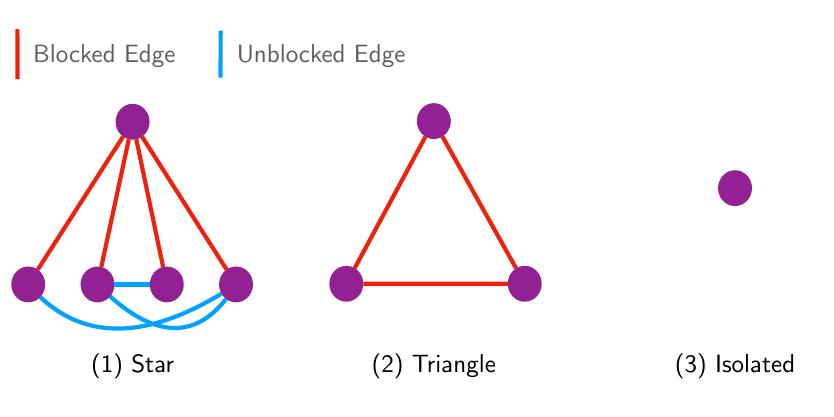}
\caption{\label{fig:blocked_structure}
\small
    The three possible cases for what the cluster graph of a level $i$ cluster in a blocked cluster-forest can look like. Red edges are blocked edges, and blue edges are unblocked edges.}
\end{figure}

Figure~\ref{fig:star} illustrates the relationship between clusters in the cluster forest \cf{} and the cluster graph of a node.
Since the graphs are guaranteed to have low diameter, performing a graph search on $CG(c)$ from two clusters $C_u$ and $C_v$ can be done in low depth by running the searches in lock step and doubling the number of edges that we explore at each step. 
The doubling ensures that we can still amortize the exploration cost to level decreases on the edges in previous steps while ensuring that the search runs in $O(\mathsf{polylog}(n))$ depth.
The main challenge now is how to maintain the blocked invariant dynamically.

\begin{figure}
\centering
\includegraphics[width=0.47\textwidth]{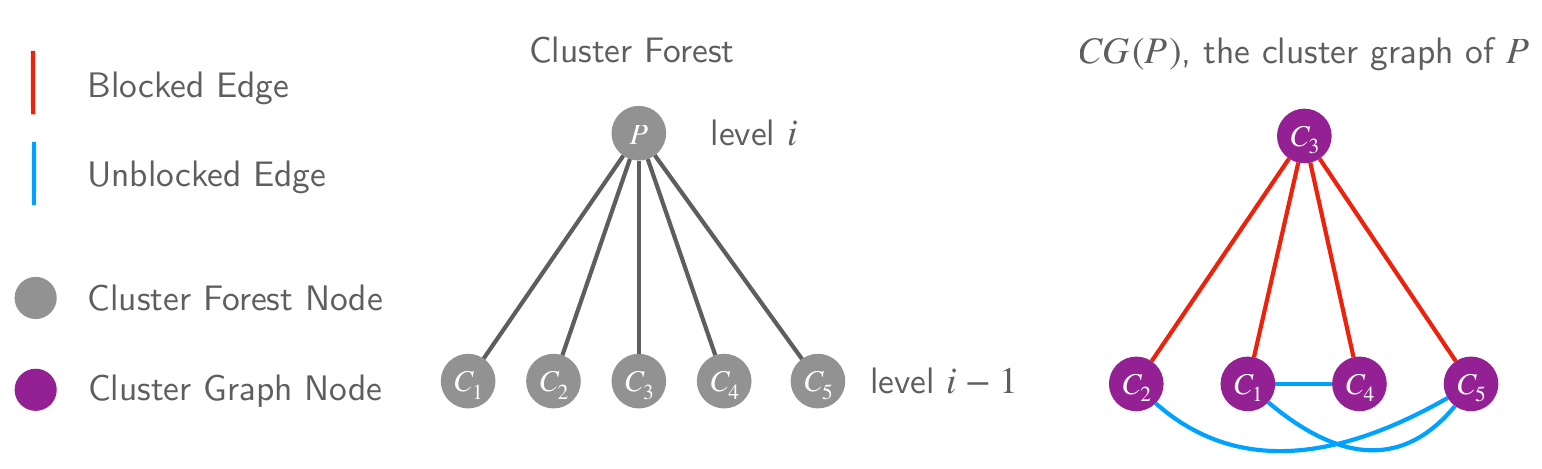}
\caption{\label{fig:star}
\small
Illustration of the cluster graph for a node in the blocked cluster-forest, and its star structure. The center node of $CG(P)$ is $C_3$, and all other nodes are satellites connected to the center through a blocked edge.}
\end{figure}

\subsection{Updating the Blocked Cluster-Forest}
In this section, we describe how to maintain Invariant~\ref{inv:blocked} while performing single edge insertions and deletions. 
Surprisingly, achieving this goal even in the sequential setting requires some non-trivial ideas and analysis.

\myparagraph{Pushing Edges Down} Note that pushing an edge from level $(i+1)$ to $i$ may violate the blocked invariant for its new level $(i-1)$ endpoints.
Lemma~\ref{lem:unblocked_isolated} and \ref{lem:push_down_until_blocked} prove that if an unblocked edge is {\em repeatedly} pushed down until it is blocked, then the blocked invariant will be maintained. \revision{The proofs are left to the appendix due to space constraints.}

\begin{restatable}{lemma}{unblockedisolated}\label{lem:unblocked_isolated}
    If a level $i$ edge $e$ between two distinct level $(i-1)$ clusters $C_1$ and $C_2$ is unblocked, then either $C_1$ or $C_2$ is an isolated cluster, i.e., its cluster graph consists of a single level $(i-2)$ cluster.
\end{restatable}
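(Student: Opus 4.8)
I would argue by contradiction. Write $C_1 = \cluster{u,i-1}$ and $C_2 = \cluster{v,i-1}$ for the two distinct level-$(i-1)$ clusters joined by $e$. The first step is to unpack the hypotheses into size statements: by Definition~\ref{def:blocked_edge}, $e$ being unblocked says $n(C_1) + n(C_2) \le \sizeconst{i-1}$, while $C_j$ being isolated says $CG(C_j)$ consists of a single level-$(i-2)$ cluster, i.e.\ $C_j$ has exactly one child. (The case $i = 1$ is vacuous: a level-$1$ edge between two distinct leaves has $n(C_1)+n(C_2) = 2 > \sizeconst{0}$, so it is always blocked; hence assume $i \ge 2$.) We also assume, as throughout this section, that Invariant~\ref{inv:blocked} holds.

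The heart of the proof is the claim that any non-isolated level-$(i-1)$ cluster $C$ satisfies $n(C) \ge \sizeconst{i-2}+1$. Given this, suppose for contradiction that neither $C_1$ nor $C_2$ is isolated; then $n(C_1) + n(C_2) \ge 2\cdot\sizeconst{i-2} + 2 = \sizeconst{i-1} + 2 > \sizeconst{i-1}$, contradicting unblockedness of $e$, and we are done. To prove the claim: if $C$ is non-isolated then $CG(C)$ has at least two vertices, so by Lemma~\ref{lem:radius} it has a center vertex joined to some other vertex by a blocked level-$(i-1)$ edge. Call the endpoints of this edge $D$ and $D'$; they are distinct level-$(i-2)$ clusters, both subclusters of $C$. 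Since the edge is blocked, Definition~\ref{def:blocked_edge} applied at level $i-1$ gives $n(D) + n(D') > \sizeconst{i-2}$, and disjointness of $D$ and $D'$ yields $n(C) \ge n(D)+n(D') > \sizeconst{i-2}$, so $n(C) \ge \sizeconst{i-2}+1$ since sizes are integers.

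The main obstacle I anticipate is justifying the existence of a blocked level-$(i-1)$ edge \emph{between two distinct level-$(i-2)$ clusters} inside a non-isolated $C$. Invariant~\ref{inv:blocked} only guarantees that every vertex of $CG(C)$ is incident to \emph{some} blocked level-$(i-1)$ edge, and if those happened to all be self-loops the size bound would degrade to roughly $\sizeconst{i-3}$, which is too weak to close the argument. Lemma~\ref{lem:radius} is exactly the tool that removes this difficulty, since a center-to-other edge is between distinct vertices by construction. (If one wished to avoid Lemma~\ref{lem:radius}, one could instead combine Invariant~\ref{inv:blocked} with the connectedness of $CG(C)$ to pin down a blocked non-self-loop edge, but routing through Lemma~\ref{lem:radius} is cleaner.)
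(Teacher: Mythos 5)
Your proof is correct and takes essentially the same approach as the paper's: the key step in both is that a non-isolated level-$(i-1)$ cluster must contain a blocked level-$(i-1)$ edge between two distinct level-$(i-2)$ clusters and hence have size exceeding $\sizeconst{i-2}$, which is incompatible with the unblocked bound $n(C_1)+n(C_2)\le\sizeconst{i-1}$. The paper merely phrases it asymmetrically (assume $C_1$ non-isolated, conclude $n(C_2)<\sizeconst{i-2}$ so $C_2$ cannot host a blocked edge and must be isolated) and invokes Invariant~\ref{inv:blocked} directly where you route through Lemma~\ref{lem:radius} to rule out the self-loop technicality the paper leaves implicit.
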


\begin{restatable}{lemma}{pushuntilblocked}\label{lem:push_down_until_blocked}
    If an unblocked edge is pushed down until it is blocked then Invariant \ref{inv:blocked} is preserved.
\end{restatable}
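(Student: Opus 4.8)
The plan is to track Invariant~\ref{inv:blocked} through the sequence of single-level pushes that make up the operation, using two soft facts. First, a single push only \emph{merges} clusters, so no value $n(\cdot)$ ever decreases; hence a blocked edge stays blocked, and the blocked witness of any vertex survives a merge (its endpoints are merely re-homed at the merged cluster). Second, $e$ and every edge of a cluster graph parallel to $e$ have the same blocked status as $e$, namely unblocked, so none of them can ever be the blocked witness that Invariant~\ref{inv:blocked} requires of a vertex. The process also terminates: the endpoints of $e$ stay in distinct clusters as $e$ descends (after a push they are children of the two distinct clusters that were merged), and a level-$1$ edge between distinct vertices has $n(\cdot)+n(\cdot)=2>\sizeconst{0}$ and is always blocked, so $e$ is blocked by level $1$ at the latest.

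I would then induct down the levels, maintaining the loop invariant: \emph{every cluster graph of \cf{} satisfies Invariant~\ref{inv:blocked}, except the one cluster graph $CG(Q)$ currently containing $e$; and within $CG(Q)$ the only vertices that may fail the invariant are the (at most two) endpoints of $e$, which in addition are guaranteed to satisfy it whenever $e$ is blocked at its current level (as then $e$ witnesses them).} The base case is the initial state, where \cf{} is a blocked cluster forest, so the loop invariant holds with no exceptions. At termination $e$ is blocked, so the loop invariant gives that \emph{every} cluster graph, including the one holding $e$, satisfies Invariant~\ref{inv:blocked}, which is the claim.

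For the inductive step, let $e$ be an unblocked level-$(j+1)$ edge of $CG(Q)$ between distinct level-$j$ clusters $C_1,C_2$; by Lemma~\ref{lem:unblocked_isolated} one of them, say $C_2$, is isolated, with unique child $Y$. Pushing $e$ down merges $C_1,C_2$ into a level-$j$ cluster $C$ and makes $e$ a level-$j$ edge of $CG(C)$, which alters only the cluster graphs $CG(Q)$ and $CG(C)$ (ancestors of $Q$ keep their children and incident edges, and nothing below level $j$ changes). For $CG(C)$: since $C_2$ is isolated, $CG(C)$ equals $CG(C_1)$ together with the single extra vertex $Y$ and the edge $e$; $CG(C_1)$ is untouched by the process, hence satisfies Invariant~\ref{inv:blocked}, so every vertex inherited from it keeps a witness, while $Y$'s only non-self-loop incident edge is $e$ — so $Y$ satisfies the invariant exactly when $e$ is blocked at level $j$, which is precisely the loop invariant for the new state. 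For $CG(Q)$: the vertices other than $C_1,C_2$ keep their blocked witnesses by the first soft fact, and by the second fact those witnesses were never $e$ or a $C_1$--$C_2$ parallel edge; it remains to furnish the merged vertex $C$ with a blocked witness — unless $Q$ is left with a single child, in which case $Q$ is isolated and $CG(Q)$ is vacuously compliant. For this I would argue: whichever of $C_1,C_2$ comes from the non-isolated side of the previous push — or simply $C_1$ at the very first push, using that $CG(Q)$ then satisfied the full invariant and that every $C_1$--$C_2$ edge is unblocked — has, inside an untouched cluster graph, a blocked edge to a third cluster different from both $C_1$ and $C_2$; that third cluster is not merged, so the edge re-homes at $C$, stays blocked by the first soft fact, and witnesses $C$.

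The main obstacle is exactly this last point: re-establishing Invariant~\ref{inv:blocked} for the cluster graph that $e$ leaves, and in particular producing a blocked witness for the newly merged cluster. It is the one place the argument needs real structure rather than the two soft facts — one must locate where the non-isolated side's internal witness points and check it cannot run back across the merged pair (any such edge would be unblocked, like $e$), plus handle the small case where $Q$ collapses to a single child. Termination, the analysis of $CG(C)$, and the claim that a push touches no other cluster graph are all routine given the two soft facts.
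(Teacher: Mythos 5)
Your proposal is correct and takes essentially the same route as the paper's proof: an induction down the levels that invokes Lemma~\ref{lem:unblocked_isolated} to obtain one isolated endpoint cluster, and then the same case analysis---if $e$ is blocked it witnesses its endpoints, if both sides were isolated the parent becomes isolated (vacuous case), and otherwise the isolated side's lone child merges into the other endpoint, which already carries a blocked witness in its untouched cluster graph that cannot run across the merged pair since such an edge would be unblocked like $e$. You are simply more explicit than the paper about the loop invariant, termination, and the bookkeeping for the cluster graph the edge departs from; the only point the paper states that you leave implicit is the trivial phase in which $e$ is still a self-loop and pushes change no structure.
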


\myparagraph{Insert $e = (u,v)$}
We give a simple top-down algorithm for insertion. 
In the cluster forest we include a single global level \revision{$\lmax+1$} cluster whose children are the roots of all the components. 
Then for insertion, add the edge as a level $(\lmax+1)$ edge.
No level $(\lmax+1)$ edge can be blocked as the size constraint for level $\lmax$ clusters is $\geq n$. 
Next repeatedly push down $e$ until it is blocked. 
Lemma~\ref{lem:push_down_until_blocked} proves that this maintains the blocked invariant.

\myparagraph{Delete $e = (u,v)$}
Let the level of $e$ be $i = \level{e}$ prior to deletion.
Similar to before, let $C_u = \cluster{u, i-1}$ and $C_v = \cluster{u, i-1}$, the two level $(i-1)$ clusters containing $u$ and $v$. Let $P$ be the parent cluster of $C_u$ and $C_v$, and $GP$ be the grandparent. Let $G_i=CG(P)$ be the cluster graph of $P$.
Like before, if $C_u = C_v$ then the edge is a self-loop we can quit here since the connectivity is unaffected.
If the edge is unblocked it can be safely deleted because either it is a self loop or the clusters are connected by 1 or 2 blocked edges.
If $C_u \neq C_v$ and the edge is blocked, deletion of the edge may cause $C_u$ and $C_v$ to be in violation of the blocked invariant.

Consider a level $i$ edge.
Any unblocked level $i$ edges that we encounter can be pushed down (repeatedly until blocked).
Any blocked level $i$ edges must remain at this level. Since there can be multiple (parallel) edges between two level $(i-1)$ clusters that are all blocked, we need to be careful that we don't process too many blocked edges, since we can't push these down.

To restore the blocked invariant at this level (the lowest level that the deletion affects), we repeatedly fetch and push down a level $i$ edge incident to $C_u$ or $C_v$ in an alternating fashion until we have certified that both are incident to a blocked edge (or have no more incident edges).
Restoring the invariant at higher levels turns out to be more involved, and we describe this shortly.

Once we have restored the invariant at this level, we can certify the connectivity of $C_u$ and $C_v$ as follows. If both have no more incident edges, then they are connected if and only if $C_u=C_v$. 
If one has no more incident edges and the other is incident to a blocked edge, they must not be connected.
If both are incident to a blocked edge, they must be connected (otherwise the two blocked edges form a matching of size 2 over the blocked edges in $CG(P)$).

Just as in original CF algorithm, if the connectivity search failed at level $i$, the algorithm must continue to level $(i+1)$ and perform connectivity search over the level $(i+1)$ edges.
In this case, one of the components must have merged into a single level $(i-1)$ cluster. Let $W$ be this cluster. We (1) remove $W$ as a child of $P$ and decrement $n(P)$ by $n(W)$, (2) create a new level $i$ cluster $P'$, set the parent of $W$ to $P'$, and (3) add $P'$ as a child of $GP$.

\revision{When $W$ is removed as a child of $P$ and added as a child of a new cluster $P'$, $P$ is essentially being split into two clusters, $P_1$ and $P_2$, in the cluster graph of $GP$ (Figure~\ref{fig:center_split} shows an example of a cluster graph before and after a cluster is split).}
Then the size of $P$ is split between $P_1$ and $P_2$: $n(P) = n(P_1) + n(P_2)$.
Any edges previously incident to $P$ are now incident to either $P_1$ or $P_2$ (or possibly both if it was a self-loop on $P$).
Since the sizes of both $P_1$ and $P_2$ are less than that of $P$, these edges may now be unblocked, thus the cluster graph of $GP$ potentially violates the blocked invariant. 

\begin{figure}
\centering
\includegraphics[width=0.32\textwidth]{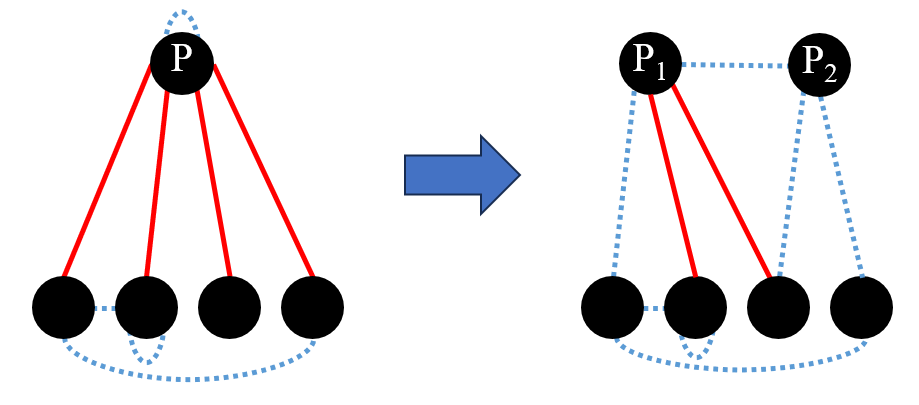}
\caption{\label{fig:center_split}
\small
    \revision{
    A cluster graph before and after the center cluster $P$ is split. 
    The edges incident to $P$ are now split between $P_1$ and $P_2$. To restore the blocked invariant after splitting the center, the algorithm carefully fetches edges out of each satellite cluster.}
}
\end{figure}

\myparagraph{Restoring the Blocked Invariant at Higher Levels}
Unlike restoring the blocked invariant at the lowest level that an edge was deleted, which is  straightforward, restoring the invariant at higher levels is much trickier since clusters can be split as illustrated in Figure~\ref{fig:center_split}.
If the number of clusters in $CG(c)$ is bounded by some constant, we can afford to restore the invariant by checking that every single cluster is incident to a blocked edge, pushing any found unblocked edges down.

Otherwise (if the number of clusters is $\omega(1)$), Lemmas~\ref{lem:radius} and \ref{lem:heavy_center} tell us that there is a well-defined center cluster that is connected to every other cluster by a blocked edge, and this cluster has the largest size in the cluster graph.
Thus we can naturally think of two cases: $P$ was the center cluster or $P$ was a satellite cluster.
We define a \defn{satellite cluster} as a cluster that is not the center, it was connected to the center via a blocked edge before the split of $P$.

If $P$ was a satellite cluster that split into fragments $P_1$ and $P_2$, then only those two clusters could violate the blocked invariant. We can restore the invariant by simply fetching and pushing down one \defn{outbound edge} incident to both fragments. 
We define an outbound edge as an edge incident to a cluster that is not a self-loop. 
If the outbound edge is blocked, then we are done. If it is unblocked, we push it down thus merging the fragment into another cluster that is incident to a blocked edge.

\revision{\myparagraph{Restoring the Blocked Invariant With Split Center}}
When $P$ is the center cluster, the blocked invariant may become violated for $P_1$ and $P_2$ themselves, as well as many or all satellite clusters.
\revision{The difficulty in restoring the blocked invariant in this case is that we have to restore the blocked invariant for potentially many satellites, but we cannot examine too many blocked edges since they cannot be pushed down to charge the work.}
The structure of this new cluster graph consists of the two centers, $P_1$ and $P_2$, and a set of satellite clusters connected by edges to $P_1$ and/or $P_2$. There may also be edges between $P_1$ and $P_2$ (previously self-loops on $P$), edges between satellites, and self-loops on any cluster.
Figure~\ref{fig:center_split} shows a possible cluster graph before and after the center splitting.
\revision{Here we loosely describe how to restore the blocked invariant after a center has been split. We leave a more careful description and analysis to the appendix due to space constraints. Our result is summarized by the following lemma:}

\begin{lemma}\label{lem:seq_restore_inv}
    When a cluster $X$ is split in a cluster graph $CG(c)$ that previously maintained Invariant~\ref{inv:blocked}, the invariant can be restored in $O((k+1) \log n)$ time where $k$ is the total number of edges pushed down by this process.
\end{lemma}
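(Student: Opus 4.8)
The plan is to analyze the restoration procedure case-by-case according to the structure established by Lemmas~\ref{lem:radius} and~\ref{lem:heavy_center}, and in each case bound the running time by charging most of the work to edges that get pushed down (one level decrease each) plus an additive $O(\log n)$ for the constant-size bookkeeping that does not push anything. Throughout, I would use the fact that every cluster-forest operation in the interface (\textbf{FetchEdge}, \textbf{PushDown}, \textbf{Merge}, etc.) costs $O(\log n)$ via the local trees representation, and that a single \textbf{PushDown} of an unblocked edge, repeated until blocked, preserves the invariant by Lemma~\ref{lem:push_down_until_blocked}.

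First I would handle the easy regimes. If $CG(c)$ has at most a constant number of clusters (say $k' \le 3$, covering the star-with-few-leaves and triangle cases), we simply examine each cluster, fetch one outbound edge, and push it down if unblocked; this is $O(\log n)$ work, absorbed into the additive term, and restores the invariant since after the (constant number of) pushdowns every remaining cluster is certified incident to a blocked edge. If $X$ is a \emph{satellite}, then only the two fragments $X_1, X_2$ can be newly violating — every other cluster still has its original blocked edge to the (unchanged) center — so we fetch one outbound edge from each of $X_1$ and $X_2$; each either certifies a blocked edge or is pushed down, merging the fragment into a cluster that does have a blocked edge. This is again $O(\log n)$ plus $O(1)$ pushdowns, i.e.\ $O((k+1)\log n)$.

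The main work is the case where the split cluster $X$ is the \emph{center} and $k'=\omega(1)$, so potentially every satellite is now violating. Here the key observation I would use is that we cannot afford to touch all satellites directly, but we do not need to: after pushing down edges, any cluster that ends up incident to a blocked edge to $X_1$ or to $X_2$ is fine, and by Lemma~\ref{lem:blocked_matching_size} at most one satellite can remain "far" from both new centers. Concretely, I would have the algorithm process $X_1$ and $X_2$ as the two new candidate centers: repeatedly fetch edges incident to $X_1$ (resp.\ $X_2$), pushing down the unblocked ones until an incident blocked edge is found or the edge list is exhausted, in alternating fashion so that the work is paid by level decreases (this mirrors the lowest-level restoration argument). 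Every edge we inspect and fail on is unblocked and hence pushed down, contributing to $k$; we stop each side once a blocked edge is found. The subtle point — and the step I expect to be the main obstacle — is arguing that restoring the invariant for $X_1$ and $X_2$ themselves, plus whatever merges the pushdowns cause, automatically restores it for \emph{all} satellites without separately inspecting each one: this needs the structural lemmas, specifically that a satellite without a blocked edge, together with a blocked edge among $\{X_1,X_2\}$ or between two other satellites, would create a blocked matching of size $\ge 2$, contradicting Lemma~\ref{lem:blocked_matching_size}; the remaining degenerate configurations (e.g.\ a satellite connected only by unblocked edges that were never fetched because they are incident to $X_1/X_2$ which already found a blocked edge elsewhere) have to be ruled out by a careful but finite case check, and it is there that one must be precise about \emph{which} edges are fetched and in what order. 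Once that structural claim is in place, the time bound follows: the number of distinct local-tree operations is $O(1)$ per pushed-down edge plus $O(1)$ for the two centers, each costing $O(\log n)$, giving $O((k+1)\log n)$ total, and Lemma~\ref{lem:push_down_until_blocked} guarantees the pushdowns keep the invariant intact at the level below.
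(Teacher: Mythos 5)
Your constant-size and satellite-split cases match the paper, but your treatment of the hard case (the split cluster is the center) has a genuine gap, and it is exactly the point you flag as "the main obstacle" and then defer to "a careful but finite case check." Your procedure fetches edges only out of the two fragments $X_1,X_2$ and stops each side once it certifies one incident blocked edge. After a center split, however, there may be arbitrarily many satellites whose only incident edges are now-unblocked edges to $X_1$ or $X_2$ (the center lost mass, so formerly blocked edges can become unblocked); each such satellite violates Invariant~\ref{inv:blocked}, and the only way to repair it is to actually fetch and push down one of its edges, merging it into a fragment. Your matching argument does not rescue this: a satellite with \emph{no} incident blocked edge contributes nothing to a blocked matching, so no contradiction with Lemma~\ref{lem:blocked_matching_size} arises --- the post-split graph can legitimately have $\Theta(\text{number of satellites})$ violating clusters, which is why the restoration is nontrivial in the first place. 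Moreover, even if you did extend your procedure to touch every satellite, you would face the charging problem the lemma is really about: an edge fetched from a satellite may be blocked (to a fragment) and hence cannot be pushed down, so naively you accumulate $\Theta(s\log n)$ uncharged work over $s$ satellites, exceeding the $O((k+1)\log n)$ budget.

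The paper's proof resolves precisely this tension with machinery your proposal does not have. It fetches one outbound edge per satellite but only \emph{simulates} the merges, maintaining running totals $n_1,n_2$ of each fragment's size plus its discovered neighbors (and ordered sets $S_1,S_2$), and it stops as soon as one fragment would have \emph{two} incident edges that could not be absorbed. If that never happens, all but at most two of the fetched satellite-to-fragment edges are pushable, so a constant fraction of the work is charged to pushdowns; if it does happen for, say, $P_1$, then Lemma~\ref{lem:blocked_matching_size} guarantees $P_2$ can merge with its entire neighborhood, so the algorithm fetches \emph{all} edges of $P_2$, splits the fetched edges into three sets ($A$: satellite outbound edges, $B$: $P_1$--$P_2$ edges, $C$: $P_2$-to-satellite edges), and pushes down the largest set, again guaranteeing a constant fraction is charged. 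It then finishes by repeatedly fetching from the \emph{smallest} remaining cluster and pushing down until a blocked edge appears, using Lemma~\ref{lem:smallest_blocked} (smallest cluster blocked $\Rightarrow$ all clusters blocked) to certify the invariant globally, and Lemma~\ref{lem:one_blocked} to control the single possible satellite--satellite blocked edge. None of these ingredients (simulated-merge bookkeeping, the two-blocked stopping rule, the largest-of-three-sets charging, and the smallest-cluster finishing lemma) appear in your argument, and without them the stated $O((k+1)\log n)$ bound and the correctness of the restoration do not follow.
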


\revision{
The high-level procedure is to sequentially fetch an outbound edge incident to each satellite cluster.
We maintain a running total of the sizes of $P_1$ and $P_2$ combined with the sizes of their adjacent satellites that have been discovered and sets of the adjacent satellites found. We update these total sizes when the fetched edge connects to one of the center fragments. If it connects to another satellite, the two satellites merge together (the edge may be blocked, but we proved in the appendix that this can only happen once).
This continues until either (1) an edge from every satellite has been found, or (2) $P_1$ or $P_2$ has found two neighboring satellites that could not merge with it due to the size constraint (e.g. the two edges to the center fragment would be blocked if all previously discovered satellites were merged with their center).
In the first case it is easy to restore the blocked invariant by pushing down all but at most two of the edges found to the center fragments.
In the second case, assume without loss of generality that this happened with $P_1$. Then it is certain that $P_2$ is able to merge with all of its neighboring satellites without violating the size constraint. Thus we can fetch all of the edges out of $P_2$ and merge it with all of its neighbors to restore the blocked invariant.
}

Once the invariant has been restored at a level, the connectivity between two clusters can be certified just as described before. Then either the deletion is complete, or we have to split the parent cluster and continue searching edges at the next level up.

\revision{
\myparagraph{Analysis}
Lemma~\ref{lem:deletion_correct} proves the correctness of our deletion algorithm in maintaining the size invariant and the blocked edge invariant, the formal proof of which is described in the appendix.
The overall correctness of the algorithm follows from this and the original cluster forest algorithm.
}

\begin{restatable}{lemma}{deletioncorrect}\label{lem:deletion_correct}
    Invariants~\ref{inv:size} and~\ref{inv:blocked} are preserved by edge deletion.
\end{restatable}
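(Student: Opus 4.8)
**Proof proposal for Lemma~\ref{lem:deletion_correct} (Invariants~\ref{inv:size} and~\ref{inv:blocked} are preserved by edge deletion).**

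The plan is to track the effect of a single deletion level-by-level, showing that each local operation the algorithm performs either leaves both invariants intact or is immediately repaired before moving up a level. First I would handle the trivial cases: if $e=(u,v)$ is a self-loop ($C_u = C_v$) the cluster graph $CG(P)$ loses only a self-loop edge, which affects neither the size of any component (no merges or splits happen) nor incidence to blocked edges, so both invariants are untouched. Likewise, if $e$ is unblocked, by Lemma~\ref{lem:unblocked_isolated} one of $C_u, C_v$ is isolated, and by Lemma~\ref{lem:diameter} $C_u$ and $C_v$ are connected in $CG(P)$ by a path of at most two blocked edges not using $e$; deleting $e$ removes no blocked-edge incidences and causes no split, so again nothing changes. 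This reduces everything to the case $C_u \neq C_v$ with $e$ blocked.

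Next I would argue the repair at the lowest affected level $i$. After deleting $e$, the only clusters of $CG(P)$ that can newly violate Invariant~\ref{inv:blocked} are $C_u$ and $C_v$ (every other cluster still has the blocked edge that certified it, and $e$'s deletion does not change sizes). The algorithm alternately fetches and pushes down incident level-$i$ edges of $C_u$ and $C_v$ until each is certified incident to a blocked edge or has no incident edge; here I invoke Lemma~\ref{lem:push_down_until_blocked} to see that each pushed edge, driven down until blocked, preserves Invariant~\ref{inv:blocked} at level $i-1$ (and lower). Pushing down edges can only merge level-$(i-1)$ clusters — and a merge of two level-$(i-1)$ clusters is only performed when their combined size is $\le 2^{i-1}$, which is exactly the unblocked condition — so the size invariant at level $i-1$ is maintained; no component at any level grows. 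Once both $C_u,C_v$ are certified incident to a blocked edge, by the maximum-matching bound (Lemma~\ref{lem:blocked_matching_size}) $M \le 1$, so those two blocked edges cannot be disjoint, which forces $C_u$ and $C_v$ to remain connected and no split occurs at level $i$; if instead one of them has run out of incident edges, that component has been merged into a single cluster $W$, and the three bookkeeping steps (remove $W$ from $P$, decrement $n(P)$ by $n(W)$, create $P'$ with child $W$ under $GP$) preserve the size invariant because $n(P_1)+n(P_2)=n(P)\le 2^i$ and each piece is a legal level-$i$ cluster, and the split changes only $CG(GP)$.

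Finally I would treat the inductive step: restoring Invariant~\ref{inv:blocked} in $CG(GP)$ after a cluster $X$ (one of $P_1,P_2$) is split. This is where I invoke Lemma~\ref{lem:seq_restore_inv} as a black box: it says the invariant can be restored by pushing down $k$ edges in $O((k+1)\log n)$ time. I would verify that this restoration again only pushes edges down (preserving the size invariant by the same merge-only-when-small argument and Lemma~\ref{lem:push_down_until_blocked}) and, when it in turn causes a cluster of $CG(GP)$ to merge into a single piece, triggers exactly the same split-and-promote bookkeeping one level higher, so the argument proceeds by induction on the level, terminating at the global level-$(\lmax{+}1)$ cluster where no edge is blocked and no violation can occur. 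The main obstacle is the inductive/center-split step: one must be careful that when a center cluster splits, the repair in $CG(GP)$ examines only $O(k+1)$ edges yet certifies every satellite — but this is precisely the content of Lemma~\ref{lem:seq_restore_inv}, which we are entitled to assume, so the remaining work is the routine check that every operation used (push-down, merge, split-and-promote) individually preserves both invariants and that the levels strictly increase so the recursion terminates.
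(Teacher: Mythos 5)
Your proposal is correct and follows essentially the same route as the paper's proof: the size invariant is preserved because deletion only splits clusters or merges clusters joined by unblocked edges, the blocked invariant is preserved by push-down-until-blocked (Lemma~\ref{lem:push_down_until_blocked}), the alternating certification handles the clusters incident to the deleted edge and their ancestors, and the restore-invariant procedure (Lemma~\ref{lem:seq_restore_inv}) handles satellites of a split center; your version merely spells this out as an explicit level-by-level induction. One small nit: for the unblocked-edge case, the connection of $C_u$ and $C_v$ by one or two blocked edges follows from Lemma~\ref{lem:radius} (the center/star structure), not from the diameter bound of Lemma~\ref{lem:diameter} alone.
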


Deletions may take $O(\log n)$ work per level of the cluster forest that is not charged to any edge being pushed down if a blocked edge is immediately encountered when certifying connectivity. This results in a total of $O(\log^2n)$ uncharged work. Combining this with Lemma~\ref{lem:seq_restore_inv} yields the following theorem:

\begin{theorem}
The amortized cost of insert and delete operations is $O(\log^2 n)$ using the blocked cluster-forest data structure.
\end{theorem}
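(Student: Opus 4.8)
The plan is a standard amortized (potential-function) analysis in which each edge carries a budget proportional to its level, so that every push-down of an edge releases enough potential to pay for the work it causes. Concretely, I would set $\Phi = c \log n \cdot \sum_{e \in E} \level{e}$ for a sufficiently large constant $c$, where levels lie in $[0,\lmax+1]$ and $\lmax = O(\log n)$; initially $\Phi = 0$ and $\Phi \ge 0$ always. The two facts I would record first are: (a) in the local-trees representation each of the cluster-forest primitives (\textbf{FetchEdge}, \textbf{Parent}, \textbf{AddChild}, \textbf{RemoveChild}, \textbf{Cluster}, \textbf{Merge}, \textbf{PushDown}) costs $O(\log n)$ worst-case time, so a single push-down of an edge together with any cluster merge it triggers costs $O(\log n)$; and (b) a push-down decreases $\level{e}$ by one and hence decreases $\Phi$ by $c\log n$, which for large enough $c$ dominates the $O(\log n)$ cost in (a). Thus every push-down is amortized free.

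For \textbf{insertion}, the edge $e$ enters at level $\lmax+1$ (no level-$(\lmax+1)$ edge can be blocked), raising $\Phi$ by $c\log n\,(\lmax+1) = O(\log^2 n)$, and the only immediate real work is inserting $e$ into the global cluster's structure, which is $O(\log n)$. The subsequent ``push down $e$ until blocked'' step is valid by Lemma~\ref{lem:push_down_until_blocked}, and is a chain of push-downs each paid for by the corresponding $c\log n$ drop in $\Phi$. Hence the amortized cost of an insertion is $O(\log^2 n)$.

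For \textbf{deletion}, removing $e$ only decreases $\Phi$ (by $c\log n\cdot\level{e}\ge 0$), so it is enough to bound the real work by $O(\log^2 n)$ plus the potential released by the push-downs it performs. The deletion touches levels $j$ with $\level{e}\le j\le\lmax+1$, i.e. $O(\log n)$ levels, and at each level the work splits into two kinds. (i) Fetching and pushing down unblocked edges — incident to $C_u$ or $C_v$ at the lowest level, and to the split fragments/satellites when repairing a split at a higher level, using Lemma~\ref{lem:unblocked_isolated} to guarantee a push-down is possible — together with the cluster merges these push-downs force; each such step is a push-down-and-merge costing $O(\log n)$ and is covered by the accompanying $c\log n$ drop in $\Phi$. (ii) A bounded per-level overhead: by the alternating fetch procedure, certifying connectivity inspects only $O(1)$ blocked edges (which cannot be pushed down) before concluding, at $O(\log n)$ cost, and by Lemma~\ref{lem:seq_restore_inv} repairing the at-most-one cluster split at level $j$ costs $O((k_j+1)\log n)$ where $k_j$ is the number of edges that repair pushes down — the $k_j\log n$ term falls under (i), leaving an $O(\log n)$ residue. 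Summing the $O(\log n)$ residues over $O(\log n)$ levels gives $O(\log^2 n)$ uncharged real work; correctness of this level-by-level procedure (that it genuinely preserves Invariants~\ref{inv:size} and~\ref{inv:blocked}) is Lemma~\ref{lem:deletion_correct}. Telescoping $\Phi$ over any operation sequence then yields amortized cost $O(\log^2 n)$ per insert and delete.

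The \textbf{main obstacle} is item (ii): establishing that only a constant number of blocked edges are examined per level and that the split-repair overhead collapses to $O(\log n)$ — this is exactly the ``$+1$'' in Lemma~\ref{lem:seq_restore_inv}, and it leans on the star/triangle characterization together with Lemmas~\ref{lem:radius} and~\ref{lem:heavy_center}, so that when the center splits one can fetch precisely one outbound edge per satellite and halt as soon as a center fragment becomes saturated, never merging more than once along a blocked edge. A secondary subtlety, inherited from the original CF amortization, is checking that the alternating search charges only to the \emph{smaller} side's edges, so that no push-down is double-counted while the blocked invariant is being restored.
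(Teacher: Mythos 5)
Your proposal is correct and follows essentially the same route as the paper: the paper charges each $O(\log n)$-cost push-down to the edge's level decrease (giving each inserted edge an $O(\log^2 n)$ budget), observes $O(\log n)$ uncharged work per level over $O(\log n)$ levels during deletion, and absorbs split repair via Lemma~\ref{lem:seq_restore_inv}, exactly as in your items (i) and (ii). Your potential function $\Phi = c\log n\sum_e \level{e}$ is merely an explicit formalization of that same charging scheme, so the two arguments coincide in substance.
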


\section{Parallelizing Individual Updates}\label{sec:parallelupdate}
Next, we step towards our goal of a parallel batch-dynamic algorithm by describing how to perform a single update in parallel. 
Many ideas and primitives developed in this setting are also useful in the batch-dynamic setting (Section~\ref{sec:batchpar}).
The depth of insertion is $O(\log^2 n)$ already, so we focus on deletion.

\subsection{Batch-Dynamic Local Trees}\label{sec:modlocaltree}
We use a modified version of the local trees used in ~\cite{wulff2013faster} that also supports parallel operations.
\revision{Additionally our parallel algorithm requires an operation to return a prefix of the smallest clusters sorted by size in the local tree.}
The interface for the modified local trees is defined as follows:
\begin{itemize}[topsep=0pt,itemsep=0pt,parsep=0pt,leftmargin=15pt]
    \item \textbf{BatchInsert}$(c_1,...,c_k)$ takes a list of $k$ new clusters and adds them to the local tree.
    \item \textbf{BatchDelete}$(c_1,...,c_k)$ takes a list of $k$ clusters in the local tree and removes them.
    \revision{
    \item \textbf{GetMaximalPrefix}$(s)$ returns a maximal prefix of the elements in the local tree sorted by size such that their total size is less than or equal to $s$.
    }
\end{itemize}

\revision{To support these operations efficiently we divide the elements in the tree into $O(\log n)$ size classes of geometrically increasing intervals, and store a separate weight-balanced tree for the clusters in each size class.
In the weight-balanced trees, elements are keyed by the size of the cluster they represent. This allows us to efficiently return a prefix of the elements. Since the sizes of all elements are within a constant fraction of each other, the scoping sum argument of~\cite{wulff2013faster} still applies.
Additionally, since weight-balanced trees are known to support efficient batch-parallel insertion and deletion~\cite{blelloch2022joinable}, our local trees can support this by first semi-sorting by size class and then separately and in parallel batch inserting and deleting elements into/from the weight-balanced tree for each size class.}
\revision{All operations} can be done in $O(\log n)$ depth and $O(k\log(1+n/k))$ work. More detail and analysis of this data structure are described in the appendix.

\subsection{Searching for Edges in Parallel}
To achieve low depth in batch updates we need an operation to fetch $k$ level $i$ edges incident to a given cluster in low depth:
\begin{itemize}[topsep=0pt,itemsep=0pt,parsep=0pt,leftmargin=15pt]
    \item \textbf{FetchEdges}$(k,C,i)$ returns $k$ level-$i$ edges incident to a cluster $C$, or all of the level-$i$ edges if there are less than $k$.
\end{itemize}

To implement this we traverse down the nested local tree structure one level at the time, maintaining a set of approximately $k$ clusters that have a level $i$ edge incident to them by following the set bits in the bitmaps. The algorithm is described in detail in the appendix. This yields the following lemma:

\begin{restatable}{lemma}{batchfetch}\label{lem:batch_fetch}
    Fetching $k$ level-$i$ edges incident to a given cluster (or all edges if $<k$ exist) in a cluster forest implemented with local trees can be done in $O(\log n)$ depth and $O(k'\log n)$ work where $k'$ is the number of edges returned.
\end{restatable}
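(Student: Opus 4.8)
The plan is to realize \textbf{FetchEdges}$(k,C,i)$ as a single truncated parallel breadth-first search over the \emph{nested} local-tree structure rooted at $L(C)$ --- the tree obtained by identifying each leaf cluster of a local tree with the root of its own local tree, recursing down until the level-$0$ vertices (which have no local tree) --- descending only into children whose $\log n$-bit bitmap has bit $i$ set, and stopping once we have isolated $\Theta(k)$ pairwise-disjoint subtrees that each provably contain a level-$i$ edge incident to $C$ (or have descended all the way to level-$0$).

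The key structural fact, which I would establish first, is that this nested tree has height $O(\log n(C)) = O(\log n)$; this is what rules out the naive $O(\log^2 n)$ depth one would get by spending a separate $O(\log n)$ descent for each of the $O(\log n)$ cluster levels. It follows from the rank-tree layout of the local trees of Section~\ref{sec:modlocaltree} (full details in the appendix): a child cluster $c$ lies at depth $O(\log(n(C')/n(c)) + 1)$ within $L(C')$, so along any root-to-leaf path $C = c_0 \supset c_1 \supset \dots \supset c_t$ ending at a level-$0$ vertex $c_t$, the logarithmic contributions telescope to $O(\log(n(C)/1))$ while the $O(1)$ slacks sum to $O(t) = O(\lmax)$, for $O(\log n)$ total.

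With this in hand the search is routine. Maintain a set $S$ of pairwise-disjoint nodes of the nested tree, each with bit $i$ set, initialized to the singleton containing the root of $L(C)$ (returning the empty set if that bit is unset). Repeatedly, in parallel, replace each non-level-$0$ member of $S$ by those of its $O(1)$ children --- or the root of its local tree, if the member is a leaf cluster --- that have bit $i$ set; checking a single-word bitmap is $O(1)$, so one round costs $O(1)$ depth and $O(|S|)$ work. Stop at the first round in which $|S| \ge 2k$ or every member of $S$ is a level-$0$ vertex. In the first case the members of $S$ have disjoint subtrees each containing a level-$i$ edge incident to $C$, so there are at least $k$ such edges; descend in parallel from each member along bit-$i$ children to a level-$0$ vertex, read off one incident level-$i$ edge, deduplicate (a self-loop can be reached from both its endpoints), and output $k$ of them. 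In the second case $S$ is exactly the set of level-$0$ vertices of $C$ carrying a level-$i$ edge, so $|S| < 2k$ and at most $O(k)$ level-$i$ edges are incident to $C$; from the stored adjacency-list sizes a prefix sum picks how many level-$i$ edges to take from each vertex (at most $k$ in total), and those edges are extracted and compacted.

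For the bounds, the height bound caps the number of rounds at $O(\log n)$, each of $O(1)$ depth, and the final parallel descents and compaction add $O(\log n)$ more, so the depth is $O(\log n)$. For work, the search halts at the first round with $|S| \ge 2k$, so every earlier $S$ has size $< 2k$ and the BFS touches $O(k \log n)$ nodes; moreover every touched node is an ancestor of a level-$0$ vertex carrying a level-$i$ edge, and at each round these ancestors have disjoint leaf sets, so if only $k' < k$ such edges exist then every $S$ has size $O(k')$ and the count is $O(k' \log n)$; the $O(k')$ parallel descents, the deduplication sort, and the adjacency-list extraction each add $O(k' \log n)$. The main obstacle is precisely the height bound of the nested local tree: everything downstream is a standard parallel tree-BFS with frontier truncation, but the $O(\log n)$ height requires that the per-level depth slacks of the successive local trees telescope rather than accumulate, which has to be read off carefully from the rank-tree construction.
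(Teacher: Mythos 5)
Your high-level plan is the same as the paper's (a top-down frontier search through the nested local-tree structure following the level-$i$ bitmap bits, truncated once the frontier reaches $\Theta(k)$), and your telescoping argument for the $O(\log n)$ height of the nested structure is correct and is indeed needed (the paper proves it as a separate property of the modified local trees). But there is a genuine gap at the step you dismiss as routine: you assert that one round of the search costs $O(1)$ depth, including the stopping test ``stop at the first round in which $|S|\ge 2k$.'' Deciding whether the frontier has size at least $2k$ (and, for that matter, compacting the children of the current frontier into a new frontier that the next round can be dispatched over) is a global aggregation over up to $\Theta(k)$ nodes; it cannot be done in $O(1)$ depth. In the fork-join/work-span model of the paper a reduction costs $\Theta(\log k)$ depth, and even with CRCW-style approximate counting it costs $\Omega(\log\log k)$ (deterministic) or $O(\log^* k)$ (randomized, w.h.p.) per round. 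Paying this on each of the $O(\log n)$ levels gives depth $\omega(\log n)$ --- e.g.\ $O(\log n\log\log n)$ or $O(\log^2 n)$ --- so the claimed $O(\log n)$ depth does not follow from your argument as written. This frontier-counting issue, not the height bound, is the actual crux of the lemma.

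The paper's proof is built around exactly this obstacle: it uses an \emph{approximate} parallel prefix sum (a $2$-approximation, so that ``approximate sum $\ge 2k$'' certifies true size $\ge k$), and performs the count only once every $\log\log n$ levels, taking all bit-set children unconditionally in between. This keeps the depth at $O(\log n + (\log n/\log\log n)\cdot\log\log n) = O(\log n)$, at the price that the frontier may overshoot $k$ by a factor of up to $2^{\log\log n} = \log n$ within the group of levels where the threshold is crossed. The work bound then needs a separate argument that you would also be missing: before that group the frontier is always below $k$ (so $O(k\log n)$ work), the overshooting group itself does $O(k\log n)$ work and happens only once, and afterwards only $k$ of the surviving paths are followed down to leaves, so the total stays $O(k\log n)$ rather than $\Omega(k\log^2 n)$. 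To repair your proof you would need to import both of these ingredients (periodic approximate counting and the truncation-to-$k$-paths accounting), at which point it coincides with the paper's argument.
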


Now we describe two useful routines to fetch edges in parallel which will be used in the rest of this section:
\begin{enumerate}
    \item Fetch all edges incident to a cluster
    \item Fetch a single outbound edge incident to a cluster
\end{enumerate}
Note that fetching a single outbound edge is not trivial because it is possible to repeatedly find self-loops when fetching an edge incident to a cluster.
The goal is to implement both of these operations in $O(\log^2 n)$ depth and work within a constant factor of the work of their sequential implementations (e.g. $O(k\log n)$ where $k$ is the total number of edges fetched).

To accomplish this we perform a doubling search where we fetch edges in rounds, doubling the number of edges fetched in each round. 
In each round we use the parallel \textbf{FetchEdges} operation which fetches $k$ edges incident to a cluster in $O(\log n)$ depth and $O(k \log n)$ work. 
This process will take $O(\log n)$ rounds of doubling since there are at most $O(n^2)$ edges.
After each round we will check if the search is complete. 
For the first operation, we know it is done if \textbf{FetchEdges} returns $<k$ edges. 
For the second operation, we check all of the edges returned in parallel to see if they are a self-loop.
This process takes $O(\log n)$ depth and $O(\log n)$ work per edge.
For both operations we have fetched at most twice as many edges as the sequential implementation, and thus performed work within a constant factor of the sequential work. 
Each of the $O(\log n)$ rounds of doubling search takes $O(\log n)$ depth, giving a total depth of $O(\log^2 n)$. 
This yields the following Lemma:

\begin{lemma} \label{lem:doubling_fetch}
    Fetching all of the edges incident to a cluster or fetching an outbound edge incident to a cluster can be done in $O(\log^2 n)$ depth and $O(k \log n)$ work where $k$ is the total number of edges fetched.
\end{lemma}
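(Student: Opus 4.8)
The plan is to analyze the doubling-search wrapper around the parallel \textbf{FetchEdges} primitive of Lemma~\ref{lem:batch_fetch}, handling the two operations (fetch all incident edges; fetch one outbound edge) in a unified way. First I would set up the round structure: in round $j = 0, 1, 2, \ldots$ we call \textbf{FetchEdges}$(2^j, C, i)$, which by Lemma~\ref{lem:batch_fetch} returns $\min(2^j, \deg_i(C))$ edges in $O(\log n)$ depth and $O(2^j \log n)$ work. After each round we run a termination test. For the ``all incident edges'' operation, the test is simply whether fewer than $2^j$ edges were returned (if so, we have them all); for the ``outbound edge'' operation, we check all returned edges in parallel for self-loop status, which is $O(\log n)$ depth (a parallel reduction) and $O(2^j)$ work, and we succeed iff at least one non-self-loop is found. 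Since the cluster graph has at most $O(n^2)$ edges, after $j = O(\log n)$ rounds \textbf{FetchEdges} has exhausted all incident edges, so in the worst case the loop runs $O(\log n)$ rounds and then necessarily terminates (for the outbound case, if no outbound edge exists after all incident edges are fetched, we correctly report failure).

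Next I would bound the cost. The depth is the number of rounds times the per-round depth: $O(\log n)$ rounds, each $O(\log n)$ depth for \textbf{FetchEdges} plus $O(\log n)$ depth for the termination test, giving $O(\log^2 n)$ total depth. For work, let $k$ be the number of edges ultimately returned (for the ``all edges'' operation, $k$ is the true incident count; for ``outbound edge'', $k$ is the number fetched by the final successful or exhausting round). The loop stops at the first round $j^\star$ where either $2^{j^\star} \ge k$ or the primitive has returned $< 2^{j^\star}$ edges; in either case the number of edges fetched over all rounds is $\sum_{j=0}^{j^\star} \min(2^j, \deg_i(C)) \le \sum_{j=0}^{j^\star} 2^j < 2^{j^\star+1} \le 4k$, using that the penultimate round fetched at least $k/2$ edges (or we would have stopped earlier) — i.e., we overshoot the sequential fetch count by at most a constant factor. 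By Lemma~\ref{lem:batch_fetch} the work of round $j$ is $O((\text{edges returned in round } j)\log n)$, so summing the geometric series the total work is $O(k \log n)$, matching the stated bound; the termination tests contribute only $O(\sum_j 2^j) = O(k)$ additional work, which is absorbed.

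The main subtlety — and the step I would be most careful about — is the correctness and work bound for the ``outbound edge'' operation, since the concern flagged in the text is that one can repeatedly fetch self-loops. The key observation to nail down is that \textbf{FetchEdges}$(2^j, C, i)$ returns a set of $2^j$ \emph{distinct} edges (a prefix of the incident edges in some fixed traversal order), not $2^j$ independent samples, so once a round fetches all $\deg_i(C)$ incident edges we have definitively seen every outbound edge, making the failure report correct; and the doubling guarantees that if an outbound edge exists at position $p$ in the traversal order then we find it by round $\lceil \log p \rceil$, so we fetch $O(p) = O(k)$ edges total where $k$ is at most the number of incident edges preceding and including the first outbound one. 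This distinctness/prefix property is exactly what Lemma~\ref{lem:batch_fetch} provides, so the argument goes through; I would state this dependence explicitly rather than treating \textbf{FetchEdges} as a black box that might return repeats. With these pieces in place the lemma follows.
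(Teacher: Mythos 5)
Your proposal is correct and follows essentially the same route as the paper: a doubling search over the parallel \textbf{FetchEdges} primitive with a per-round termination test ($<k$ edges returned for the all-edges case, a parallel self-loop check for the outbound case), $O(\log n)$ rounds of $O(\log n)$ depth each, and a geometric-series argument showing the total number of edges fetched is within a constant factor of the sequential count, giving $O(k\log n)$ work. Your added remark about the distinctness of the edges returned by a single \textbf{FetchEdges} call is a reasonable clarification of a point the paper leaves implicit, but it does not change the argument.
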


\subsection{Pushing Down Edges in Parallel}
\revision{
\myparagraph{Updating Bitmaps in Parallel}
First we describe the simpler problem of updating the bitmaps in the local trees for a batch of edges that are pushed down from the same level.
Updating the bitmaps is equivalent to updating a batch of $k$ augmented values in the nested local trees structure and can be implemented using an atomic compare-and-swap (CAS) operation on each bit along the leaf-to-root path that needs to be updated, stopping if the CAS fails. In total the batch bitmap update takes $O(\log n)$ depth and ${O(k\log(1+n/k))}$ work.
}

\revision{\myparagraph{Pushing Down Groups of Edges in Parallel}}
Here we describe how to handle pushing edges down in parallel. We provide the following two routines:
\begin{itemize}[topsep=0pt,itemsep=0pt,parsep=0pt,leftmargin=15pt]
    \item \textbf{PushDownGroup}$(E)$ pushes down a set $E$ of $k$ level $i$ edges, where the edges in $E$ are all incident to a common level $(i-1)$ cluster and the total size in level $(i-1)$ clusters containing their endpoints is $\leq \sizeconst{i-1}$ (e.g. all of $E$ can be pushed down).
    \item \textbf{BatchPushDown}$(E)$ given a set $E$ of level $i$ edges, this pushes down as many edges of $E$ as possible until the only remaining ones are blocked. Formally, this operation ensures that every level $(i-1)$ cluster containing an endpoint of any edge in $E$ is incident to a blocked edge or its parent is isolated.
\end{itemize}

Pushing down a single level $i$ edge $e$ to level $(i-1)$ requires updating the level $i$ and $(i-1)$ bitmaps, and possibly combining the local trees of the two level $(i-1)$ clusters containing the endpoints of $e$. 
In the sequential setting, our method to maintain the blocked invariant and the size invariant after a single edge push was to push down an edge as far as possible until it became blocked. 

When pushing down multiple edges in parallel there are a few challenges that arise. There may be situations where pushing down one edge causes a different previously unblocked edge to become blocked. Also, combining several local trees in parallel is non-trivial.

\myparagraph{Our Approach: Reducing to \textbf{PushDownGroup}}
Consider pushing down a batch of $k$ edges incident to a common cluster.
\revision{Updating the bitmaps for this batch of edges can be done efficiently as described at the beginning of this section.}
Lemma~\ref{lem:push_down_group} proves that when pushing down a group of unblocked edges incident to a common cluster, at most one of the clusters can be non-isolated (i.e. the cluster has multiple children).

\begin{lemma} \label{lem:push_down_group}
    For any set of $k$ level $i$ clusters in the same cluster graph whose combined size is $\leq \sizeconst{i}$, at most one of the clusters can have multiple children (it is not isolated).
\end{lemma}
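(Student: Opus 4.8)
\textbf{Proof proposal for Lemma~\ref{lem:push_down_group}.}

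The plan is to argue by contradiction using the size invariant (Invariant~\ref{inv:size}) together with the definition of cluster sizes as leaf counts. Suppose for contradiction that among the $k$ level-$i$ clusters $C_1, \dots, C_k$ lying in a common cluster graph $CG(c)$ (so their parent is the level-$(i+1)$ cluster $c$, and they are connected via level-$(i+1)$ edges\,--- wait, I should be careful: the statement says ``level $i$ clusters in the same cluster graph'', so these are level-$i$ clusters appearing as vertices of $CG(c)$ for some level-$(i+1)$ cluster $c$, and the edges being pushed down are level-$(i+1)$ edges going to level $i$). Suppose at least two of them, say $C_a$ and $C_b$, are non-isolated, i.e., each has at least two children. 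First I would recall that a non-isolated level-$i$ cluster is, by the path-compression convention, explicitly represented precisely because it is incident to a level-$i$ edge with both endpoints inside it; in particular its cluster graph $CG(C_a)$ has at least two level-$(i-1)$ children, and those children were merged using level-$i$ edges, so by the size invariant applied at level $i$, the cluster $C_a$ has $n(C_a) > \sizeconst{i-1}$ (each of its $\ge 2$ children has size $\ge 1$, but more to the point, a single child would not have required merging). Actually the cleanest route: a non-isolated level-$i$ cluster has $\ge 2$ level-$(i-1)$ children, and each level-$(i-1)$ child $D$ satisfies $n(D) \ge 1$; I need the sharper bound that $n(C_a) > \sizeconst{i-1}$, which follows because the level-$i$ edge merging two level-$(i-1)$ clusters inside $C_a$ can exist only if it was blocked at level $i$ (by the blocked invariant / the push-down-until-blocked discipline the algorithm maintains), and a blocked level-$i$ edge between level-$(i-1)$ clusters $X,Y$ means $n(X)+n(Y) > \sizeconst{i-1}$, hence $n(C_a) \ge n(X)+n(Y) > \sizeconst{i-1}$, and likewise $n(C_b) > \sizeconst{i-1}$.

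The key step is then the arithmetic: $n(C_a) + n(C_b) > \sizeconst{i-1} + \sizeconst{i-1} = \sizeconst{i}$. Since $C_a$ and $C_b$ are distinct vertices of $CG(c)$, their leaf sets are disjoint, so the combined size of the $k$ clusters is at least $n(C_a) + n(C_b) > \sizeconst{i}$, contradicting the hypothesis that the combined size is $\le \sizeconst{i}$. This gives the lemma.

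The main obstacle I anticipate is nailing down exactly why a non-isolated level-$i$ cluster must have size strictly greater than $\sizeconst{i-1}$ rather than merely $\ge 2$. This needs the invariant that level-$i$ edges inside the cluster are blocked (equivalently: unblocked edges get pushed down until blocked, so any surviving internal level-$i$ edge is blocked), which is exactly the discipline established by Lemmas~\ref{lem:unblocked_isolated} and~\ref{lem:push_down_until_blocked} and maintained throughout. I would state this as a one-line observation up front: if level-$i$ cluster $C$ is non-isolated, it contains (at least) two level-$(i-1)$ children joined by a blocked level-$i$ edge $(x,y)$, so $n(C) \ge n(\cluster{x,i-1}) + n(\cluster{y,i-1}) > \sizeconst{i-1}$ by Definition~\ref{def:blocked_edge}. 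Everything else is the disjointness of leaf sets and a single inequality. One edge case worth a sentence: if $k < 2$ the statement is vacuous, and if some $C_j$ is isolated we simply ignore it, since we only need two non-isolated ones to derive the contradiction.
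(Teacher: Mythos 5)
Your proposal is correct and takes essentially the same route as the paper's proof: two non-isolated level-$i$ clusters would each contain a blocked level-$i$ edge between distinct level-$(i-1)$ children, forcing each to have size $> \sizeconst{i-1}$, and since their leaf sets are disjoint the combined size exceeds $\sizeconst{i}$, contradicting the hypothesis. The only difference is presentational: the paper asserts the two disjoint blocked edges directly, while you spend extra (and somewhat muddled) words on why the internal edges are blocked before arriving at the clean justification via the blocked-edge invariant, which is exactly the paper's implicit step.
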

\begin{proof}
    Assume that two or more of the clusters are not isolated. Then there are at least two disjoint blocked edges between level ${(i-1)}$ clusters. Each blocked edge must have $>\sizeconst{i-1}$ size. Having at least two of these, the total size is $>\sizeconst{i}$ which is a contradiction.
\end{proof}

This means that all but one of the cluster graphs will consist of a single cluster. 
Given this observation, we avoid the challenge of merging several level $i$ local trees in parallel, and simply need to \revision{delete the isolated clusters from the level $i$ cluster graph and insert them into the cluster graph of the level $(i-1)$ cluster that is not isolated, both of which can be done batch-parallel using our batch-dynamic local trees. We include the full description and analysis} for \textbf{PushDownGroup} in the appendix.
\begin{lemma} \label{common_push_down}
    Pushing down a batch of $k$ level $i$ edges incident to a common level $(i-1)$ cluster can be done in $O(\log n)$ depth and $O(k\log(1+n/k))$ work.
\end{lemma}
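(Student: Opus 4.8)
The plan is to reduce the entire operation to a constant number of batch-dynamic local-tree operations plus one batch bitmap update, using Lemma~\ref{lem:push_down_group} to guarantee that at most one non-trivial local tree is ever touched. Let $P$ be the level-$i$ cluster whose cluster graph contains the $k$ edges, let $C$ be the common level-$(i-1)$ cluster they are incident to, and recall the precondition that the total size of the level-$(i-1)$ clusters containing all of their endpoints is at most $\sizeconst{i-1}$. Split the $k$ edges into the self-loops on $C$ and the remaining edges, whose second endpoints lie in level-$(i-1)$ clusters $C_1,\dots,C_m$ with $m \le k$.

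The self-loops are the easy part: pushing a level-$i$ self-loop on $C$ down turns it into a level-$(i-1)$ edge inside $CG(C)$ between two already-present level-$(i-2)$ children of $C$ (or a self-loop on one of them), so the node set of $CG(C)$ is unchanged and the only work is to clear the level-$i$ bit and set the level-$(i-1)$ bit on the relevant leaf-to-root paths, which is exactly a batch bitmap update and costs $O(\log n)$ depth and $O(k\log(1+n/k))$ work. For the non-self-loop edges, pushing them all down forces $C, C_1, \dots, C_m$ to merge into a single level-$(i-1)$ cluster $C'$ (they become connected by edges of level $\le i-1$, and the size precondition keeps Invariant~\ref{inv:size} intact). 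Applying Lemma~\ref{lem:push_down_group} to the set $\{C, C_1, \dots, C_m\}$ inside $CG(P)$, at most one of them is non-isolated; call it $C^*$ (take $C^* = C$ if all are isolated). Consequently the children of $C'$ are the children of $C^*$ together with the single level-$(i-2)$ child of each other (isolated) cluster in the set. The structural update is therefore: (i) \textbf{BatchDelete} the $\le k$ clusters $\{C, C_1, \dots, C_m\} \setminus \{C^*\}$ from the local tree of $P$ and re-insert $C^*$ with its new size $n(C') = \sum_{X} n(X)$; (ii) \textbf{BatchInsert} the $\le k$ single children of the isolated clusters into the local tree of $C^*$; and (iii) perform the batch bitmap update for the pushed edges. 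Determining $C^*$ and $n(C')$ is an $O(k)$-work, $O(\log n)$-depth reduction over the $m+1 \le k+1$ clusters, and each of (i)--(iii) runs in $O(\log n)$ depth and $O(k\log(1+n/k))$ work by the batch-dynamic local-tree bounds of Section~\ref{sec:modlocaltree} and the batch bitmap bound established earlier in this section; a constant number of such steps in sequence yields the claimed $O(\log n)$ depth and $O(k\log(1+n/k))$ work.

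The step I expect to carry the real content is the appeal to Lemma~\ref{lem:push_down_group}: a priori, pushing $k$ edges out of a single cluster could force merging $m+1$ distinct level-$(i-1)$ local trees, and merging that many trees work-efficiently and in low depth is not obvious. The lemma collapses the situation so that everything but one local tree contributes exactly one leaf, which is precisely what \textbf{BatchInsert} supports. The remaining fiddliness is bookkeeping around path compression (some $C_j$ are represented implicitly by their unique child rather than stored as an explicit node, and $C'$ may need a freshly materialized local tree) and ensuring the bitmap changes and the structural changes to the two local trees are applied consistently; these are routine given the primitives and are spelled out in the appendix.
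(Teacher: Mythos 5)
Your proposal is correct and follows essentially the same route as the paper: invoke Lemma~\ref{lem:push_down_group} to conclude that at most one of the affected level-$(i-1)$ clusters is non-isolated, then realize the merge with one \textbf{BatchInsert} of the isolated clusters' single children into that cluster's local tree, one \textbf{BatchDelete} from the parent's local tree (with the size update), and a batch bitmap update, each costing $O(\log n)$ depth and $O(k\log(1+n/k))$ work. Your explicit handling of self-loops and the path-compression bookkeeping are minor elaborations of details the paper leaves implicit.
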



Given this routine, we can reduce the problem of pushing down an arbitrary batch of edges to multiple calls of \textbf{PushDownGroup}.
The general strategy is to take a spanning forest of the edges and decompose it into disjoint stars. Then for each star we merge into the center a maximal prefix of the clusters sorted by size that can be merged into the center without violating the size constraint. We give the algorithmic details of \textbf{PushDownBatch} in the appendix, which proves the following lemma:

\begin{lemma} \label{lem:batch_push_down}
    Given a set $E$ of $k$ level $i$ edges, enforcing that every level $(i-1)$ cluster containing an endpoint of any edge in $E$ is incident to a blocked edge or its parent is isolated, can be done in $O(\log^2n)$ depth and $O(k \log n)$ work.
\end{lemma}

\subsection{Parallelizing Deletion} \label{sec:pardel}
Here we describe how to parallelize a single update in the blocked cluster forest using the results of the previous subsections. We focus on deletion since insertion already takes $O(\log^2n)$ depth.

\myparagraph{Pushing Down Edges}
During a deletion, the algorithm sweeps up the levels of the cluster forest until a replacement edge is found or the top is reached.
At each level some edges may be pushed down. In the sequential case, these edges were immediately pushed down as far as possible.
To parallelize deletion, we will collect all of the edges that are pushed down at each level during this upward sweep, and handle them later in a downward sweep to restore the blocked invariant. Let $E_\ell$ be the set of edges pushed down during the upward sweep of deletion at a level $\ell$.

Once the upward sweep has finished at level $\ell_u$, we start a downward sweep, starting at level $\ell_d = (\ell_u-1)$. At each level $\ell_d$ in the downward sweep, we call \textbf{BatchPushDown}$(E_{\ell_d})$. Every edge that is pushed down by this call is added to the set $E_{(\ell_d-1)}$.

\myparagraph{Restoring the Blocked Invariant}
When a cluster is split at higher levels during deletion, the algorithm must restore the blocked invariant in that cluster graph before continuing.
As in the sequential setting, if the split cluster was a satellite, we just need to find and/or push down a single outbound edge incident to both fragment clusters. 
Lemma~\ref{lem:doubling_fetch} proves that we can find such an edge in $O(\log^2 n)$ depth and $O(k \log n)$ work which can be charged to the self-loops that are found and pushed down.
If the split cluster was the center, restoring the blocked invariant is much more complex. Due to space constraints we leave the description of this to the appendix. Our result is summarized by the following lemma:

\begin{lemma}\label{lem:par_restore_inv}
    When a cluster $X$ is split in a cluster graph $CG(c)$ that previously maintained Invariant~\ref{inv:blocked}, the invariant can be restored in $O(\log^2 n)$ depth and $O((k+1) \log n)$ work where $k$ is the total number of edges pushed down by this process.
\end{lemma}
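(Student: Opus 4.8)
The plan is to parallelize the sequential restoration procedure behind Lemma~\ref{lem:seq_restore_inv}, preserving its $O((k+1)\log n)$ total work while reducing the span to $O(\log^2 n)$. As in the sequential case we first branch on whether the split cluster $X$ was a satellite of $CG(c)$ or its center (identified via Lemmas~\ref{lem:radius} and~\ref{lem:heavy_center}); when $CG(c)$ has only $O(1)$ nodes we simply inspect all of them in parallel and push down any unblocked edges found. If $X$ was a satellite, only its two fragments can newly violate Invariant~\ref{inv:blocked}, so for each fragment we fetch a single outbound edge with the doubling fetch of Lemma~\ref{lem:doubling_fetch} (charging the self-loops encountered along the way to the $k$ edges they get pushed down into), and if the fetched edge is unblocked we push it down with \textbf{PushDownGroup}; by Lemma~\ref{common_push_down} this merges the fragment into a cluster already incident to a blocked edge. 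This branch already runs in $O(\log^2 n)$ span and $O((k+1)\log n)$ work, so the remainder of the plan addresses a split center with $\omega(1)$ nodes.

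For the split-center case, the sequential algorithm scans the satellites one at a time, fetching an outbound edge per satellite, maintaining running totals of the sizes absorbed by the two fragments $P_1$ and $P_2$, and halting either (1) once every satellite has been reached, or (2) once one fragment, say $P_1$, has two neighbouring satellites it provably cannot absorb within the size bound. I would replace this scan by a single doubling search over the edges incident to $P_1$ and $P_2$: in round $j$, call \textbf{FetchEdges}$(2^j, P_1, \cdot)$ and \textbf{FetchEdges}$(2^j, P_2, \cdot)$ (Lemma~\ref{lem:batch_fetch}), obtaining up to $2^j$ outbound edges incident to each fragment in $O(\log n)$ span and $O(2^j\log n)$ work, and queueing any self-loops returned for a later \textbf{BatchPushDown}. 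Since every satellite was blocked-adjacent to the center before the split, it retains at least one edge to $P_1$ or $P_2$, so this search discovers all satellites. From the discovered neighbourhoods I would recompute, in $O(\log n)$ span, a maximal size-sorted prefix of each fragment's discovered neighbours that fits within $\sizeconst{i-1}$ using \textbf{GetMaximalPrefix}, and use these prefixes to test stopping conditions (1) and (2) in parallel; I also detect and contract first the at-most-one blocked satellite--satellite edge justified in the appendix. If neither condition holds, double $j$ and repeat; since $|E(CG(c))| = O(n^2)$ there are only $O(\log n)$ rounds.

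Once a stopping condition fires, the invariant is restored by one batch push-down. In case (1) I keep at most two of the fetched center edges and feed the rest to \textbf{BatchPushDown}, absorbing every remaining satellite into $P_1$ or $P_2$; in case (2) I use Lemma~\ref{lem:doubling_fetch} to fetch \emph{all} edges incident to the fragment that can still absorb its satellites (say $P_2$) --- all of which turn out to be unblocked and hence eligible --- and \textbf{BatchPushDown} them, merging $P_2$ with all of its neighbours. By Lemmas~\ref{lem:batch_push_down} and~\ref{lem:doubling_fetch} this last step costs $O(\log^2 n)$ span and $O(k\log n)$ work, and because at most one of the clusters in each group merge is non-isolated (Lemma~\ref{lem:push_down_group}) no parallel local-tree merge is ever needed. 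Summing costs: each of the $O(\log n)$ doubling rounds has $O(\log n)$ span and the final restoration is $O(\log^2 n)$ span, for $O(\log^2 n)$ span overall; the work of round $j$ is $O(2^j\log n)$, which is geometric and hence dominated by the last round $O(2^J\log n)$, where $2^J$ is within a factor of two of the number of edges consumed. That number is $O(k+1)$, since every consumed edge is a self-loop pushed down, a center--satellite edge pushed down (merging its satellite, so counted in $k$), one of the $O(1)$ retained center edges, or the $O(1)$ blocked satellite--satellite edge --- the same charging argument that underlies Lemma~\ref{lem:seq_restore_inv} --- giving $O((k+1)\log n)$ work. Correctness of the resulting cluster graph follows from the sequential analysis, as the parallel procedure performs the same set of push-downs and merges.

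The hard part will be faithfully converting the sequential ``running total plus early stop'' logic into a parallel test without loosening the work bound: one must show that the doubling search consumes only $O(k+1)$ edges (so that overshooting in the last doubling round is harmless, even in the presence of many parallel $P_1$--$P_2$ edges), that the two size-bounded prefixes selected for the competing centers $P_1$ and $P_2$ are mutually consistent (no satellite claimed by both, and the combined assignment still respects $\sizeconst{i-1}$), and that the single blocked satellite--satellite contraction interacts correctly with this prefix selection. Establishing these facts is where the real effort lies; the parallel primitives \textbf{FetchEdges}, \textbf{GetMaximalPrefix}, and \textbf{BatchPushDown} then plug in as black boxes.
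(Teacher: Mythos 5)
Your satellite-split branch matches the paper, but your center-split branch departs from it in a way that breaks the work bound. The paper parallelizes the sequential scan by keeping its direction: it considers \emph{satellites} in geometrically growing groups and has each satellite fetch a single outbound edge (interleaving the satellite-doubling with the per-satellite edge-doubling, and tracking discovered edges per fragment in a concurrent set); the charging then works because, outside the last round, all fetched edges except a constant number (at most one blocked edge per fragment and at most one blocked satellite--satellite edge, Lemma~\ref{lem:one_blocked}) are pushed down, which is exactly Lemma~\ref{lem:par_sat_work}. Your replacement---doubling \textbf{FetchEdges} on $P_1$ and $P_2$ themselves---loses this charging: \textbf{FetchEdges} gives no distinct-neighbour guarantee, so a fragment can return arbitrarily many parallel \emph{blocked} edges (former self-loops of $P$ now running between $P_1$ and $P_2$, or many parallel edges to a single large satellite) before any new satellite is discovered or either of your stopping conditions fires. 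None of those edges can be pushed down, so the work can be $\Omega(t\log n)$ while $k=O(1)$, and the bound $O((k+1)\log n)$ fails; the property you defer as "the hard part" is not merely unproven, it is false for such instances. Relatedly, your assertion that in case (2) all edges incident to $P_2$ "turn out to be unblocked" is wrong in general (e.g., blocked $P_1$--$P_2$ edges), which is precisely why the paper fetches all of $P_2$'s edges and then pushes down only the largest of the three sets $A$, $B$, $C$, guaranteeing a constant fraction is charged.

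There is also a correctness gap at the end of your case (2): because the scan stopped early, satellites that were never examined and whose only edges go to the now-smaller $P_1$ (or to other satellites) may be left incident only to unblocked edges, so merging $P_2$ with its neighbours does not restore Invariant~\ref{inv:blocked} for them. The paper has an additional phase for exactly this: after the merge the graph is again a star (or $P_2$ disconnects), and it certifies the \emph{smallest} remaining cluster by prefix-doubling over \textbf{GetMaximalPrefix}/\textbf{Smallest} interleaved with doubling edge fetches, invoking Lemma~\ref{lem:smallest_blocked} (if the smallest cluster has a blocked edge, all do) for correctness and Lemma~\ref{lem:par_smallest_work} for the $O(\log^2 n)$ depth and $O((x+1)\log n)$ work. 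A smaller omission of the same kind occurs in your case (1): the paper still fetches one outbound edge for each of $P_1$ and $P_2$ to certify the fragments themselves. So the claim that your procedure "performs the same set of push-downs and merges" as the sequential algorithm does not hold; these certification steps are where both the residual correctness and the remaining uncharged-work accounting live.
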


We leave the full analysis of deletion to the appendix. Our result is the following lemma:

\begin{lemma}\label{lem:par_del}
    A deletion in a blocked cluster forest can be done in $O(\log^3 n)$ depth and $O(k\log n + \log^2 n)$ work where $k$ is the total number of edges pushed down during the deletion.
\end{lemma}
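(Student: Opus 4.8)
The plan is to follow the two-phase structure of the parallel deletion procedure: an \emph{upward sweep} from level $i=\level{e}$ that locates the level $\ell_u$ at which connectivity is restored (or the top), followed by a \emph{downward sweep} from $\ell_u-1$ that re-establishes the blocked invariant level by level via \textbf{BatchPushDown}. Since the cluster forest has $O(\log n)$ levels and each phase visits each level once, the depth bound reduces to showing that the work at a single level has depth $O(\log^2 n)$ while the levels form a sequential chain, and the work bound reduces to charging everything except an $O(\log^2 n)$ additive term to the $k$ push-down operations.

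For the upward sweep, at each level $\ell$ the algorithm does two things. First, it fetches edges incident to $C_u$ and $C_v$ in an alternating, doubling fashion and pushes down the unblocked ones, stopping once each side touches a blocked edge or is exhausted; by Lemma~\ref{lem:doubling_fetch} this costs $O(\log^2 n)$ depth and $O(k_\ell\log n)$ work, where $k_\ell$ is the number of edges fetched at level $\ell$, and since the search on each side halts upon seeing a single blocked edge, all but $O(1)$ of these $k_\ell$ edges are pushed down, so the uncharged work at this level is $O(\log n)$. Second, if connectivity fails, the $O(\log n)$-cost structural update removes the merged cluster $W$ from $P$ and creates $P'$, possibly splitting $P$ in $CG(GP)$; Lemma~\ref{lem:par_restore_inv} then repairs the invariant in $O(\log^2 n)$ depth and $O((k_\ell'+1)\log n)$ work, contributing another $O(\log n)$ of uncharged work (the ``$+1$'') at this level. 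Summing over the $O(\log n)$ levels of the sweep yields $O(\log^2 n)$ uncharged work, $O(k\log n)$ charged work, and---because level $\ell+1$ cannot start until level $\ell$ has produced its split and merged cluster---depth $O(\log n)\cdot O(\log^2 n)=O(\log^3 n)$.

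For the downward sweep, at each level $\ell_d$ from $\ell_u-1$ downward the algorithm runs \textbf{BatchPushDown}$(E_{\ell_d})$, where $E_{\ell_d}$ is the batch of edges destined for level $\ell_d$: those pushed there during the upward sweep together with those cascaded down by the \textbf{BatchPushDown} call at level $\ell_d+1$. By Lemma~\ref{lem:batch_push_down} this costs $O(\log^2 n)$ depth and $O(|E_{\ell_d}|\log n)$ work, and since every edge appearing in some $E_{\ell_d}$ is by construction an edge pushed down during the deletion, $\sum_{\ell_d}|E_{\ell_d}|\le k$, giving $O(k\log n)$ total work for the phase; the calls again form a chain (the level-$\ell_d$ call needs $E_{\ell_d}$, which is only finalized by the level-$(\ell_d+1)$ call), so the depth is $O(\log n)\cdot O(\log^2 n)=O(\log^3 n)$. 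Adding the two phases and the $O(\log^2 n)$ uncharged work gives work $O(k\log n+\log^2 n)$ and depth $O(\log^3 n)$.

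I expect the amortized work accounting, not the depth, to be the delicate part. Two things need to be checked carefully: (i) that each level of the upward sweep incurs only $O(\log n)$ work not paid for by a push-down---this relies on the alternating/doubling search on each side terminating after a single blocked edge is seen, and on the ``$+1$'' in Lemma~\ref{lem:par_restore_inv}, which itself rests on the structural lemmas (Lemmas~\ref{lem:radius} and~\ref{lem:heavy_center}) guaranteeing the center/satellite picture so that only $O(1)$ blocked edges are inspected while repairing a split; and (ii) that the downward cascade is accounted for correctly---an edge moved to level $\ell_d$ and then moved again at level $\ell_d-1$ must be counted once per hop in $k$, which is consistent with the convention that $k$ tallies push-down operations rather than distinct edges and with Lemma~\ref{lem:batch_push_down} charging work linearly in the number of operations it performs. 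Once these are settled, combining the per-level bounds over the $O(\log n)$ sequentially dependent levels is routine.
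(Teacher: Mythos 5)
Your proposal is correct and follows essentially the same route as the paper's analysis: $O(\log^2 n)$ depth per level over the $O(\log n)$ sequentially dependent levels of the two sweeps, $O(\log n)$ uncharged work per level in the upward sweep (with the invariant repair black-boxed via Lemma~\ref{lem:par_restore_inv}), and all remaining work charged to push-down operations, with the downward-sweep \textbf{BatchPushDown} calls paid for by the cascaded edges. The only slight imprecision is the claim that ``all but $O(1)$'' of the fetched edges at a level get pushed down; the correct statement (and the one the paper uses) is that the doubling search guarantees a constant fraction of fetched edges are pushed down, which yields the same $O((x+1)\log n)$ per-level bound.
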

\section{Parallel Batch-Dynamic Updates}\label{sec:batchpar}
In this section we show how to extend blocked cluster forests to support parallel batch-dynamic operations.

\subsection{Batch Insertion}\label{sec:batchins}
Consider a batch $E$ of $k$ edge insertions.
All of the edges in the batch are inserted at level $(\lmax+1)$. Let the set of edges be $E_{\lmax+1}$.
Then we call \textbf{BatchPushDown} on $E_{\lmax+1}$. Let $E_{\lmax}$ be the set of edges that were pushed down by this call. We repeatedly call \textbf{BatchPushDown} on $E_{i}$, the set of edges pushed down by the call on $E_{i+1}$.
Algorithm~\ref{alg:batchins} shows the pseudo-code for batch insertion.

\begin{algorithm}
\caption{$\mathsf{BatchInsertion}(CF, E)$}
\label{alg:batchins}
\begin{algorithmic}[1]
    \State Insert all of $E$ into the edge lists of their endpoints
    \State Batch update bitmaps for level $(\lmax+1)$ edges
    \While{$|E| > 0$}
        \State $E \gets \mathsf{BatchPushDown}(E)$
    \EndWhile
\end{algorithmic}
\end{algorithm}

When a batch of edges is introduced into a level, only the clusters incident to those edges may violate the blocked invariant. Calling \textbf{BatchPushDown} ensures that these clusters follow the blocked invariant when it is finished. Doing this for every level ensures that the blocked invariant is maintained throughout the cluster forest.
Each call to \textbf{BatchPushDown} takes $O(\log^2 n)$ depth, so the total depth of batch insertion across all levels is $O(\log^3 n)$.
The work on the first call to \textbf{BatchPushDown} is $O(k \log n)$. This work can be charged to the $k$ edges in $E$ being inserted. Each subsequent level $i$ performs $O(|E_i| \log n)$ work which can be charged to the $|E_i|$ edges pushed down at the previous level.

\subsection{Batch Deletion}\label{sec:batchdel}
Consider a batch of $k$ edge deletions. The general strategy will be to keep an \defn{active set} $A$ of fragment clusters whose connectivity may have changed.
First there will be a bottom-up sweep on the levels at the cluster forest. At every level the algorithm (1) restores the blocked invariant (2) performs connectivity search, and (3) possibly splits clusters in the level above. In this upward sweep edges will be pushed down only one level which may temporarily violate the blocked invariant in the lower levels that have already been processed.
Afterwards there will be a downward sweep to push down any edges that were not pushed down as far as possible to maintain the blocked invariant.
Algorithm~\ref{alg:batchdel} shows the pseudo-code for batch deletion.

\begin{algorithm}
\caption{$\mathsf{BatchDeletion}(CF, E)$}
\label{alg:batchdel}
\begin{algorithmic}[1]
\revision{
\State Delete all of $E$ from the edge lists of their endpoints
\State Batch update bitmaps for each edge's level prior to deletion
\State $[E_1 \hdots E_{\lmax}] \gets \mathsf{semisort}(E)$, $\ell \gets 1$, $A \gets E_1$
\While{$|A| > 0$} \Comment{sweep up}
    \State $Groups \gets$ sort A by parent \label{line:sort}
    \For{$(Group,P) \in Groups$} \Comment{parallel for}
        \State $\mathsf{RestoreBlockedInvariant}(CG(P))$ \label{line:fix_invariant}
        \State $Components \gets \{\}$ \label{line:cc_start}
        \For{$Cluster \in Group$} \Comment{parallel for}
            \State $e \gets Cluster.\mathsf{FetchOutboundEdge}()$
            \If{$\neg e$} $Components.\mathsf{Insert}(Cluster)$
            \Else ~$Components.\mathsf{Insert}(null)$ \EndIf
        \EndFor \label{line:cc_end}
        \If{$|Components| > 1$} \label{line:one_component}
            \For{$(CC \neq null) \in Components$} \Comment{parallel for} \label{line:new_parents_start}
                \State Create a new parent cluster for $CC$
            \EndFor \label{line:new_parents_end}
            \State $P.\mathsf{RemoveChildren}(Components)$
            \State $\mathsf{Parent}(P).\mathsf{AddChildren}(\text{new parents})$ \label{line:add_children}
        \EndIf
    \EndFor
    \State $A \gets P~\cup$ new parents created \label{line:next_level_start}
    \State $\ell \gets \ell+1$, $A \gets A \cup \{ a \mid a \in e \in E_\ell \}$ \label{line:next_level_end}
\EndWhile
\For{$\ell \in [\lmax-1, 1]$} \label{line:sweep_down_start} \Comment{sweep down}
    \State $D_{(\ell-1)} \gets D_{(\ell-1)} \cup \mathsf{BatchPushDown}(D_\ell)$
\EndFor \label{line:sweep_down_end}
}
\end{algorithmic}
\end{algorithm}

First all of the edges in the batch are deleted from the list of edges at the leaves of the cluster forest and the bitmaps are updated. The next step is to semisort all of the deleted edges in the batch by their (former) level.
During the upward sweep at level $i$, we will add the level $i$ clusters containing the endpoint of each level $(i+1)$ edge to our set of fragment clusters.
When subject to $k$ deletions, clusters at upper levels may be split into as many as $k+1$ fragments now instead of just two. 
Section~\ref{sec:batchrestore} describes how the blocked invariant can be restored in parallel given the possibility of several clusters being split by the batch of deletions. This yields Lemma~\ref{lem:batch_restore_invariant}.


\begin{restatable}{lemma}{batchrestore} \label{lem:batch_restore_invariant}
    Given a cluster graph $CG(c)$ that was subject to $k$ clusters being split and  previously maintained Invariant~\ref{inv:blocked}, the invariant can be restored in $O(\log^2 n)$ depth and $O((x+k)\log n)$ work where $x$ is the total number of edges pushed down by this process.
\end{restatable}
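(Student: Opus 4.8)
The plan is to reduce the batch restoration to a constant number of the parallel primitives already established --- \textbf{FetchEdges} (Lemma~\ref{lem:batch_fetch}), the doubling outbound-edge search (Lemma~\ref{lem:doubling_fetch}), \textbf{PushDownGroup} and \textbf{BatchPushDown} (Lemmas~\ref{common_push_down} and~\ref{lem:batch_push_down}), and \textbf{GetMaximalPrefix} on the batch-dynamic local trees (Section~\ref{sec:modlocaltree}) --- organized around whether the pre-split center of $CG(c)$ was among the split clusters (the full procedure is what Section~\ref{sec:batchrestore} develops). The key structural observation is that splitting a node of $CG(c)$ only shrinks cluster sizes, so the set of blocked edges after the splits is a \emph{subset} of those before; hence by Lemma~\ref{lem:blocked_matching_size} the currently-blocked edges still have a maximum matching of size at most $1$, and, as in Lemmas~\ref{lem:radius} and~\ref{lem:heavy_center}, every currently-blocked edge is incident to one common node --- the current center $Z$ --- unless $CG(c)$ is a triangle or a single node (a constant-size base case) or is currently small enough ($O(k)$ nodes) to check by brute force. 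Consequently the only nodes that can violate Invariant~\ref{inv:blocked} are (a) the $O(k)$ fragments produced by the splits, and (b), \emph{only} if $Z$ itself was split, the satellites whose blocked edge into $Z$ now lands inside a strictly smaller fragment of $Z$.

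First, Case 1 (the center survived): every unsplit satellite keeps its blocked edge to $Z$, so only the $O(k)$ fragments of split satellites can be in violation. For each such fragment, in parallel, I would run the doubling outbound-edge search of Lemma~\ref{lem:doubling_fetch} to obtain one outbound edge (the discarded self-loops are pushed down and thus counted in $x$), collect these edges into a set $E$, and call \textbf{BatchPushDown}$(E)$; its guarantee --- every level-$(i-1)$ cluster containing an endpoint of $E$ ends up incident to a blocked edge or with an isolated parent --- is exactly the invariant for those fragments, and since \textbf{BatchPushDown} only ever merges (grows) clusters it cannot destroy a surviving blocked edge at an untouched node. This runs in $O(\log^2 n)$ depth and $O((x+k)\log n)$ work.

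Next, Case 2 (the center $Z$ split into fragments $Z_1,\dots,Z_p$): now essentially every satellite can violate the invariant, so --- exactly the difficulty already present in the single-split analysis of Lemma~\ref{lem:par_restore_inv} --- I cannot afford to touch all satellites, whose number may be $\omega(x+k)$. I would mirror the sequential single-split procedure of Lemma~\ref{lem:seq_restore_inv} in a batched, $O(\log n)$-round doubling form: treat $Z_1,\dots,Z_p$ as candidate ``cores'' and, using \textbf{FetchEdges} ($O(\log n)$ depth per round) together with \textbf{GetMaximalPrefix} on each $Z_m$, greedily absorb into $Z_m$ a prefix of its incident satellites ordered by size that keeps $n(Z_m)\le\sizeconst{i-1}$, resolving at most one satellite--satellite merge per core via a blocked edge as in the sequential proof. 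A counting argument using $\sum_m n(Z_m)=n(Z)\le\sizeconst{i-1}$ with the size invariant shows the process ends in one of two regimes: either (i) all fetched satellite edges except a constant number per core get pushed down to level $(i-1)$ (charging their work to $x$), leaving one blocked edge per remaining satellite to its grown core; or (ii) some core $Z_m$ is certified able to absorb \emph{all} of its remaining neighbors, so \textbf{FetchEdges} retrieves all of $Z_m$'s incident edges and \textbf{PushDownGroup} merges them in, after which $Z_m$ (and, handling the complementary fragments symmetrically) is large enough to be a blocked-edge center for every remaining node. The $O(k)$ fragments of the other split satellites are then cleaned up as in Case 1. Each phase is a constant number of $O(\log n)$-depth rounds (or one $O(\log^2 n)$-depth \textbf{BatchPushDown}), for $O(\log^2 n)$ total depth; and its work is $O(\log n)$ per touched cluster/edge, each of which is either a pushed-down edge (in $x$) or one of the $O(k)$ fragments, giving $O((x+k)\log n)$ work.

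The step I expect to be the main obstacle is the charging argument in Case 2: proving that the batched greedy absorption either spends nearly all fetched satellite edges on level-$(i-1)$ merges or exposes a core that can swallow every remaining neighbor, all while inspecting only $O(x+k)$ satellites and never driving any $n(Z_m)$ past $\sizeconst{i-1}$, and simultaneously keeping the parallel per-core prefix-merges race-free (handled by the disjoint-star decomposition already underlying \textbf{BatchPushDown}). This is the batched generalization of the ``two neighbors that cannot merge'' stopping condition of the single-split case, and getting its size bookkeeping right --- in particular reusing $\sum_m n(Z_m)=n(Z)$ to force the all-absorb regime onto some core --- is the crux.
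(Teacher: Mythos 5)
Your overall architecture (split into ``center survived'' vs.\ ``center split'', exploit that splitting only shrinks sizes so blocked edges form a star/triangle by Lemma~\ref{lem:blocked_matching_size}, and fall back to brute force when $CG(c)$ has $O(k)$ nodes) is sound, and your Case~1 is a legitimate alternative to the paper's treatment --- but note that the $O(k)$-node brute-force carve-out is load-bearing there: if the surviving center's satellites were \emph{all} split, the center itself (which you do not list among the possible violators) can be left with only unblocked edges, and your ``one outbound edge per fragment $+$ \textbf{BatchPushDown}'' pass never touches it; this scenario is only excluded because it forces the post-split graph to have $O(k)$ nodes, so you should say so explicitly.

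The genuine gap is in Case~2, in the regime where some core stops because it would acquire two blocked edges. Your closing claim --- that after a certified core $Z_m$ absorbs all of its neighbors it ``is large enough to be a blocked-edge center for every remaining node'' --- is unjustified and false in general: the satellites that were never reached by the doubling sweep need not be adjacent to the grown $Z_m$ at all; their only edges may go to the stopped fragment, and those edges can be unblocked because that fragment is strictly smaller than the original center. This is exactly why the paper's proof (Section~\ref{sec:batchrestore}) does not stop after the absorption step: it first argues via the matching bound and Lemma~\ref{lem:batch_one_blocked} that what remains is a star around the stopped fragment with at least two blocked edges, and then runs the doubling smallest-cluster process, whose correctness rests on Lemma~\ref{lem:smallest_blocked} (if the smallest remaining cluster is incident to a blocked edge, all are), precisely so that the untouched satellites are certified without being examined. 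Your sketch has no analogue of this certification step. A second, related hole is the work accounting for ``\textbf{FetchEdges} retrieves all of $Z_m$'s incident edges'': many of these may be blocked parallel edges (e.g.\ former self-loops of the old center, now edges between fragments) that cannot be pushed down, so their $O(\log n)$-per-edge cost is not chargeable to $x$; the paper pays for this by partitioning the fetched edges into the three sets $A$, $B$, $C$ and pushing down whichever is largest, guaranteeing a constant fraction of the fetched edges is charged, leaving only $O(k\log n)$ uncharged. Without these two ingredients the bound $O((x+k)\log n)$ and the restoration of Invariant~\ref{inv:blocked} for all satellites are not established --- which is the crux you yourself flagged but did not supply.
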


\myparagraph{Parallel Connectivity Search and Splitting Clusters}
Consider the active set of $O(k)$ clusters at level $i$. They can be grouped based on their level $(i+1)$ parent cluster, by finding the parent of every cluster in $O(k \log n)$ work, and sorting the clusters by parent in $O(\log k)$ depth and $O(k \log k)$ work (line~\ref{line:sort})~\cite{cole1986sort}. Then each group can be processed independently in parallel.
Let $k'$ be the number of active clusters in a given group.
Within each group, we first fix the blocked invariant in the cluster graph of the parent (line~\ref{line:fix_invariant}). This takes $O(\log^2 n)$ depth and results in $O(k' \log n)$ uncharged work by Lemma~\ref{lem:batch_restore_invariant}.

Then we want to determine the connected components of the cluster graph $CG(P)$ of the parent $P$. We will take advantage of the fact there cannot be a matching of size greater than one over the blocked edges in $CG(P)$. Since the blocked invariant has been restored, only one connected component can contain multiple clusters in $CG(P)$.
The strategy will be to attempt to find an outbound edge for each cluster in the active set. This can be done in $O(\log^2 n)$ depth with $O(\log n)$ uncharged work per cluster, the rest of the work is charged to the self-loops that were found and pushed down. There will be $O(k' \log n)$ total uncharged work.
If an outbound edge is found, it is part of the component with multiple clusters. Each other component is defined by a single cluster. We produce a set of the connected components, representing each singleton component with its cluster and representing the component with multiple clusters as null if it exists (lines~\ref{line:cc_start}--\ref{line:cc_end}).

If there is only one component, the deletion is done within this component, meaning every edge deletion in this cluster graph has certified connectivity (line~\ref{line:one_component}). Otherwise, each lone cluster is removed as a child of $P$, and will get a new parent node at level $(i+1)$ (lines~\ref{line:new_parents_start}--\ref{line:new_parents_end}). These will be added to the modified local tree of the parent of $P$ using \textbf{BatchInsert}, and their sizes will be subtracted from $n(P)$ within the \textbf{AddChildren} function (line~\ref{line:add_children}). The component with multiple clusters will keep the original $P$ as its parent. If it didn't exist (e.g. $n(P)=0$ now), we delete $P$ by removing it as a child of its parent.

The active set at the next level up will be the set of new level $(i+1)$ clusters and possibly $P$ for any group that still had multiple components, along with the clusters containing endpoints of level $(i+2)$ edge deletions (lines~\ref{line:next_level_start}--\ref{line:next_level_end}).

\myparagraph{Downward Sweep}
Just like in parallelizing a single deletion, we will collect all of the edges that are pushed down at each level during the upward sweep, and handle them later in a downward sweep to restore the blocked invariant. Let $D_\ell$ be the set of edges pushed down during the upward sweep of deletion at a level $\ell$.
We start the downward sweep at level $\lmax-1$. At each level $\ell$ in the downward sweep, we call \textbf{BatchPushDown}$(D_{\ell})$ (lines~\ref{line:sweep_down_start}--\ref{line:sweep_down_end}). For every edge that this call pushes down, we add it to the set $D_{(\ell-1)}$.

\myparagraph{Cost Analysis}
\revision{We carefully analyze the work and depth of our batch update algorithms in the appendix. The result is summarized by the following theorem:}

\begin{restatable}{theorem}{batchupdate}
\label{thm:batch_par_updates}
    Batch insertions and batch deletions of edges in the blocked cluster forest can be done in $O(\log^3 n)$ depth per batch with an amortized work of $O(\log^2 n)$ per edge.
\end{restatable}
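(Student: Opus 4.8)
The plan is to separately analyze batch insertion and batch deletion, charging all work either to edges that get pushed down (each level decrease of an edge costs $O(\log n)$ work, and an edge can only decrease $O(\log n)$ times over its lifetime, yielding $O(\log^2 n)$ amortized per edge) or to the $k$ edges in the batch directly. For batch insertion the argument is short: by Algorithm~\ref{alg:batchins}, we insert all $k$ edges at level $\lmax+1$, update the bitmaps in $O(\log n)$ depth and $O(k\log(1+n/k))$ work, and then call \textbf{BatchPushDown} once per level as we descend. By Lemma~\ref{lem:batch_push_down} each such call runs in $O(\log^2 n)$ depth and $O(|E_i|\log n)$ work, where $E_i$ is the set of edges arriving at level $i$. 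Summing the depth over $O(\log n)$ levels gives $O(\log^3 n)$; the work $\sum_i O(|E_i|\log n)$ is charged to the level decrements, each of which is paid by the amortization account of the corresponding edge, for $O(\log^2 n)$ amortized work per inserted edge. Correctness follows because introducing edges at a level can only violate the blocked invariant at clusters incident to those edges, and \textbf{BatchPushDown} restores it at exactly those clusters (by its specification), so iterating over all levels re-establishes Invariant~\ref{inv:blocked} everywhere; Invariant~\ref{inv:size} is maintained because \textbf{BatchPushDown} never pushes an edge past the point where it becomes blocked.

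For batch deletion I would follow Algorithm~\ref{alg:batchdel} and bound the cost of one iteration of the upward sweep at a fixed level $i$, then sum over the $O(\log n)$ levels. After the initial bitmap updates ($O(\log n)$ depth, $O(k\log(1+n/k))$ work) and the semisort of the $k$ deleted edges by former level ($O(\log k)$ depth, $O(k\log k)$ work via~\cite{cole1986sort}), at level $i$ we have an active set of $O(k)$ fragment clusters. We group them by level-$(i+1)$ parent: computing parents costs $O(k\log n)$ work, sorting costs $O(\log k)$ depth and $O(k\log k)$ work. Each group of size $k'$ is then handled independently in parallel: (i) \textbf{RestoreBlockedInvariant} costs $O(\log^2 n)$ depth and, by Lemma~\ref{lem:batch_restore_invariant}, $O((x+k')\log n)$ work where $x$ is the number of edges pushed down there; (ii) fetching one outbound edge per active cluster via the doubling search of Lemma~\ref{lem:doubling_fetch} costs $O(\log^2 n)$ depth and $O(\log n)$ uncharged work per cluster plus work charged to self-loops found and pushed down; (iii) creating new parent clusters and updating the modified local tree of $\mathsf{Parent}(P)$ via \textbf{BatchInsert}/\textbf{AddChildren} costs $O(\log n)$ depth and $O(k'\log(1+n/k'))$ work. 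Summing the per-group work over all groups at level $i$ gives $O((k+x_i)\log n)$ where $x_i$ is the number of edges pushed down at level $i$ during the upward sweep; the depth at level $i$ is $O(\log^2 n)$ since groups run in parallel. Over $O(\log n)$ levels the upward sweep has depth $O(\log^3 n)$ and work $O(k\log^2 n + (\sum_i x_i)\log n)$. The downward sweep (lines~\ref{line:sweep_down_start}--\ref{line:sweep_down_end}) calls \textbf{BatchPushDown} once per level: depth $O(\log^3 n)$ total by Lemma~\ref{lem:batch_push_down}, and work $\sum_\ell O(|D_\ell|\log n)$, again charged to edge level decrements.

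The amortization then ties it together: every edge pushed down (during either sweep, in any \textbf{RestoreBlockedInvariant} call, or as a self-loop during outbound-edge fetching) undergoes a level decrease, and an edge can be pushed down at most $\lmax = O(\log n)$ times before reaching level $1$; each push is paid for with $O(\log n)$ work from that edge's account, so the total push-down work amortizes to $O(\log^2 n)$ per edge ever inserted. The remaining, "uncharged," work is $O(k\log^2 n)$ for the bitmap updates, semisorts, parent computation, and the $O(\log n)$ per active cluster that appears at each of $O(\log n)$ levels — this distributes as $O(\log^2 n)$ per edge in the batch. Hence the amortized work is $O(\log^2 n)$ per edge and the depth is $O(\log^3 n)$ per batch. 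Correctness of deletion reduces to Lemma~\ref{lem:batch_restore_invariant} (the invariant is restored at every affected cluster graph, even under $k$ simultaneous splits) together with the connectivity-certification argument from the sequential case: since after restoration no blocked matching of size $\geq 2$ exists in any $CG(P)$, at most one connected component of $CG(P)$ contains more than one cluster, so checking whether each active cluster has an outbound edge correctly identifies the components, which is exactly what the original cluster forest algorithm needs to recurse to level $i+1$.

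\textbf{Main obstacle.} I expect the crux to be bounding the work of \textbf{RestoreBlockedInvariant} when many clusters in the same cluster graph are split simultaneously (Lemma~\ref{lem:batch_restore_invariant}) — specifically, guaranteeing the $O((x+k)\log n)$ bound rather than something depending on the (potentially large) number of satellites touched. The delicate point, inherited from the sequential center-split argument, is that blocked edges cannot be pushed down to pay for their own inspection, so the procedure must be careful to examine only $O(k)$ blocked edges that are not charged to level decrements; handling this under a batch of $k$ splits, where the "center" structure of Lemmas~\ref{lem:radius}--\ref{lem:heavy_center} is disrupted in $k$ places at once, is where the real work lies, and it is deferred to the appendix in this paper.
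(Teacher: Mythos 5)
Your proposal is correct and follows essentially the same route as the paper's own analysis: batch insertion is paid for by charging each \textbf{BatchPushDown} call to the edges arriving at that level, and batch deletion is analyzed level-by-level in the upward sweep (grouping the $O(k)$ active clusters by parent, invoking Lemma~\ref{lem:batch_restore_invariant} and the outbound-edge search with $O(k'\log n)$ uncharged work per cluster graph, summing to $O(k\log^2 n)$ uncharged work overall) plus a downward sweep charged to level decrements, with depth $O(\log^2 n)$ per level over $O(\log n)$ levels. You also correctly identify that the real technical weight sits in Lemma~\ref{lem:batch_restore_invariant}, which the paper likewise proves separately and uses here as a black box.
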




\section{Empirical evaluation}\label{sec:exp}
\revision{
In this section, we provide the first experimental study of the cluster forest algorithm.
Our goals in this part of the paper are to understand (1) whether the cluster forest algorithm is practical and yields good query and update performance; (2) whether the theoretical space improvements provided by the cluster forest algorithm translate into meaningful space improvements in practice; and (3) whether the algorithm can scale to large graphs and work well across a variety of different graph types (both real-world and synthetic).
Since there have been no prior implementations of the cluster forest approach, we focus our study on sequential implementations in order to carefully study different design choices in the algorithm and carefully measure the impact of these choices on runtime and query performance.
In this section, we demonstrate that a carefully optimized implementation of the cluster forest algorithm can achieve all three of these goals, and that the cluster forest approach may be the algorithm of choice when both theoretical guarantees and practical performance are required.
}
%
%
%
%
%
%
%
%

\revision{\subsection{CF Algorithm Optimization}\label{sec:implementation}} 

\revision{

One of the main contributions of this paper is the first practical and highly-optimized implementation of the cluster forest algorithm. 
A major bottleneck when implementing the algorithm is the cost of traversing the cluster-forest hierarchy, a step that occurs in nearly all aspects of the algorithm (e.g., replacement edge search, fetching a level $i$ edge, or pushing an edge from level $i$ to $i-1$).
Although the hierarchy has depth $O(\log n)$, traversals can still be costly as they encounter both cluster forest nodes and local tree nodes, and thus every traversal involves significant pointer jumping.
To address this issue, we designed our implementation to reduce the cost of and eliminate tree traversals whenever possible.
Due to space constraints, we provide more a detailed description and discussion of our optimizations in the appendix.

Our first optimization is called \defn{flattened local trees}. As the name suggests,
we flatten the local tree structure into an array that store the roots of the rank trees in increasing order of rank instead of combining the rank trees into a binary tree (see Section~\ref{sec:seqcf} for local and rank tree definitions). 
Since there are at most $\log_2 n$ rank trees, this array approach does not sacrifice the time complexity of any local tree operation and improves locality.
Additionally, we do not combine rank trees nodes unless there are more than $\log_2 n$ of them. This means that in many cases where a node in the cluster forest has few children, the algorithm can avoid a large amount of indirection in traversing the local trees and rank trees.

Our next optimization is called \defn{lowest common ancestor (LCA) insertion}.
The optimization is to insert an edge at the level of the lowest level node in the cluster forest that contains both endpoints of the edge.
A faithful implementation of the cluster forest algorithm is to simply perform \defn{root insertion}, i.e., every non-tree edge insertion is simply placed at the root of its tree.
The LCA optimization trades off extra time spent during an insertion to find the LCA to distribute the edges better across the levels to achieve faster deletions, since fewer edges need to be searched, thus lowering the amount of traversals.
The performance overhead of performing LCA insertion is negligible compared to root insertion---on average LCA insertion is $1.06\times$ faster on the graphs we evaluated when comparing total insertion time.
However, LCA insertion makes a huge difference for speeding up deletions---on average, it speeds up total deletion time by $1.5\times$ on average over root insertion across all of our graphs.
%
We note that the LCA optimization is somewhat unique to the cluster forest algorithm, since finding the LCA can be done in $O(\log n)$ time using the cluster forest representation; applying a similar optimization to HDT seems more complex as it requires quickly finding the lowest level where the endpoints of an edge are connected, which naively takes $O(\log^2 n)$ time.


When evaluating an early version of our algorithm, we found that on dense graphs, our implementation was slower than HDT, even when using the LCA optimization.
The reason for HDTs speed is that $\geq m-n+1$ edges are {\em non-tree edges}, and therefore, many deletions in dense graphs target non-tree edges. 
HDT benefits from this fact, since a non-tree edge deletion simply checks a hash-table storing whether an edge is tree or non-tree in $O(1)$ time, and updates bitmaps after deleting the edge.
To achieve similar benefits in the cluster forest algorithm, we introduce a {\bf non-tree edge tracking} optimization.
Due to space constraints we give a full description of the optimization in the appendix.
In a nutshell, the idea of the optimization is to carefully mark edges as either tree or non-tree---as in HDT, non-tree edges can simply be deleted without affecting the connectivity information.
%
%
%
On dense graphs, the non-tree edge tracking optimization yields a significant speedup---for example, on Orkut, we observed a speedup of $1.87\times$ after implementing our non-tree edge tracking optimization due to 87\% of the deletions being detected as non-tree edges. 
Compared to HDT, which detects 89\% of the deletions as non-tree edges our optimization shows that our CF implementation can almost completely match the HDT implementation's ability to avoid unnecessary work during edge deletions.
}

\subsection{Experimental Setup}
All of the experiments presented in this paper were run on a machine with 4 $\times$ 2.1 GHz Intel Xeon(R) Platinum 8160 CPUs (each with 33MiB L3 cache) and 1.5TB of main memory.

\revision{
\myparagraph{Implementations}
Our cluster forest implementations are all written in C++ and use B-tree sets and flat hash sets from Abseil~\cite{abseil}.
We refer to our implementations with and without the LCA insertion optimization as \defn{CF-LCA} and \defn{CF-Root}, respectively.
We compare against a faithful implementation of the original HDT algorithm~\cite{holm2001poly} written in C++ also optimized using set data structures from Abseil. 
Our implementations all use -O3 optimization.
We also compare against D-Tree~\cite{chen2022spanning}, a recently published data structure for dynamic connectivity that is written in Python.
We note that since D-Tree is implemented in Python comparing its running time and memory usage with other implementations written in C++ may be unfair; however, since D-tree is a recent linear-space dynamic connectivity algorithm, we include it for completeness.
}

\myparagraph{Input Data}
The graphs used in our experiments are summarized in Table~\ref{tab:graph_data}. 
\revision{We use a variety of real-world and synthetic graphs with varying size, density, and type.}
To generate dynamic updates, we generate a random permutation of all of the edges in the graph, \revision{and insert all of the edges in this order.} Then we generate another random permutation of the edges \revision{and delete all of the edges in this order.
We break these random permutations into 10 \defn{stages} of inserting $|E|/10$ edges per-stage and then 10 stages of deleting $|E|/10$ edges. 
We use stages to understand how memory usage, and the insert and delete speeds change over time (e.g., as the graph grows more dense, and then more sparse).
At the end of each stage we perform 1M queries: half the queries are completely random and half are endpoints of edges that exist in the graph at that point.}

\begin{table}[]
\revision{
\footnotesize{}
    \centering
    \vspace{-2em}
    \caption{\small Graph datasets used in our experiments.}
    \begin{tabular}{|l|l|l l l|l|}
    \hline
        Name & Type & $|V|$ & $|E|$ & Avg. Deg. & Cite \\
        \hline \hline
        Germany Roads (GER) & Road & 12.28M & 16.12M & 2.62 & \cite{roadgraphs} \\
        USA Roads (USA) & Road & 23.95M & 28.85M & 2.41 & \cite{roadgraphs} \\
        Household Lines (HH) & $K$-NN & 2.05M & 6.50M & 6.35 & \cite{ucimlrepo} \\
        Chem (CHEM) & $K$-NN & 4.21M & 14.83M & 7.05 & \cite{chem5} \\
        Youtube (YT) & Web & 1.16M & 2.99M & 5.16 & \cite{yang2012defining} \\
        Pokec (POKE) & Web & 1.63M & 22.30M & 27.32 & \cite{snapnets} \\
        Wiki Topcats (WT) & Web & 1.79M & 25.44M & 28.41 & \cite{yin2017local} \\
        ENWiki (EW) & Web & 4.21M & 91.94M & 43.72 & \cite{boldi2004webgraph} \\
        Skitter (SKIT) & AS & 1.70M & 11.10M & 13.08 & \cite{leskovec2005graphs} \\
        StackOverflow (SO) & Temporal & 6.02M & 28.18M & 9.36 & \cite{paranjape2017motifs} \\
        LiveJournal (LJ) & Social & 4.85M & 42.85M & 17.68 & \cite{backstrom2006group} \\
        Orkut (ORK) & Social & 3.07M & 117.19M & 76.28 & \cite{yang2012defining} \\
        Twitter (TWIT) & Social & 41.65M & 1.20B & 57.74 & \cite{kwak2010what} \\
        Friendster (FR) & Social & 65.61M & 1.81B & 55.06 & \cite{yang2012defining} \\
        Grid (GRID) & Synthetic & 10.00M & 10.22M & 2.04 & \cite{lattices} \\
        RMAT-26 (RMAT) & Synthetic & 67.11M & 670.83M & 19.99 & \cite{rmatcite} \\
        \hline
    \end{tabular}
    \label{tab:graph_data}
    }
\end{table}

\subsection{Performance Results}

\begin{figure*}[ht]
\vspace{-2em}
    \subfloat{\includegraphics[width=\textwidth]{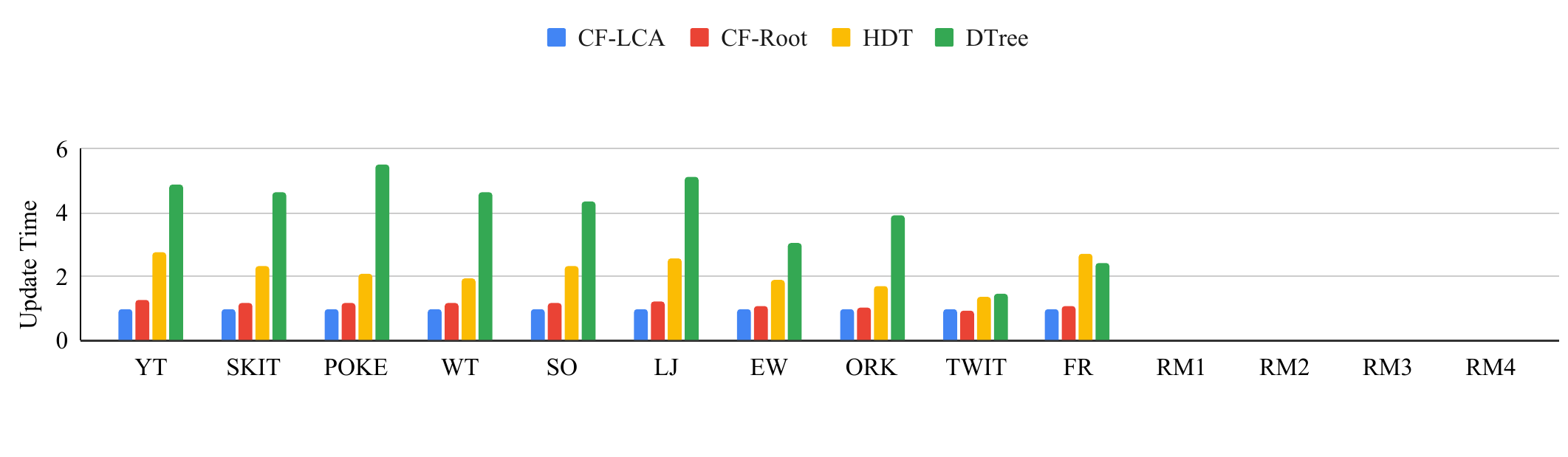}} \\
    \setcounter{subfigure}{0}
    \vspace{-1em}
    \subfloat{\includegraphics[width=\textwidth]{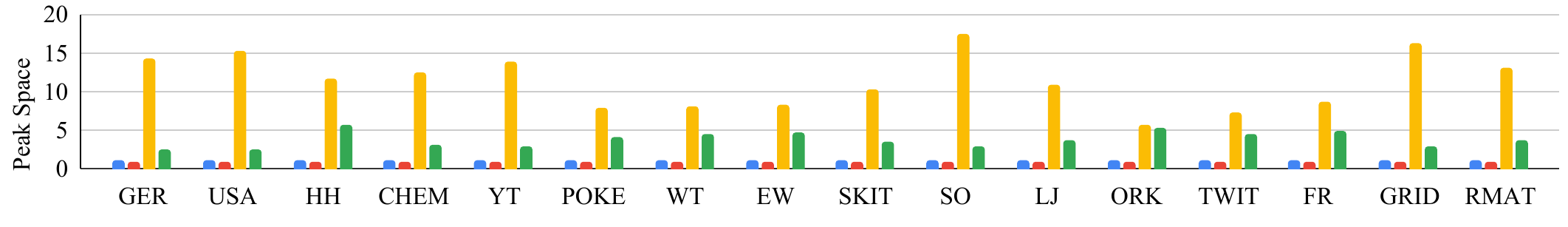}} \\
    \vspace{-0.5em}
    \subfloat{\includegraphics[width=\textwidth]{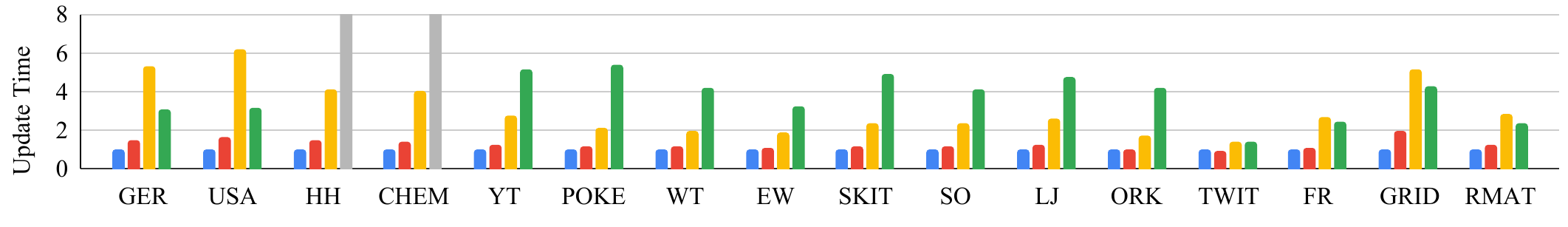}} \\
    \vspace{-0.5em}
    \subfloat{\includegraphics[width=\textwidth]{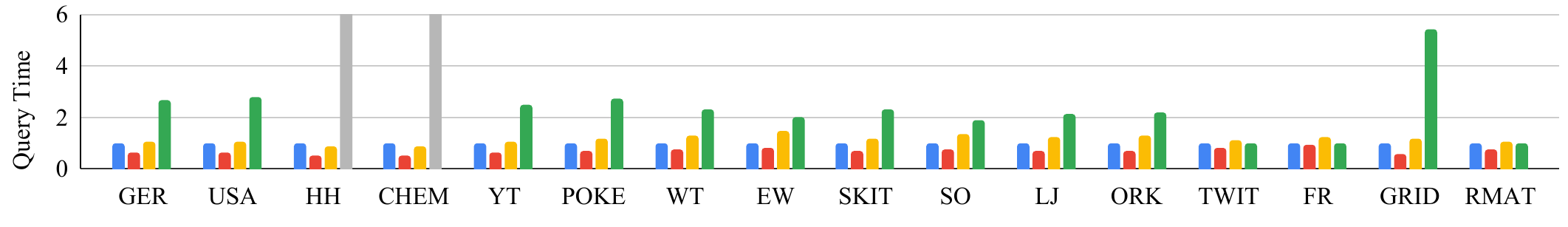}}
    \vspace{-0.3em}
    \caption{\revision{\small The results of our experiments for peak space usage (top), total update time (middle), and total query time (bottom) of each system on various inputs. All values are normalized to CF-LCA. The un-normalized results are presented in the appendix. A gray bar indicates that the time for D-Tree was over $100\times$ longer than that of CF-LCA on the same input or it terminated before completion with a timeout of 24 hours.}}
    \vspace{-1em}
    \label{fig:bar_plot_results}
\end{figure*}

\myparagraph{Memory Usage}
We start by investigating whether the theoretical guarantees on space provided by the CF algorithm translate into practical improvements in memory usage.
Our goal is to determine (1) whether memory usage is a limiting factor in the ability of HDT-based implementations to scale to extremely large graphs, (2) whether implementations based on the CF algorithm using linear space can overcome this obstacle, and (3) whether the memory usage of implementations based on the CF algorithm can perform better than existing state-of-the-art dynamic connectivity implementations using linear space.
Our results affirmatively answer all three of these questions.

\revision{The top of Figure~\ref{fig:bar_plot_results} shows the peak memory usage of each algorithm on the various graphs relative to the memory of CF-LCA. 
We report the raw (unnormalized) numbers for peak memory usage in the appendix.
}
Our main finding is that our CF implementations consistently require significantly less memory than HDT.
{\bf \emph{\revision{In particular, CF-Root uses 6.2$\times$--19.7$\times$ less memory than HDT and CF-LCA uses 5.7$\times$--17.5$\times$ less memory than HDT.}}}
%
%
\revision{We note that across all the graphs we tested, the memory efficiency of both CF-Root and CF-LCA are very similar, with CF-Root having a slight edge ($1.0\times$--$1.2\times$ less memory). This is because with root insertion there are slightly fewer nodes in the cluster forest.}
\revision{On the densest graph we tested (Orkut) our best CF implementation uses 26 bytes per edge while HDT uses 159 bytes per edge. 
On the sparsest graph we tested (Grid) the CF algorithm uses 294 bytes per edge while HDT uses 5,809 bytes per edge.}

\revision{These results clearly show the benefits of the CF algorithm over HDT in practice across a wide variety of real-world graphs with different characteristics.
For sparse graphs (e.g. USA Roads, StackOverflow, Grid) where $m$ is closer to $n$, the $O(n+m)$ space usage of CF should beat the $O(n\log n + m)$ space usage of HDT, and this asymptotic difference can be clearly seen in the results.
Interestingly, even for dense graphs (e.g. ENWiki, Orkut, Twitter) where $m > n \log_2 n$, CF still uses significantly less memory than HDT.
This is because in dynamic connectivity algorithms edges can be stored very space efficiently ($\approx 10$ bytes per edge) while the space overhead of the tree data structures used in virtually all dynamic connectivity algorithms scales heavily with the number of vertices (at least a hundred bytes per vertex per tree).
This supports the conclusion that on real-world graphs (which are typically quite sparse) dynamic connectivity algorithms that require storing a large number of trees over the vertices (like HDT) prevent scaling to very large graphs.
The extra $\log n$ factor on the space usage, which may easily be overlooked in theory, {\em has a massive impact in practice}.}

\revision{
For the third question, we compare our implementations of the CF approach with an existing implementation of D-Tree~\cite{chen2022spanning}, which uses linear space in theory but sacrifices worst-case theoretical guarantees on update time.
Although it is implemented in Python, D-Tree still uses significantly less memory than HDT in all of our experiments. 
This is again indicative of the large impact of using data structures that have linear total space in practice.}


\begin{figure}
    \centering
    \vspace{-1em}
    \includegraphics[width=\linewidth]{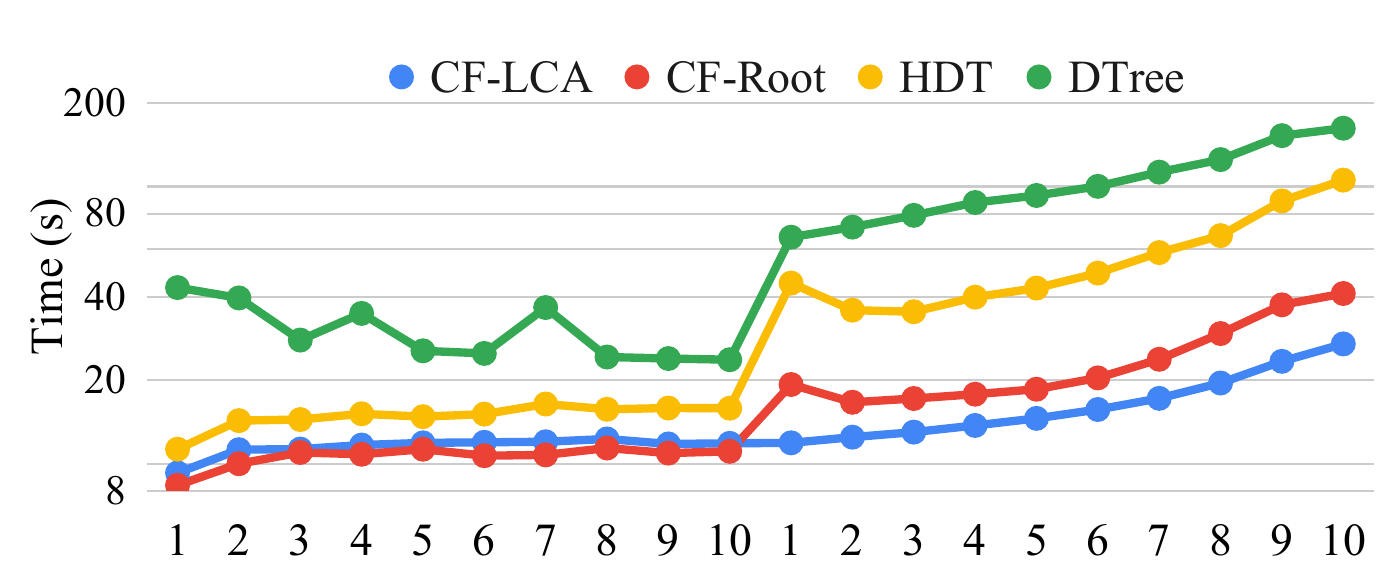}
    \caption{\revision{\small The total time in seconds for each stage of updates on the LiveJournal graph. The first 1--10 on the $x$-axis represent the 10 stages of insertion and the next 1--10 represent the 10 stages of deletion.}}
    \label{fig:stage_plot}
\end{figure}

\myparagraph{Update Times}
The goals of our next experiments are (1) to determine whether CF implementations can match or beat the performance of HDT for updates despite the increased implementation complexity of CF, (2) to compare the update performance of our CF implementations with the performance of existing state-of-the-art dynamic connectivity implementations, \revision{and (3) to investigate the impact of the optimizations from Section~\ref{sec:implementation} on update speed.}

\revision{The middle of Figure~\ref{fig:bar_plot_results} shows total update time of each algorithm on the various graphs relative to the time for CF-LCA. We report the raw (unnormalized) numbers for updates times in the appendix. Additionally, we include Figure~\ref{fig:stage_plot} to show an example of how the update performance varies throughout the sequence of updates (per-stage) on the LiveJournal graph.}

Our results show that our CF implementations can achieve \revision{significantly better performance than HDT for both insertions and deletions for all all graphs in our experiments.
{\bf \emph{CF-LCA performs updates 1.4$\times$--6.2$\times$ faster than HDT, and CF-Root performs updates 1.5$\times$--3.8$\times$ faster than HDT.}}
The improvements in update speed for the CF implementations can be attributed to the optimizations we described in Section~\ref{sec:implementation}.
For example, for non-tree edge insertions, both algorithms must simply add the edge to the edge set, and then traverse up to update $O(\log n)$ bitmaps; the flattened local tree optimization decreases the cost of such traversals in the CF implementation resulting in faster insertions.
For non-tree edge deletions, the non-tree edge tracking optimization enables our CF implementation to match the performance of HDT. 
With the optimization, both algorithms simply check the hash-table once, delete the edges from the edge set, and traverse up to update the bitmaps.
The flattened local tree optimization once again allows the CF implementations to outperform the HDT implementation due to the lower cost of traversing the hierarchy.
}

\revision{
CF-LCA performs better than CF-Root for total update time in most cases.
This is because while the LCA optimization slows down insertions slightly, deletions become much faster due to fewer edges needing to be inspected.
However, CF-Root benefits from being able to find replacement edges very quickly in dense, well connected graphs like Twitter.
}

We also compared our CF implementations with D-Tree, which sacrifices worst-case guarantees but obtains linear space usage.
\revision{The update performance of D-Tree varies greatly depending on the graph due to its heuristic nature, but the CF algorithms perform updates faster than D-Tree in all of our experiments. We note that updates for D-Tree can take much longer on certain graphs; e.g., D-Tree is $173\times$ slower than CF-LCA on the Household Lines graph. Since the algorithm does not have worst-case guarantees, we believe it is unlikely to be much faster on these bad cases even if it was implemented in a different language.}

\revision{
\myparagraph{Impact of Graph Properties}
Interestingly, graph size does not seem to play a major role in the relative performance of different dynamic connectivity algorithms, and other properties such as graph density, and graph diameter play a more important role.
For example, we observe a strong positive correlation between density and improved running time for HDT; e.g., despite Twitter being one of the largest graphs we evaluate on, its high density make HDT perform significantly better.
}

\myparagraph{Query Times}
\revision{The bottom of Figure~\ref{fig:bar_plot_results} shows the total query time of each algorithm on the various graphs normalized to the time for CF-LCA. We report the raw (unnormalized) numbers for query time in the appendix.}
Our main finding is that our CF implementations do not sacrifice any performance in terms of query speed compared to HDT.
This is expected, because in both types of algorithms, queries are answered by traversing to the root of the components of the two vertices which requires $O(\log n)$ memory reads and a single equality check.
\revision{Our results show that queries for our CF implementations are slightly faster than HDT.
We believe this is further indicative that the height of the cluster forest hierarchy is generally lower than that of the top level spanning forest data structure in HDT due to the flattened local tree optimization.}
\revision{The query speed for D-Tree varies greatly depending on the graph due to its heuristic nature, but is 
 almost always beaten by the CF algorithms and HDT in our experiments.}
\revision{Comparing CF-LCA and CF-Root, we find that CF-Root always has faster query times because the cluster forest hierarchy has smaller height when performing root insertions.}


%
%


\section{Conclusion}
%

\revision{This paper makes two significant contributions towards developing a scalable and practical batch-dynamic connectivity algorithm.
First, on the theoretical side, we give the first parallel batch-dynamic algorithm for maintaining the connected components of an undirected graph that is work-efficient, runs in poly-logarithmic depth, and only uses linear total space.
Second, we give the first empirical study of the cluster forest algorithm in the sequential setting, introduce new optimizations to improve its practicality, and demonstrate its superior performance and space-efficiency in practice.
Taken together, our results indicate that the CF algorithm is an excellent candidate for a practically scalable dynamic connected components algorithm with good theoretical guarantees.
}



%


%

\balance
\bibliographystyle{ACM-Reference-Format}
\bibliography{references}

\clearpage{}

\begin{appendix}
\section{Appendix}

\subsection{Properties of Blocked Cluster Forest}


\blockedmatchingsize*
\begin{proof}
    Suppose $M > 1$. Then there are at least two blocked edges $(u,v)$ and $(w,y)$ on four vertices in $M$. Since both $(u,v)$ and $(w,y)$ are blocked, 
    $n(u) + n(v) > 2^{i-1}$, and similarly
    ${n(w) + n(y) > 2^{i-1}}$. 
    But this means that $2^i=2\cdot 2^{i-1} < n(u) + n(v) + n(w) + n(y) \leq n(c)$, which means that $c$ violates Invariant~\ref{inv:size} at level $i$.
\end{proof}

\diameter*
\begin{proof}
Suppose $CG(c)$ has diameter $\mathsf{diam}(CG(c)) > 2$, and let ${P = p_1, p_2, p_3, p_4 \ldots}$ be a path achieving the diameter, i.e., the shortest path between $p_1$ and $p_4$ uses the edges $\{(p_1, p_2), (p_2, p_3), (p_3, p_4)\}$.
Since $CG(c)$ satisfies Invariant~\ref{inv:blocked}, each of $p_1, p_2, p_3, p_4$ are incident to a blocked edge.

The neighbor along a blocked edge for $p_1$ and $p_4$ cannot be the same, since otherwise the shortest path between $p_1$ and $p_4$ would have length $2$, contradicting that the diameter is greater than $2$. Let $u, v$ be endpoints of the blocked edges incident to $p_1, p_4$ respectively.
Since $u \neq v$, $\{(p_1, u), (p_4, v)\}$ form a blocked matching of size $> 1$, which by Lemma~\ref{lem:blocked_matching_size} means that $CG(c)$ violates Invariant~\ref{inv:size}.
\end{proof}

\radius*
\begin{proof}
Consider adding the blocked edges to the graph incrementally.
The first edge could be anywhere. Call that edge $e_1=(u,v_1)$.
When adding a second edge it must be incident to the first edge. If it wasn't, the 4 endpoints of the two blocked edges together would violate the size invariant for $c$. Without loss of generality, say the second edge is incident to $u$, it is $e_2=(u,v_2)$.
When adding a third edge it will either be (1) $e_3=(v_1,v_2)$ forming a triangle or (2) $e_3=(u,v_3)$, an edge incident to the common node of the first two blocked edges.

In case (1) where we have a triangle, we cannot add anymore blocked edges without there being two blocked edges with four distinct vertices which would violate the size constraint for the level $i$ parent cluster. Therefore we are done and the lemma holds using any of the 3 vertices of the triangle as the center.

In case (2) where it does not form a triangle it must be incident to $u$. If it was not incident to any of $e_1$ or $e_2$, there would be 2 blocked edges with four distinct vertices which we cannot have. If it was incident to only $v_1$ or only $v_2$, then $e_3$ and $e_1$ or $e_2$ respectively would be form two blocked edges with four distinct vertices which we cannot have. Therefore $e_3$ must be incident to $u$. Any edges added after must be incident to $u$ also for the same reasoning. Therefore the lemma holds using $u$ as the center.
\end{proof}

\heavycenter*
\begin{proof}
Suppose that the center vertex of $CG(c)$ is not the largest cluster in $CG(c)$.
Let $u$ be the center vertex of $CG(c)$ and $S=\{v_1, v_2, \hdots, v_{k-1}\}$ be the $k-1$ remaining vertices. Lemma~\ref{lem:radius} tells us that $\forall v \in S$ there is a blocked edge $(u,v)$. By definition of a blocked edge this means that $n(u)+n(v) > \sizeconst{i-1}$, so $n(v) > \sizeconst{i-1} - n(u)$. Also note that $n(u) + \sum n(v) = n(c) \leq \sizeconst{i}$ by Invariant~\ref{inv:size}, and thus
$$n(u)+(k-1)(\sizeconst{i-1} - n(u)) < n(u) + \sum n(v) \leq \sizeconst{i},$$
that is
$$(2-k)n(u)+(k-1)\cdot \sizeconst{i-1} < \sizeconst{i}.$$
Consequently, since $k\geq 4$,
$$n(u) > \frac{(k-1)\cdot \sizeconst{i-1} - \sizeconst{i}}{(k-2)} = \frac{(k-3)\cdot \sizeconst{i-1}}{(k-2)}\geq \frac{\sizeconst{i-1}}{2}.$$


Let $n(u)=\sizeconst{i-1}/2 + c_0$ where $c_0 > 0$ is some constant. We will assume that $u$ is not the largest cluster and $\exists v \in S$ such that $n(v) \geq \sizeconst{i-1}/2 + c_0$. Then the remaining $k-2$ clusters have total size $\leq \sizeconst{i} - \sizeconst{i-1} - 2c_0 = \sizeconst{i-1} - 2c_0$. So at least one of these ($v^{*}$) must have size $\leq \sizeconst{i-1}/2 - c_0$. It follows that the blocked edge $(u,v^{*})$ could be pushed down since $n(u)+n(v^{*}) \leq \sizeconst{i-1}/2+c_0 + \sizeconst{i-1}/2-c_0 = \sizeconst{i-1}$. This is a contradiction.
\end{proof}


\unblockedisolated*
\begin{proof}
    Since $e$ is unblocked we know that $n(C_1)+n(C_2) \leq \sizeconst{i-1}$. Assume that $C_1$ is not an isolated cluster (if it is the proof is done). 
    We will show that $C_2$ must be isolated, thus completing the proof. 
    The cluster graph of $C_1$ contains at least one blocked edge between some level $i-2$ clusters $U$ and $V$. Since the edge is blocked, $n(C_1) \geq n(U)+n(V) > \sizeconst{i-2}$. Plugging this into the inequality from the beginning of this proof gives $n(C_2) < \sizeconst{i-2}$. Since $C_2$ does not have enough total size to have a blocked edge in its cluster graph, it must be isolated.
\end{proof}

\pushuntilblocked*
\begin{proof}
    If the edge is a self-loop, it can be pushed down until it is no longer a self loop without changing the structure of the cluster forest \cf{}. 
    When an unblocked level $(i+1)$ edge between two distinct level $i$ clusters is pushed down, it is introduced as an new level $i$ edge between two previously disjoint cluster graphs. 
    The only clusters that can violate Invariant~\ref{inv:blocked} are the two level $(i-1)$ clusters containing each endpoint of the edge.
    Lemma~\ref{lem:unblocked_isolated} proves that one of the level $i$ clusters must be isolated (i.e., the cluster graph contains only a single cluster).
    If the edge is now blocked, the process ends and the invariant is maintained at this level.
    If they were both isolated clusters and the edge is still unblocked, it must be pushed down again, merging the two clusters into one and maintaining the invariant at this level.
    Else if they are not both single clusters and the edge is still unblocked, the single cluster is merged into the the other endpoint which must already be incident to a blocked edge; therefore, the invariant is maintained at this level.
\end{proof}

\deletioncorrect*
\begin{proof}
    Deletion only splits clusters or merges clusters that are connected by an unblocked edge, thus preserving the size invariant. Whenever an edge is pushed down during the deletion it is pushed down as far as possible, so by Lemma~\ref{lem:push_down_until_blocked} pushing down edges will not cause the blocked invariant to be violated. 
    The only other times a cluster can violate the the blocked invariant are (1) if it was incident to the deleted edge, (2) it was an ancestor of one of those clusters, or (3) its only blocked edge was to a center that lost mass from splitting. 
    Since the connectivity search algorithm proceeds until the search reaches a blocked edge or there are no more edges and it is a single cluster, cases (1) and (2) are prevented. The third case is prevented because the algorithm restores the blocked invariant whenever a cluster is split.
\end{proof}

\subsection{Useful Lemmas}
The following lemmas are useful for analyzing the work bounds of our
parallel algorithms. Proofs are provided by Acar et al.~\cite{acar2019parallel}.
\begin{restatable}{lemma}{componentbounds}\label{lem:component_bounds}
	Let $n_1, n_2, ..., n_c$ and $k_1, k_2, ..., k_c$ be sequences of non-negative integers such that $\sum k_i = k$, and $\sum n_i = n$. Then
	\begin{equation}
		\sum_{i=1}^c k_i \log\left(1 + \frac{n_i}{k_i}\right) \leq k \log\left(1 + \frac{n}{k}\right).
	\end{equation}
\end{restatable}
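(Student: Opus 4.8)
The plan is to recognize the summand $k_i \log(1 + n_i/k_i)$ as (a rescaling of) the \emph{perspective function} of the concave map $t \mapsto \log(1+t)$, and to exploit the fact that such perspective functions are jointly concave and positively homogeneous of degree one, which together force \emph{superadditivity}; the general $c$-term bound then follows by a trivial induction.

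Concretely, I would first reduce to the case $c = 2$. Write $\phi(a,b) := a\log(1 + b/a)$ for $a \ge 1$, $b \ge 0$ (and $\phi(0,b) := 0$, consistent with $\lim_{a\to 0^+} a\log(1+b/a) = 0$). It suffices to show the two-term superadditivity $\phi(a_1,b_1) + \phi(a_2,b_2) \le \phi(a_1+a_2,\, b_1+b_2)$, since then a straightforward induction on $c$ gives
\[
\sum_{i=1}^{c} \phi(k_i,n_i) \;\le\; \phi(k_1,n_1) + \phi\Big(\sum_{i\ge2} k_i,\ \sum_{i\ge2} n_i\Big) \;\le\; \phi(k,n).
\]
Indices with $k_i = 0$ contribute nothing to the left-hand side, and since $\phi$ is nondecreasing in its second argument, folding their $n_i$ into any surviving group only increases the left-hand side; hence one may assume every $k_i \ge 1$, so all perspective arguments are well defined.

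For the two-term inequality, the cleanest route avoids any structural facts and reduces directly to Jensen's inequality: setting $\lambda = a_1/(a_1+a_2) \in (0,1)$ and $r_j = b_j/a_j$, the claim, after dividing through by $a_1 + a_2$ and using $\tfrac{b_1+b_2}{a_1+a_2} = \lambda r_1 + (1-\lambda) r_2$, is exactly
\[
\log\!\Big(1 + \lambda r_1 + (1-\lambda) r_2\Big) \;\ge\; \lambda \log(1 + r_1) + (1-\lambda)\log(1 + r_2),
\]
which is Jensen's inequality for the concave function $g(t) = \log(1+t)$ on $[0,\infty)$ with weights $\lambda, 1-\lambda$ at the points $r_1, r_2$. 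Equivalently, one can argue abstractly: $\phi$ is the perspective of the concave $g$, hence jointly concave, and is positively homogeneous of degree one, so $\phi(x) + \phi(y) \le 2\phi\big(\tfrac{x+y}{2}\big) = \phi(x+y)$. Since $g$ is concave in any logarithm base, the bound holds verbatim for the $\log$ in the statement.

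I do not anticipate a genuine obstacle: the substantive content is just concavity of $t \mapsto \log(1+t)$ together with the degree-one homogeneity of the perspective construction. The only points needing care are the degenerate $k_i = 0$ convention (handled by the monotonicity remark above) and stating the Jensen reduction with the correct weights, both routine.
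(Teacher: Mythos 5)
Your proof is correct, and it is essentially the standard concavity argument: the paper itself does not prove this lemma but cites Acar et al., where the bound likewise follows from concavity of $t \mapsto \log(1+t)$ via Jensen's inequality. The only simplification worth noting is that your two-term superadditivity plus induction can be collapsed into a single application of Jensen with weights $k_i/k$ at the points $n_i/k_i$, which yields $\sum_i \frac{k_i}{k}\log\left(1+\frac{n_i}{k_i}\right) \leq \log\left(1+\frac{n}{k}\right)$ directly (with the same convention and monotonicity remark for indices with $k_i=0$).
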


\begin{restatable}{lemma}{batchboundisrootdominated}\label{lem:batch_bound_is_root_dominated}
	For any non-negative integers $n$ and $r$,
	\begin{equation}
	\sum_{w=0}^{r} 2^{w} \log \left( 1 + \frac{n}{2^{w}} \right) = O\left( 2^{r}\log\left(1 + \frac{n}{2^{r}} \right) \right).
	\end{equation}
\end{restatable}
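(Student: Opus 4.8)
The plan is to prove the bound by showing that the summand is (i) non-decreasing in $w$, so that the last term $2^{r}\log(1+n/2^{r})$ is the largest, and (ii) grows by at least a fixed constant factor from one value of $w$ to the next, so that the whole sum is a geometric series dominated by its final term. Write $f(w)=2^{w}\log(1+n/2^{w})$ and, to reason cleanly about monotonicity, introduce the continuous relaxation $\phi(u)=u\log(1+n/u)$ for real $u>0$, so that $f(w)=\phi(2^{w})$. If $n=0$ both sides are $0$, so assume $n\ge 1$; I would also use that $2^{r}\le n$, which is the regime in which the lemma is applied (the geometric scale $2^{w}$ never exceeds the size of the underlying set), and the argument genuinely needs it.

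First I would establish monotonicity. Differentiating, $\phi'(u)=\log(1+n/u)-\tfrac{1}{\ln 2}\cdot\tfrac{n/u}{1+n/u}$, and applying the elementary inequality $\ln(1+y)\ge y/(1+y)$ for $y\ge 0$ with $y=n/u$ gives $\phi'(u)\ge 0$. Hence $\phi$ is non-decreasing on $(0,\infty)$, so $f(0)\le f(1)\le\cdots\le f(r)$.

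Next I would bound the ratio of consecutive terms from below. For $0\le w\le r-1$, set $N=n/2^{w+1}$; since $2^{w+1}\le 2^{r}\le n$ we have $N\ge 1$, and
\[
\frac{f(w+1)}{f(w)}=\frac{2^{w+1}\log(1+N)}{2^{w}\log(1+2N)}=2\cdot\frac{\log(1+N)}{\log(1+2N)}.
\]
A short derivative computation shows $x\mapsto \log(1+x)/\log(1+2x)$ is increasing on $[1,\infty)$, so for $N\ge 1$ this ratio is at least its value at $N=1$, namely $\log 2/\log 3$. Therefore $f(w+1)/f(w)\ge 2\log 2/\log 3=:\beta$ with $\beta>1$. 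Combining this with monotonicity yields $f(w)\le \beta^{-(r-w)}f(r)$ for all $0\le w\le r$, hence
\[
\sum_{w=0}^{r}f(w)\le f(r)\sum_{j\ge 0}\beta^{-j}=\frac{\beta}{\beta-1}\,f(r)=O\!\left(2^{r}\log\!\left(1+\frac{n}{2^{r}}\right)\right),
\]
which is the claim.

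The main obstacle is the constant-factor growth estimate: it is false without the restriction $N\ge 1$ (when $N$ is close to $0$ the two logarithms nearly cancel the leading factor of $2$ and consecutive terms become almost equal), so the argument must invoke $2^{r}\le n$; once that is in hand, everything reduces to a single scalar inequality at $N=1$. As an alternative route that avoids the monotonicity-of-a-ratio step, I would note that one can instead bound $f(w)\le 2^{w}(1+\log n-w)$ using $1+x\le 2x$ for $x\ge 1$, evaluate the arithmetic--geometric sum $\sum_{w=0}^{r}2^{w}(1+\log n-w)=O\!\bigl(2^{r}(1+\log n-r)\bigr)$ in closed form, and compare it with $2^{r}\log(1+n/2^{r})=\Theta\!\bigl(2^{r}(1+\log n-r)\bigr)$; this gives the same conclusion, but I would keep the geometric-series version as the primary proof since it is shorter and more transparent.
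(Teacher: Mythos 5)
Your proof is correct as a piece of mathematics, and since the paper gives no proof of this lemma at all (it is quoted from Acar et al.\ with a pointer to that paper), there is no internal argument to compare against; the standard proof is exactly the geometric-domination argument you give. Your individual steps check out: $\phi'(u)=\log(1+n/u)-\frac{1}{\ln 2}\cdot\frac{n/u}{1+n/u}\ge 0$ follows from $\ln(1+y)\ge y/(1+y)$; the consecutive-term ratio is $2\log(1+N)/\log(1+2N)$ with $N=n/2^{w+1}$, and the map $x\mapsto\log(1+x)/\log(1+2x)$ is indeed increasing (its derivative has the sign of $(1+2x)\ln(1+2x)-2(1+x)\ln(1+x)$, which vanishes at $x=0$ and has derivative $2\ln\frac{1+2x}{1+x}\ge 0$), so for $N\ge 1$ the ratio is at least $\beta=2\log 2/\log 3>1$; the geometric series then yields the bound with constant $\beta/(\beta-1)$. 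Your alternative arithmetic--geometric route is also valid under the same hypothesis.

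The one substantive point is the hypothesis $2^{r}\le n$ you add, and you are right that something of this kind is genuinely needed: as literally stated, for all non-negative integers $n$ and $r$, the lemma is false. Take $n=1$ and $r$ large: every summand $2^{w}\log(1+2^{-w})$ is $\Theta(1)$, so the left side is $\Theta(r)$ while the right side is $\Theta(1)$, and no uniform constant can work. So this is a defect of the statement as quoted (it is only invoked in regimes where the doubling parameter stays within a constant factor of the underlying set size), not a gap in your argument. If you want to cover the slightly weaker guarantee $2^{r}\le cn$ for a constant $c$ that the doubling searches actually provide, note that the at most $O(\log c)$ terms with $2^{w}>n$ are each $O(n)=O\bigl(2^{r}\log(1+n/2^{r})\bigr)$, and your monotonicity step together with the geometric argument up to $w=\lfloor\log_2 n\rfloor$ handles the remaining terms, so the conclusion persists with a slightly larger constant.
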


\begin{restatable}{lemma}{batchboundisincreasing}\label{lem:batch_bound_is_increasing}
	For any $n \geq 1$, the function $x \log\left(1 + \frac{n}{x}\right)$ is strictly increasing with respect to $x$ for $x \geq 1$.
\end{restatable}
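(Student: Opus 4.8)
The plan is to prove strict monotonicity by showing that the derivative of $f(x) := x \log\!\left(1 + \tfrac{n}{x}\right)$ is strictly positive on $[1,\infty)$; in fact the argument will work for every $x > 0$ and every fixed $n > 0$, so the hypotheses $n \ge 1$ and $x \ge 1$ are not essential. Since changing the base of the logarithm only multiplies $f$ by a positive constant, I would carry out the computation with the natural logarithm.

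First I would compute the derivative. Using the product and chain rules together with $\frac{d}{dx}\!\left(\tfrac{n}{x}\right) = -\tfrac{n}{x^2}$,
\[
  f'(x) \;=\; \log\!\left(1 + \frac{n}{x}\right) \;+\; x \cdot \frac{-n/x^2}{1 + n/x} \;=\; \log\!\left(1 + \frac{n}{x}\right) - \frac{n}{x+n}.
\]
Substituting $t = n/x > 0$ and noting $x + n = x(1+t)$, the inequality $f'(x) > 0$ becomes equivalent to the elementary fact
\[
  \log(1+t) \;>\; \frac{t}{1+t} \qquad \text{for all } t > 0.
\]

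To establish this last inequality I would use the integral representation $\log(1+t) = \int_0^t \tfrac{ds}{1+s}$. Since $s \mapsto \tfrac{1}{1+s}$ is strictly decreasing, $\tfrac{1}{1+s} > \tfrac{1}{1+t}$ for every $s \in [0,t)$; integrating this strict pointwise inequality over the interval $[0,t]$ of positive length gives $\log(1+t) > \int_0^t \tfrac{ds}{1+t} = \tfrac{t}{1+t}$. Hence $f'(x) > 0$ for all $x > 0$, so $f$ is strictly increasing on $[1,\infty)$, which is the claim.

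This lemma is a routine calculus fact, so there is no genuine obstacle; the only points requiring a little care are that the base of the logarithm is immaterial (it contributes a positive multiplicative constant) and that the auxiliary inequality $\log(1+t) > t/(1+t)$ is \emph{strict}, which the integral-comparison argument delivers because the pointwise comparison is strict on a set of positive measure. If one prefers to avoid differentiation, the substitution $u = 1/x$ reduces the statement to showing that $\tfrac{\log(1+nu)}{u}$ is strictly decreasing on $(0,1]$, which again unwinds to exactly the same inequality.
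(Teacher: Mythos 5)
Your proof is correct: the derivative computation $f'(x) = \log\left(1 + \frac{n}{x}\right) - \frac{n}{x+n}$, the substitution $t = n/x$, and the strict inequality $\log(1+t) > \frac{t}{1+t}$ for $t > 0$ established by integral comparison are all sound, and your remarks that the base of the logarithm and the restrictions $n \geq 1$, $x \geq 1$ are inessential are accurate. Note that the paper does not supply its own proof of this lemma—it defers to the cited work of Acar et al.—so your self-contained calculus argument is precisely the routine verification being omitted, and it is the standard way to prove it.
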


\subsection{Modified Local Trees}
\label{sec:modified_local_trees}
We use a modified version of the local trees used in ~\cite{wulff2013faster} that also supports parallel operations. The interface is defined as follows:
\begin{itemize}
    \item \textbf{BatchInsert}$(c_1,...,c_k)$ takes a list of $k$ new leaf nodes and adds them to the local tree.
    \item \textbf{BatchDelete}$(c_1,...,c_k)$ takes a list of $k$ leaf nodes in $T$ and removes them from the local tree.
    \item \textbf{Smallest}$(C)$ returns the smallest element in the local tree.
    \item \textbf{GetMaximalPrefix}$(s)$ returns a maximal set of the elements in the local tree such that their total size is less than or equal to $s$ and no element excluded from the set has size less than any included element (e.g. the set returned is a prefix of the elements sorted by increasing size).
    \item \textbf{GetMaximalPrefixOfSubset}$(C,s)$ returns a maximal set of the elements in the subset $C$ of the elements in the local tree such that their total size is less than or equal to $s$ and no element excluded from the set has size less than any included element.
    \item \textbf{Mark}$(c)$ marks an element to indicate something about it.
    \item \textbf{GetUnmarkedElement}$()$ return an element in $T$ that has not been marked.
\end{itemize}

The tree is structured as follows. The elements are divided into size classes where two elements $x$ and $y$ are in the same size class if $\lfloor \log n(x) \rfloor = \lfloor \log n(y) \rfloor$. For each size class, a weight-balanced tree is maintained for all of the elements in that class. The weight of any leaf in the weight-balanced tree is just $1$. The key for each element in the weight-balanced tree is its size. The weight-balanced trees are also augmented with the total size of elements in a subtree. Then the usual rank trees and local tree are applied, treating each weight-balanced tree $t$ as a leaf with rank $\lfloor \log n(t) \rfloor$ where $n(t)$ is the total size of all the elements in $t$.

\begin{lemma}
    The depth of any leaf node in a cluster forest using modified local trees is $O(\log n)$.
\end{lemma}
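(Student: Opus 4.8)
The plan is to reduce the claimed global depth bound to a per–local-tree ``scoping'' estimate and then telescope it along a root-to-leaf path. Write $\rho(v) := \lfloor \log_2 n(v)\rfloor$, so that $2^{\rho(v)}\le n(v)<2^{\rho(v)+1}$, and observe that $\rho$ is nondecreasing along any leaf-to-root path of $\cf$ because cluster sizes only grow toward the root. The key intermediate claim I would establish is: for a cluster-forest node $v$ and a child $c$ of $v$, the depth of the leaf $c$ inside the modified local tree $L(v)$ is $O(\rho(v)-\rho(c)+1)$. Granting this claim, the theorem follows quickly: on the path from an original vertex $u$ (with $\rho(u)=0$) up to its root, passing through cluster-forest nodes $u=w_0,w_1,\dots,w_p$, the total depth in the implemented structure is $\sum_{j=1}^{p}\bigl(1 + (\text{depth of }w_{j-1}\text{ in }L(w_j))\bigr) = \sum_{j=1}^p O(\rho(w_j)-\rho(w_{j-1})+1)$, which telescopes to $O(\rho(w_p)-\rho(w_0)+p)=O(\log n)$, since $p$ is at most the number of cluster-forest levels, $\lmax+O(1)=O(\log n)$ (path compression only removes levels).

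To prove the intermediate claim I would split the root-of-$L(v)$-to-$c$ path into its three structural segments: (i) along the right spine down to the rank tree that holds $c$; (ii) inside that rank tree down to the weight-balanced tree $t$ of $c$'s size class; (iii) inside $t$ down to $c$. For (iii) the crucial observation is that every element of $t$ has rank exactly $\rho(c)$, hence size at least $2^{\rho(c)}$, while the total size of $t$ is at most $n(v)<2^{\rho(v)+1}$, so $t$ contains fewer than $2^{\rho(v)-\rho(c)+1}$ elements and a weight-balanced tree over that many unit-weight leaves has height $O(\rho(v)-\rho(c)+1)$. For (ii), since pairing two rank trees at least doubles their total size and $n(v)<2^{\rho(v)+1}$, no rank tree in $L(v)$ has rank exceeding $\rho(v)$; and because $t$ enters the pairing process as a rank-$\rho(t)$ object with $\rho(t)\ge\rho(c)$, its depth inside a rank-$k$ rank tree is at most $k-\rho(t)$. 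For (i), the surviving rank trees have pairwise-distinct ranks lying in $[0,\rho(v)]$ and are strung along the spine in decreasing order of rank from the root, so the rank-$k$ one sits at spine-depth at most $\rho(v)-k+1$. Adding (i)$+$(ii) gives $(\rho(v)-k+1)+(k-\rho(t))=\rho(v)+1-\rho(t)\le \rho(v)+1-\rho(c)$, and adding (iii) keeps the total $O(\rho(v)-\rho(c)+1)$.

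I expect the main obstacle to be segment (iii): controlling the extra weight-balanced-tree layer that our modification introduces on top of the local trees of~\cite{wulff2013faster}. The bound there rests entirely on the fact that a size class groups clusters whose sizes are within a factor of two, which is precisely what makes $|t|$ — and hence the height of its weight-balanced tree — small relative to $\rho(v)-\rho(c)$; I would want to state this carefully so that the telescoped sum is not inflated by a spurious additive $\Theta(\log n)$ per level (which would degrade the final bound to $O(\log^2 n)$). A secondary point to pin down is the spine ordering: the argument genuinely needs rank trees arranged by \emph{decreasing} rank toward the root of $L(v)$, since otherwise a low-rank (small) rank tree could sit $\Omega(\log n)$ deep on the spine and again break the per-level scoping estimate. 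Once these are settled, the remainder is the routine telescoping above, together with the observations that $\rho(u)=0$ for leaves and that $\cf$ has only $O(\log n)$ levels.
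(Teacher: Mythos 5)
Your proof is correct and follows essentially the same route as the paper: establish the per-level scoping bound that a child $c$ of $v$ has depth $O(\log(n(v)/n(c))+1)$ in the modified local tree $L(v)$, then telescope over the $O(\log n)$ cluster-forest levels. The only difference is cosmetic: the paper justifies the weight-balanced-tree layer by noting that sibling subtree sizes stay within a constant factor (since a size class spans at most a factor of two) and then invokes the existing local-tree argument, whereas you re-derive the same per-level bound explicitly via the spine/rank-tree/weight-balanced-tree decomposition.
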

\begin{proof}
    At every node in the weight-balanced trees, the total size in the left and right subtree is still off by at most a constant factor because every element in the same weight class has at most twice the size of any other element. Then for the local trees the same argument as before applies. For a child $v$ of a node $u$ in the cluster forest, the depth of $v$ in the local tree is at most $\log(n(u)/n(v)) + 1$. Since there are $\log n$ levels in the cluster forest, a telescoping sums argument implies that any leaf of the local tree forest has depth $O(\log n)$.
\end{proof}

An insertion or deletion into a modified local tree works as follows. First insert or the delete the element from the weight-balanced tree for its size class. Then the total size in its tree will have changed. Since each weight-balanced tree is treated as a leaf in the rank trees and local tree, just delete and re-insert the leaf for that tree with its new total size as its rank.

For batch insertion or deletion first semisort the batch by weight class. Then for each class, do a batch insertion or deletion into the weight-balanced tree for that class. Finally, the entire rank trees and local tree are removed and rebuilt from scratch.

\begin{lemma}
    Inserting or deleting a batch of $k$ elements into a modified local tree can be done in $O(\log n)$ depth and $O(k\log(1+n/k))$ work where $n$ is the total number of elements in tree.
\end{lemma}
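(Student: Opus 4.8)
The plan is to charge the cost of the three-phase update procedure sketched above: (i) partition the batch of $k$ elements by size class; (ii) in parallel over the affected size classes, apply a batch insertion or deletion to that class's weight-balanced tree; and (iii) discard and rebuild the rank-tree and local-tree overlay that sits on top of the size-class trees. Throughout I take $n$ to denote the larger of the number of elements in the tree before and after the operation, so that $k \le n$; in the lone corner case where $k$ exceeds the current number of elements (possible only for insertion) we simply rebuild the entire structure from its $O(k)$ elements in $O(k)$ work and $O(\log k) = O(\log n)$ depth, which already meets the claimed bound because then $\log(1 + n/k) = \Theta(1)$.

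For phase (i), the sort key of an element $x$ is its size class $\lfloor \log n(x) \rfloor \in \{0, 1, \dots, O(\log n)\}$, so an integer semisort groups the batch into per-class subbatches in $O(k)$ work and $O(\log k) = O(\log n)$ depth with high probability; since $k \le n$ we have $O(k) \le O(k\log(1+n/k))$. For phase (ii), suppose size class $j$ currently holds $n_j$ elements and receives $k_j$ of the updates, so $\sum_j n_j \le n$ and $\sum_j k_j = k$. Because the sizes of the elements of a single class differ by at most a factor of two, that class's tree is weight-balanced with height $O(\log n_j)$ and is an ordinary joinable balanced search tree keyed by element size and augmented with subtree size and the per-level incidence bitmap; hence by the batch-parallel bounds for such trees~\cite{blelloch2022joinable}, inserting or deleting its $k_j$ elements costs $O(k_j \log(1 + n_j/k_j))$ work and $O(\log n)$ depth, with the augmented values recomputed along the way. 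Processing all $O(\log n)$ classes in parallel keeps the depth of this phase at $O(\log n)$, and Lemma~\ref{lem:component_bounds}, applied with total size $n' := \sum_j n_j \le n$, bounds its total work by $\sum_j k_j \log(1 + n_j/k_j) \le k \log(1 + n'/k) \le k\log(1 + n/k)$.

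For phase (iii), after phase (ii) only the total sizes---and hence the ranks $\lfloor \log n(t_j) \rfloor$---of the at most $O(\log n)$ size-class trees can have changed. We therefore rebuild the overlay from scratch over these $O(\log n)$ trees, treated as its leaves: repeatedly pair equal-rank trees, connect the surviving distinct-rank roots along a spine, and recompute the subtree-size and bitmap augmentations bottom-up on the $O(\log n)$ overlay nodes. This costs $O(\log n)$ work and $O(\log n)$ depth, dominated by the earlier phases. Combining the three phases, the operation runs in $O(\log n)$ depth and $O(k) + O(k\log(1+n/k)) + O(\log n) = O(k\log(1+n/k))$ work, the rebuild term being absorbed since $k\log(1+n/k) = \Omega(\log n)$ for every $k \ge 1$ (the case $k=0$ is trivial).

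The step I expect to be the crux is phase (ii): one must verify that a batch update of a single size class really attains $O(k_j \log(1+n_j/k_j))$ work and $O(\log n)$ depth. This rests on two points that should be checked explicitly---first, that the ``within a factor of two'' weight property makes the size-class tree behave like a balanced search tree over its $n_j$ keys, so that weight-balance alone guarantees height $O(\log n_j)$ and the standard split/join work analysis applies; and second, that the batch arrives in (or can be put into, within the work budget) the form expected by the chosen batch-parallel balanced-tree primitive, and that the required augmentation---a commutative subtree aggregate together with a bitwise-OR bitmap---is of the kind those primitives maintain. Granted that, phases (i) and (iii) and the work summation via Lemma~\ref{lem:component_bounds} are routine.
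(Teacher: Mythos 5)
Your proposal is correct and follows essentially the same route as the paper's proof: semisort the batch by size class, apply batch-parallel weight-balanced-tree updates per class and sum the work via Lemma~\ref{lem:component_bounds}, then rebuild the rank-tree/local-tree overlay over the $O(\log n)$ size classes in $O(\log n)$ time, absorbing that term since $k\log(1+n/k) = \Omega(\log n)$ (the paper cites Lemma~\ref{lem:batch_bound_is_increasing} for exactly this). Your extra care about corner cases and the validity of the per-class batch-update bounds goes slightly beyond the paper's brief argument but does not change the approach.
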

\begin{proof}
    The semisort takes $O(k)$ work and $O(\log n)$ depth. Batch insertions and deletions into the weight-balanced tree for a size class $c$ can be done in $O(\log n_c)$ depth and $O(k_c\log(1+n_c/k_c))$ work where $n_c$ is the number of elements in the weight-balanced tree for size class $c$ and $k_c$ is the number of elements being inserted into this tree. By applying Lemma~\ref{lem:component_bounds} the total work here is $O(k\log(1+n/k))$ and the depth is $O(\log n)$. Then deleting and rebuilding the rank trees and local tree can be done in $O(\log n)$ time sequentially because there are at most $O(\log n)$ size classes. This is less than or equal to $O(k\log(1+n/k))$ work by Lemma~\ref{lem:batch_bound_is_increasing}.
\end{proof}

To get the smallest element in the tree one can simply find the weight-balanced tree for the minimum size class in $O(\log n)$ time, and then get the smallest key element in that tree, also in $O(\log n)$ time.

The $\mathsf{GetMaximalPrefix}$ procedure is implemented as follows. First an exhaustive traversal of the rank trees and local tree is to find the roots of the weight-balanced trees for every size class. Then they are sorted by the size class they represent. The total size augmentation can be used to get a prefix of the trees with total size $\leq s$. In each of those return all of the elements. Then in the next smallest size class use the total size augmentation to return a prefix of the elements in this tree with total size less than or equal to the remaining amount left.

\begin{lemma}\label{lem:wb_get_prefix}
    The $\mathsf{GetMaximalPrefix}$ operation on a modified local tree with $n$ elements can be done in $O(\log n)$ depth and $O(k\log(1+n/k))$ work where $k$ is the number of elements returned.
\end{lemma}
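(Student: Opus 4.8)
The plan is to walk through the five-phase implementation sketched just above—traversing the overhead structure, ordering the size classes, computing the prefix over size classes, emitting the fully-included classes, and emitting a size-prefix of the boundary class—bounding the work and depth of each phase and then combining. First I would recall the structural facts already available: the elements sit in $O(\log n)$ size classes, each a weight-balanced tree keyed by element size and augmented with per-subtree total size; and the rank trees plus local-tree spine above these form a binary tree of depth $O(\log n)$ with $O(\log n)$ leaves (the weight-balanced-tree roots), hence $O(\log n)$ internal nodes. An exhaustive parallel traversal of this overhead structure collects all weight-balanced-tree roots in $O(\log n)$ work and $O(\log n)$ depth. Bucketing these $O(\log n)$ roots by their size-class index (a value in $\{0,\dots,O(\log n)\}$) and running a prefix scan over their augmented total sizes to find the maximal run of classes whose combined total is $\le s$—call the first class that does not fit the \emph{boundary class}—is another $O(\log n)$ work and $O(\log n)$ depth. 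Correctness is immediate from the interval structure: every element of a lower class is smaller than every element of a higher class, so a maximal prefix in size order is exactly all elements of the fully-fitting classes together with a maximal size-prefix of the boundary class, which is what the procedure outputs; the edge cases $k=0$ (when $s$ is below the smallest element) and "no boundary class" (when $s$ is at least the total size) are handled trivially.

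Next I would bound the two emission phases. For each fully-included size class $j$ with $n_j$ elements, outputting all of them is a parallel traversal of its weight-balanced tree at cost $O(n_j)$ work and $O(\log n)$ depth; running these in parallel across the fully-included classes gives depth $O(\log n)$ and total work $O(\sum_j n_j) = O(k)$, since every emitted element is counted by $k$. For the boundary class I would use the standard weighted-select on the augmented weight-balanced tree: starting from its root with the leftover budget $s$ minus the size of all fully-included classes, at each node compare the remaining budget against the left subtree's augmented total size to decide whether to include the whole left subtree (recursing right with reduced budget) or to recurse left, accumulating the $O(\log n)$ fully-included subtrees that hang off the search path, and then emit all their elements in parallel. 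The search path has length $O(\log n)$ and the number of emitted elements is at most $k$, so this phase is $O(\log n + k)$ work and $O(\log n)$ depth; because the tree is keyed by size, the in-order range it returns is a genuine maximal size-prefix of the boundary class (ties in size may be split arbitrarily, which the statement permits).

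Finally I would combine the bounds. The depth is $O(\log n)$ throughout, and the total work is $O(\log n) + O(k) + O(\log n + k) = O(k + \log n)$. To recover the claimed form, note that for $k \ge 1$ we have $k \le n$ and hence $\log(1 + n/k) \ge 1$, so $k \le k\log(1+n/k)$; and by Lemma~\ref{lem:batch_bound_is_increasing} the function $x\log(1+n/x)$ is increasing, so $k\log(1+n/k) \ge \log(1+n) = \Omega(\log n)$. Thus $O(k+\log n) = O(k\log(1+n/k))$, completing the bound. I expect the only delicate point to be the boundary-class phase: getting the weighted-select to return a \emph{maximal} prefix rather than merely some prefix, to collect all fully-included subtrees without re-walking them, and to thread through the single shared "remaining budget" inherited from the fully-included classes—plus the small argument that splicing a boundary-class prefix onto the full lower classes is globally maximal, which relies precisely on the geometric interval structure of the size classes.
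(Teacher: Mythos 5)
Your proposal is correct and follows essentially the same route as the paper's proof: an $O(\log n)$ exhaustive traversal of the rank/local-tree overhead to collect the weight-balanced-tree roots, ordering them by size class, using the augmented total sizes to identify the fully-included classes and a size-prefix of the boundary class, with correctness resting on the geometric interval structure of the classes. The only difference is bookkeeping---you count the work directly as $O(k+\log n)$ and then relax it to $O(k\log(1+n/k))$ via Lemma~\ref{lem:batch_bound_is_increasing}, whereas the paper charges each tree $O(k_c\log(1+n_c/k_c))$ and sums with Lemma~\ref{lem:component_bounds}---which is a harmless (in fact slightly tighter) variant of the same analysis.
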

\begin{proof}
    There are at most $\log n$ size classes so the exhaustive traversal takes $O(\log n)$ depth and work. The sort can be done using bucket sort in $O(\log n)$ time and space (the total space is still $O(n)$ because at level $i$ there are actually at most $\log(2^i) = i$ size classes). Then for each weight-balanced tree the prefix (or all of its elements) can be found in $O(k_c \log (1+n_c/k_c))$ work and $O(\log n)$ depth where $n_c$ is the number of elements in that tree and $k_c$ is the number of elements returned by that tree. The total depth is $O(\log n)$. The work in all weight-balanced trees sums to $O(k \log (1+n/k))$ by Lemma ~\ref{lem:component_bounds} and the $O(\log n)$ work in exhaustive traversal and bucket sort is less than or equal to that by Lemma~\ref{lem:batch_bound_is_increasing}.
\end{proof}

For the $\mathsf{GetMaximalPrefixOfSubset}$ operation, we take a bottom up approach starting from all the leaf clusters in the subset $C$, and mark all of their ancestors. Then we can simply think of it as doing a normal $\mathsf{GetMaximalPrefix}$ operation on the local tree defined by only the marked nodes. Similarly this can be done in $O(\log n)$ depth and $O(k\log(1+n/k))$ work.

For the $\mathsf{Mark}$ and $\mathsf{GetUnmarkedElement}$ operations, this can implemented by augmenting the tree with a bit that indicates if there is an unmarked element in the subtree of a node.

\subsection{Fetching Edges in Parallel}

To fetch a single edge we can simply traverse down to a leaf cluster containing a level $i$ edge by following a path of nodes that contain a set bit the bitmap for level $i$ edges. 
To fetch $k$ edges in parallel, we will maintain a set (of size at most $k$) of distinct paths down to root clusters that contain a set bits for level $i$ edges. 
If fewer than $k$ such leaves exist in the subtree of this cluster, we will fetch all of them.

The batch algorithm proceeds level-by-level.
At every level, each current path will check if it has one or two children with a set bit. 
The total number of nodes with a set bit on the next level down is computed by summing the one or two possible child paths for each current path. 
If the total is $\leq k$, all nodes are taken. 
Else, we have between $k$ and $2k$ nodes and we will simply fetch a single leaf from each of the nodes' subtrees.

To compute the total nodes at the next level we can use an approximate parallel prefix sum. 
In the CRCW PRAM model this can be done in $O(\log^*k)$ depth with high probability or in $O(\log\log k)$ depth deterministically. Using a $2$-approximate sum, we can say that if the approximate sum is $\geq 2k$, then the true sum is $\geq k$. this ensures that we will have between $k$ and $4k$ paths once we stop adding them. 

Instead of doing the sum at every level we can do it every $\log\log n$ levels. This will give us a depth of $O(\log n + \log n/\log\log n * \log\log n) = O(\log n)$. In every group of $\log\log n$ levels we will simply take every possible path. The number of paths we have might grow by a factor of $2^{\log\log n} = \log n$. This means that once the number of paths is found to exceed $k$, there may be $\Omega(k \log n)$ paths. Prior to this group of levels, the number of paths was always less than $k$, so the total work is $O(k\log n)$. In this group of levels, we have done at most $O(k \log n)$ work. The number of paths will exceed $k$ at only a single time so this is fine. After this group of levels, if all of these paths were followed to a leaf cluster, the total work could be $\Omega(k \log^2n)$. Instead we will just take $k$ of those paths 
and follow them all to a single leaf cluster, so the total work is still $O(k \log n)$.

Finally, at each leaf cluster some edges are returned. If we had at least $k$ paths, each leaf cluster just returns one edge. If not, we can do a prefix sum on the number of level $i$ edges at each leaf cluster, and then return either all of the edges or the first $k$. 
All the edges can be returned in $O(\log n)$ depth and $O(\log n)$ work per edge because they are stored in a balanced binary search tree at each leaf cluster.
This yields the following lemma:

\batchfetch*

\subsection{Restoring the Blocked Invariant}\label{sec:restore}
Here we describe how to restore the blocked invariant in a cluster graph $CG(c)$ that previously maintained the invariant but just had a single level $i$ cluster $P$ split into two \defn{cluster fragments}. A cluster fragment is defined as one of the clusters that formed from the splitting of $P$.

\noindent We define the following operations on a blocked cluster forest:
\begin{itemize}[topsep=0pt,itemsep=0pt,parsep=0pt,leftmargin=15pt]
    \item \textbf{Blocked}$(e)$ determines if a level $i$ edge $e$ is blocked. 
    \item \textbf{SelfLoop}$(e)$ determines if a level $i$ edge $e$ is a self-loop.
    \item \textbf{Merge}$(C_1,C_2)$ merges two level $(i-1)$ clusters $C_1$ and $C_2$. This means the children of $C_2$ are added as children of $C_1$ and $C_2$ is deleted.
    \item \textbf{PushDown}$(e)$ pushes down a level $i$ edge until it is blocked. At every level this finds the clusters containing the endpoints of $e$ and calls $\mathsf{Merge}$ on them. At the end it updates the edge bitmaps appropriately. Returns the merged cluster.
\end{itemize}

In the case where the number of clusters in $CG(c)$ is bounded by some constant, we can afford to restore the invariant by checking that every single cluster is incident to a blocked edge, pushing down any unblocked edge fetched incident to it.
In the rest of this section we focus on the case where the number of clusters in $CG(c)$ is arbitrarily large (e.g. there are $\omega(1)$ clusters).

In this case, Lemmas~\ref{lem:radius} and \ref{lem:heavy_center} tell us that there is a well-defined center cluster that is connected to every other cluster by a blocked edge, and this cluster has the largest size in the cluster graph.
Thus we can naturally think of two cases: $P$ was the center cluster or $P$ was a satellite cluster.
We define a \defn{satellite cluster} as a cluster that is not the center, it was connected to the center with a blocked edge before the split of $P$.

If $P$ was a satellite cluster that split into fragments $P_1$ and $P_2$, then only those two clusters may now violate the blocked invariant. We can restore the invariant by simply fetching and pushing down one \defn{outbound edge} incident to both fragments. We define an outbound edge as an edge incident to a cluster that is not a self-loop. If the outbound edge is blocked, we have certified that that fragment is incident to a blocked edge. If it is unblocked, we push it down thus merging the fragment into another cluster that is incident to a blocked edge.

When $P$ is the center cluster, the blocked invariant may become violated for $P_1$ and $P_2$ themselves, as well as many or all satellite clusters. The structure of this new graph consists of the two centers, $P_1$ and $P_2$, and a set of satellite clusters connected by edges to $P_1$ and/or $P_2$. There may also be edges between $P_1$ and $P_2$ (previously self-loops on $P$), edges between satellites, and self-loops on any cluster.
Figure~\ref{fig:center_split} shows a possible cluster graph before and after the center splitting.

First we prove Lemma~\ref{lem:one_blocked} which will enable us to restore the blocked invariant efficiently.

\begin{lemma}\label{lem:one_blocked}
    Given a cluster graph $CG$ that maintains Invariant~\ref{inv:blocked}, and has $k \geq 4$ clusters, pushing down unblocked edges can cause at most one blocked edge to form between satellites. Additionally, given a cluster graph $CG'$ which was formed from splitting the center of a cluster graph $CG$ that previously maintained Invariant~\ref{inv:blocked} and had $k \geq 4$ clusters, only pushing down unblocked edges between the previous satellites can cause at most one blocked edge to form between the previous satellites.
\end{lemma}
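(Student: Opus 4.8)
The plan is to analyze the total size budget available in the cluster graph using the size invariant (Invariant~\ref{inv:size}), and argue that two disjoint blocked edges among satellites would overflow this budget once the mass of the center cluster is accounted for. Recall from Lemma~\ref{lem:heavy_center} that when $k \geq 4$ the center $u$ of $CG(c)$ is the largest cluster, and from the computation in the proof of Lemma~\ref{lem:heavy_center} that $n(u) > \sizeconst{i-1}/2$. The key observation I would carry through both parts is this: a blocked level-$i$ edge between two level-$(i-1)$ clusters $X$ and $Y$ requires $n(X) + n(Y) > \sizeconst{i-1}$, so \emph{two} disjoint blocked edges on four distinct satellites $X_1,Y_1,X_2,Y_2$ would need $\sum n(X_j) + n(Y_j) > 2\cdot \sizeconst{i-1} = \sizeconst{i}$. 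But together with $n(u) > \sizeconst{i-1}/2 > 0$ and the fact that all of $u,X_1,Y_1,X_2,Y_2$ lie in the same level-$i$ cluster, this already gives total size $> \sizeconst{i}$, contradicting Invariant~\ref{inv:size} at level $i$. So it is not even the $n(u) > \sizeconst{i-1}/2$ bound that is strictly needed here — just $n(u) \geq 1$ — but stating it via Lemma~\ref{lem:heavy_center} makes the "satellites" language precise.

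For the first statement (a single cluster graph $CG$ maintaining Invariant~\ref{inv:blocked}, $k\geq 4$): pushing down an unblocked edge only ever \emph{merges} clusters or shrinks nothing that matters — it cannot increase any cluster's size, and it cannot create new clusters. So at any point during a sequence of such pushdowns, the clusters are coarsenings of the original ones, the center-mass $n(u)$ (or the mass of whatever contains the original center) is unchanged or only grows, and the total is still $\leq \sizeconst{i}$ by the size invariant, which is preserved by Lemma~\ref{lem:push_down_until_blocked} / Lemma~\ref{lem:deletion_correct}. I would then argue: suppose two disjoint blocked edges appear between (current) satellites at some moment. Each endpoint is a satellite, hence distinct from the center cluster, hence the four endpoints plus the center contribute $> \sizeconst{i-1} + \sizeconst{i-1} + 0 = \sizeconst{i}$ total size — contradiction. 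Hence at most one blocked edge can exist between satellites at any time, and since pushing an unblocked edge down only coarsens, once that one blocked satellite-satellite edge forms no \emph{second} one can, giving the "at most one" claim over the whole process.

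For the second statement the only change is bookkeeping: after splitting the center $P$ of $CG$ into $P_1, P_2$, we have $n(P_1) + n(P_2) = n(P)$, and the "previous satellites" are unchanged clusters whose sizes still satisfy: each previous-satellite blocked edge needs $> \sizeconst{i-1}$, and $n(P_1) + n(P_2) = n(P) > \sizeconst{i-1}/2 > 0$. So two disjoint blocked edges among the previous satellites, together with the mass in $P_1 \cup P_2$, again exceed $\sizeconst{i}$ — contradiction with Invariant~\ref{inv:size} for $CG'$ (which holds because $CG'$ arose from $CG$ by a size-preserving split). The same coarsening-monotonicity argument as before then upgrades "at most one at any instant" to "at most one over the whole pushdown process." I would restrict attention, as the statement does, to pushdowns of unblocked edges \emph{between previous satellites}, so that the mass of $P_1$ and $P_2$ is never redistributed into a satellite in a way that could defeat the counting.

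The main obstacle I anticipate is purely one of careful definition-chasing rather than a real mathematical difficulty: I need to be scrupulous that "satellite" refers to a cluster distinct from the center (so that the center's mass is genuinely disjoint extra budget), that the size invariant really is maintained throughout the pushdown sequence (invoking Lemma~\ref{lem:push_down_until_blocked} and Lemma~\ref{lem:deletion_correct}), and that "pushing down unblocked edges only coarsens" is stated precisely enough to justify the monotone-over-time claim. None of these steps requires new ideas beyond the size-counting trick already used in Lemmas~\ref{lem:blocked_matching_size} and~\ref{lem:heavy_center}; the work is in phrasing the two parts uniformly.
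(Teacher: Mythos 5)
There is a genuine gap: your counting argument only rules out two \emph{vertex-disjoint} blocked satellite--satellite edges, but the lemma asserts that at most one blocked edge forms between satellites, and two such edges can share an endpoint. Concretely, consider satellites $S_1,S_2,S_3$ with $n(S_2)=2^{i-1}$ and $n(S_1)=n(S_3)=1$: both $(S_1,S_2)$ and $(S_2,S_3)$ are blocked, yet the total size is only $2^{i-1}+2$ plus the center's mass, which stays below $2^i$ whenever the center is small. Since you explicitly claim that only $n(u)\geq 1$ is needed, your budget argument cannot exclude this configuration, so the step ``hence at most one blocked edge can exist between satellites at any time'' does not follow from ruling out disjoint pairs. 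This matters especially for the second part of the lemma, where the center has been split into fragments $P_1,P_2$ that may indeed be small, so no largest-cluster property of the fragments can rescue the count.

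The paper closes exactly this case differently: if two blocked satellite--satellite edges share a cluster, say $(S_1,S_2)$ and $(S_2,S_3)$, then the \emph{original} blocked edge between the center and $S_1$ (which exists for every satellite by Lemma~\ref{lem:radius}, and remains blocked since pushdowns only merge clusters) is disjoint from $(S_2,S_3)$, giving a blocked matching of size two and contradicting Lemma~\ref{lem:blocked_matching_size}; the split-center case is then reduced to this one by observing that satellite sizes are unchanged by the split and only satellite--satellite edges are pushed down, so blockedness of those edges is determined as in the unsplit graph. You could alternatively repair the first part by invoking Lemma~\ref{lem:heavy_center} honestly rather than discarding it: since $n(C)\geq n(S_2)$, pairing $n(S_1)+n(S_2)>2^{i-1}$ with $n(S_3)+n(C)\geq n(S_3)+n(S_2)>2^{i-1}$ again exceeds $2^i$ and violates Invariant~\ref{inv:size}; but for the split-center part that bound is unavailable for the fragments, so some form of the paper's reduction to the original cluster graph (or to its blocked center--satellite edges) is still required. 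The rest of your outline (monotonicity of blockedness under merges, restricting part two to satellite--satellite pushdowns) is consistent with the paper, but the lemma is not proved until the shared-endpoint case is handled.
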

\begin{proof}
    In $CG$, every edge from the center $C$ to any satellite is blocked, and every edge between two satellites is unblocked. The only edges that can become blocked are those between satellites. Assume there are $\geq 2$ blocked edges between satellites. Take any two of these blocked edges. They must be incident to a common cluster or else there is a matching of size two which contradicts Lemma~\ref{lem:blocked_matching_size}. Let these two edges be $(S_1,S_2)$ and $(S_2,S_3)$. We know that there is a blocked edge between $C$ and $S_1$. Then this edge forms a matching of size two with the blocked edge between $S_2$ and $S_3$ which is a contradiction. Therefore there can be at most one blocked edges formed between satellites.

    Now consider $CG'$. The size of every cluster is the same as the size of the corresponding cluster in $CG$, except the two clusters that the center split into. Therefore if only edges between satellites are pushed down, at most one blocked edge can form between them.
\end{proof}

\myparagraph{Overview}
Now we describe how to restore the blocked invariant when a star center is split. The high-level procedure is to fetch an outbound edge incident to each satellite cluster.
We will maintain a running total of the sizes of $P_1$ and $P_2$ combined with the sizes of their adjacent satellites that we have discovered, call these totals $n_1$ and $n_2$.
They are initialized as $n(P_1)$ and $n(P_2)$. Intuitively, this simulates what would happen if we did merge all of these satellites into $P_1$ and $P_2$. We also keep an ordered (by size) set of the satellites found incident to $P_1$ and $P_2$, call these $S_1$ and $S_2$.
We will continue to fetch outbound edges for satellites until either $P_1$ or $P_2$ has found two neighboring satellites that could not merge with $P_1$ or $P_2$ if they had merged with all their other neighboring satellites found so far (e.g. there would be two blocked edges incident to $P_1$ or $P_2$ if all other previously examined satellites were merged into their center cluster).
Algorithm~\ref{alg:restore_invariant} shows the pseudo-code for this process.

\begin{algorithm}
\caption{$\mathsf{RestoreInvariant}(CG, i, P_1, P_2)$}
\label{alg:restore_invariant}
\begin{algorithmic}[1]
\State $S_1 \gets \{\}$, $S_2 \gets \{\}, A \gets \{\}$, $n_1 \gets n(P_1)$, $n_2 \gets n(P_2)$
\State $CG.\mathsf{Mark}(P_1)$, $CG.\mathsf{Mark}(P_2)$
\State $CenterTwoBlocked \gets null$ 
\While{true}
    \State $X \gets CG.\mathsf{GetUnmarkedCluster}()$
    \State $(u,v) \gets X.\mathsf{FetchOutboundEdge}()$
    \If{$\mathsf{Cluster}(v, i)$ is $P_1$ or $P_2$} \label{line:center_start} \Comment{Satellite to center}
        \State Insert $X$ into $S_1$ or $S_2$
        \State Increment $n_1$ or $n_2$ by $n(X)$
        \State Insert $(u,v)$ into $A$
        \State $CG.\mathsf{Mark}(X)$ \label{line:center_end}
    \ElsIf{$\neg \mathsf{Blocked}((u,v))$} \Comment{Satellite to satellite} \label{line:satellite_start}
        \State $\mathsf{PushDown}((u,v))$
        \If{$\mathsf{Cluster}(v,i)$ is in $S_1$ or $S_2$}
            \State Increment $n_1$ or $n_2$ by $n(X)$
        \EndIf
    \Else~$CG.\mathsf{Mark}(\mathsf{Cluster}(u))$, $CG.\mathsf{Mark}(\mathsf{Cluster}(v))$
    \EndIf \label{line:satellite_end}
    \If{$n_1 - n(S_1[-1]) > \sizeconst{i} \land n_1 - n(S_1[-2]) > \sizeconst{i}$} \label{line:check_start}
        \State $CenterTwoBlocked \gets P_1$, \textbf{break}
    \EndIf
    \If{$n_2 - n(S_2[-1]) > \sizeconst{i} \land n_2 - n(S_2[-2]) > \sizeconst{i}$}
        \State $CenterTwoBlocked \gets P_2$, \textbf{break}
    \EndIf \label{line:check_end}
\EndWhile
\If{$\neg CenterTwoBlocked$} \label{line:pushA_start}\Comment{two blocked edges not found}
    \For{$(u,v) \in A$}
        \If{$\mathsf{Cluster}(v,i) \notin \{S_1[-1], S_2[-1]\}$}
            \State $\mathsf{PushDown}((u,v))$
        \ElsIf{$\mathsf{Cluster}(v,i) = S_1[-1] \land n_1 \leq \sizeconst{i}$}
            \State $\mathsf{PushDown}((u,v))$
        \ElsIf{$\mathsf{Cluster}(v,i) = S_2[-1] \land n_2 \leq \sizeconst{i}$}
            \State $\mathsf{PushDown}((u,v))$
        \EndIf
    \EndFor \label{line:pushA_end}
    \For{$P \in \{P_1, P_2\}$} \label{line:fix_centers_start}
        \State $(u,v) \gets P.\mathsf{FetchOutboundEdge}()$
        \If{$\neg \mathsf{Blocked}((u,v))$} $\mathsf{PushDown}((u,v))$ \EndIf
    \EndFor \label{line:fix_centers_end}
\Else \Comment{two blocked edges found}
    \State Let $U$ be the center without two blocked edges
    \State Let $S_U$ be the corresponding set for center $U$
    \State $[B,C] \gets \mathsf{FetchAll}(U)$ \label{line:fetch_all}
    \If{$|B| > |A| \land |B| > |C|$} \label{line:pushB_start} \Comment{most edges between centers}
        \For{$(u,v) \in B$} $\mathsf{PushDown}((u,v))$ \EndFor \label{line:pushB_end}
    \Else \Comment{most edges from centers to satellites}
        \For{$(u,v) \in C$} \label{line:pushC} $\mathsf{PushDown}((u,v))$ \EndFor
        \For{$(u,v) \in A$} \label{line:pushA2_start}
            \If{$\mathsf{Cluster}(v,i) \notin \{S_U[-1], S_U[-2]\}$}
                \State $\mathsf{PushDown}((u,v))$
            \EndIf
        \EndFor \label{line:pushA2_end}
        \State $(u,v) \gets \mathsf{Smallest}(CG).\mathsf{FetchEdge}()$ \label{line:fix_smallest_start}
        \While{$\neg \mathsf{Blocked}((u,v))$}
            \State $\mathsf{PushDown}((u,v))$
            \State $(u,v) \gets \mathsf{Smallest}(CG).\mathsf{FetchEdge}()$
        \EndWhile \label{line:fix_smallest_end}
    \EndIf
\EndIf
\end{algorithmic}
\end{algorithm}

If the outbound edge $e$ found incident to a satellite $X$ is incident to $P_1$ (or $P_2$, which is handled analogously), insert $X$ in $S_1$, increment $n_1$ by $n(X)$, and insert $e$ into $A$ (lines~\ref{line:center_start}--\ref{line:center_end}).

If the edge found incident to a satellite $X$ was incident to another satellite $Y$, push it down.
It is possible that this edge is blocked, but that can only happen once according to Lemma~\ref{lem:one_blocked}.
If that happens, the invariant has been restored for both of the clusters incident to that edge. We will assume this edge is not blocked.
Then pushing the edge down increases the size of $Y$. If $Y$ was in $S_1$ or $S_2$, we need to increment $n_1$ or $n_2$ respectively by $n(X)$ (lines~\ref{line:satellite_start}--\ref{line:satellite_end}).

After every time a satellite processes an outbound edge, we compare the sizes of the last two elements in $S_1$ and $S_2$ with $n_1$ and $n_2$ to check if the process should stop at this point (lines~\ref{line:check_start}--\ref{line:check_end}).

\myparagraph{Two Blocked Edges Not Found}
If this process never stops, then for each remaining satellite, we have an edge to either $P_1$ or $P_2$. We know that all of these edges can be pushed down with the exception of at most one edge to a satellite in $S_1$ and at most one edge to a satellite in $S_2$ that will be blocked if they are not pushed down. All of these edges will be pushed down (lines~\ref{line:pushA_start}--\ref{line:pushA_end}).
Now that every satellite is incident to a blocked edge or has merged into a center, we only need to ensure that $P_1$ and $P_2$ are incident to blocked edges. We will just try to fetch a single outbound edge incident to them, pushing it down if possible (lines~\ref{line:fix_centers_start}--\ref{line:fix_centers_end}). This ensures that both centers either found a blocked edge incident to them, merged into the other center, or they are disconnected.

\myparagraph{Two Blocked Edges Found}
If we do find two satellites that would form a blocked edge to the same cluster fragment (without loss of generality $P_1$), we stop fetching outbound edges from satellites.
At this point we can be certain that $P_2$ can merge with all of its neighboring satellites without violating the size constraint. Otherwise there would be a matching over blocked edges of size two which contradicts Lemma~\ref{lem:blocked_matching_size}.

Now we will fetch every edge incident to $P_2$ (line~\ref{line:fetch_all}).
Ignoring self-loops, the remaining edges of those we have fetched so far fall in to three sets: ($A$) outbound edges from satellites previously fetched, ($B$) edges between $P_1$ and $P_2$, and ($C$) edges from $P_2$ to satellites.
We will ensure that the largest of these 3 sets is pushed down. This means at least a constant fraction of the edges were pushed down so the work done is $O((k+1)\log n)$ where $k$ is the total number of edges pushed down.

If $B$ was the largest set, we push it down (lines~\ref{line:pushB_start}--\ref{line:pushB_end}). At this point we have restored the size of the original star center, so every cluster must be incident to a blocked edge and we are done.
If $A$ or $C$ was the largest set, we start by pushing down all of $C$ (line~\ref{line:pushC}). Then we push down all of the edges in $A$ that were not also in $C$, besides the at most two blocked edges (lines~\ref{line:pushA2_start}--\ref{line:pushA2_end}). Therefore even if $A$ was the largest, we still have pushed down at least a constant fraction of the edges.

Now that $P_2$ has merged with all of its satellites, either it is disconnected from the rest of the cluster graph, or it has become a satellite of $P_1$, restoring a star structure.
There may still be some satellites that are not incident to any blocked edge because we did not fetch outbound edges to them earlier. Now we will continue fetching outbound edges, but this time we will fetch them incident to the smallest cluster, and push them down immediately until a blocked edge is encountered (lines~\ref{line:fix_smallest_start}--\ref{line:fix_smallest_end}).
Lemma~\ref{lem:smallest_blocked} proves that if the smallest cluster is incident to a blocked edge, then the rest of the clusters must also be, thus completing the restoration of the blocked invariant.
In Appendix~\ref{sec:modified_local_trees} we describe how the local trees can be modified to return the smallest cluster in a cluster graph in $O(\log n)$ time.

\begin{lemma}\label{lem:smallest_blocked}
    In a cluster graph with $k \geq 4$ level $i$ clusters where every cluster has an edge to a single center cluster $C$ and at least two edges incident to $C$ are blocked, if the smallest non-center cluster $S$ is incident to a blocked edge, every cluster is incident to a blocked edge.
\end{lemma}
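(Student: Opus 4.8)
The plan is to reduce the whole statement to one claim: \emph{the star edge between $C$ and $S$ is blocked}. Granting that, the lemma is immediate, because $S$ is a smallest non-center cluster: for any non-center cluster $X$ we have $n(X)\ge n(S)$, hence $n(C)+n(X)\ge n(C)+n(S)$, so if $n(C)+n(S)$ already exceeds the blocking threshold then so does $n(C)+n(X)$, i.e.\ the star edge $C$--$X$ is blocked. Combined with the hypothesis that $C$ itself is incident to a blocked edge, this shows every cluster of $CG$ is incident to a blocked edge, which is exactly Invariant~\ref{inv:blocked}.

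The one tool needed for the claim holds here even though Invariant~\ref{inv:blocked} is (temporarily) violated. By Definition~\ref{def:blocked_edge} an edge of $CG$ joining clusters $X,Y$ is blocked exactly when $n(X)+n(Y)>t$ for the appropriate threshold $t$, and the size invariant (Invariant~\ref{inv:size}) forces $\sum_Z n(Z)\le 2t$ over all clusters $Z$ of $CG$. Hence $CG$ contains no two vertex-disjoint blocked edges $(X,Y),(X',Y')$: their endpoints would have total size $>2t$. Exactly as in the proof of Lemma~\ref{lem:blocked_matching_size}, this is the only structural fact we use. We also read the hypothesis ``at least two edges incident to $C$ are blocked'' in the form the caller actually guarantees (Algorithm~\ref{alg:restore_invariant}): $C$ has blocked edges to two \emph{distinct} clusters $Y_1\ne Y_2$.

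Proof of the claim. Let $e$ be a blocked edge incident to $S$. If $e$ is the star edge $C$--$S$ we are done, so assume $e=(S,T)$ with $T\ne C$. The degenerate possibility $T=S$ (a blocked self-loop on $S$) does not arise: the edge fetched from $S$ in the algorithm is an outbound edge, hence not a self-loop; and in any case a blocked self-loop on $S$ would force $n(S)>t/2$, so $Y_1,Y_2,S$ (distinct, since $S\in\{Y_1,Y_2\}$ would already yield a blocked edge $C$--$S$) would each have size $>t/2$, while $n(C)+n(Y_1)>t$ forces $n(Y_2)+n(S)<t$, a contradiction. So $T\notin\{C,S\}$. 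Now $(C,Y_1)$ and $(S,T)$ cannot be vertex-disjoint (that would be two disjoint blocked edges), and since $C\notin\{S,T\}$ their common endpoint must be $Y_1$; thus $Y_1\in\{S,T\}$, and likewise $Y_2\in\{S,T\}$. As $Y_1\ne Y_2$ and $|\{S,T\}|=2$, we get $\{Y_1,Y_2\}=\{S,T\}$, so $S\in\{Y_1,Y_2\}$ and the star edge $C$--$S$ is one of the two blocked edges at $C$. This proves the claim and, by the first paragraph, the lemma.

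The main thing to be careful about is the interpretation of the hypothesis: two \emph{parallel} blocked edges from $C$ to one neighbor are not enough --- one can build a four-cluster star in which $C$ has a blocked (double) edge to a heavy satellite $T$, the smallest satellite $S$ is blocked-adjacent only to $T$, and a third satellite touches only its own unblocked star edge, so the conclusion fails. The argument genuinely needs the two blocked edges at $C$ to reach two distinct clusters, which is precisely what the stopping rule of Algorithm~\ref{alg:restore_invariant} provides (it stops on two distinct neighbors $S_1[-1]\ne S_1[-2]$). Everything else is just the ``no two vertex-disjoint blocked edges'' fact together with monotonicity of ``blocked'' in the endpoint sizes, so no further calculation is required.
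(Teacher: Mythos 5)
Your proof is correct, and it rests on the same two ingredients as the paper's proof of Lemma~\ref{lem:smallest_blocked} --- the size invariant forbids two vertex-disjoint blocked edges, and blockedness of a star edge is monotone in the satellite's size --- but you assemble them differently. The paper case-splits on the blocked neighbor $X$ of $S$: if $X=C$ it concludes by monotonicity exactly as you do, and if $X$ is another satellite it derives a contradiction by pairing $(S,X)$ with a blocked star edge from $C$ to one of the two largest satellites (implicitly using monotonicity to claim both of those star edges are blocked, and needing, for $k\ge 4$, that at least one of the two largest satellites lies outside $\{S,X\}$). You instead prove directly that the star edge $C$--$S$ is blocked: in the satellite case, non-disjointness of $(S,T)$ with each of the two blocked edges $(C,Y_1)$, $(C,Y_2)$, together with $C\notin\{S,T\}$ and $Y_1\ne Y_2$, forces $\{Y_1,Y_2\}=\{S,T\}$, so $S$ is itself a blocked neighbor of $C$, and monotonicity from $S$ being smallest finishes. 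This decomposition avoids the tie-handling around "the two largest satellites" and also covers the self-loop possibility, which the paper's proof silently excludes. Your remark on the hypothesis is a genuine and valid point: read over the multigraph, "at least two edges incident to $C$ are blocked" could mean two parallel blocked edges to a single satellite, and then both the statement and the paper's step asserting blocked edges to the two largest satellites fail --- your four-cluster configuration (a heavy satellite $T$ with a doubled blocked edge to $C$, the smallest satellite $S$ blocked only to $T$, and a third satellite with an unblocked star edge) is a correct counterexample under that reading. The intended reading, and the one the caller in Algorithm~\ref{alg:restore_invariant} actually guarantees via the check on $S_1[-1]\ne S_1[-2]$, is two blocked edges to two distinct clusters, which is precisely the hypothesis your argument uses; so your proof matches the lemma as it is applied, while making explicit an assumption the paper leaves implicit.
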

\begin{proof}
    Since $S$ is incident to a blocked edge, there exists a cluster $X$ neighboring $S$ such that $n(S) + n(X) > \sizeconst{i}$. If $X$ was the center $C$, then the edge between each satellite $Y$ and $C$ must be blocked since $n(Y) \geq n(S)$, so each cluster is incident to a blocked edge. We will show that $X$ could not be another satellite by contradiction. There are at least two blocked edges between $C$ and the largest two satellites. The blocked edge between $S$ and $X$ would form a matching of size two over the blocked edges with the edge from the center and at least one of the two largest satellites, which is a contradiction.
\end{proof}

\section{Additional Proofs in Section~\ref{sec:parallelupdate} and ~\ref{sec:batchpar}}

\subsection{Pushing Down Edges}

Let $X$ be the level $(i-1)$ cluster that is not isolated (or any arbitrary cluster if they are all isolated). We can combine all of the local trees by using the $\mathsf{BatchInsert}$ operation of the modified local tree to insert all of the single level $(i-2)$ clusters into the local tree for $CG(X)$. Then we use $\mathsf{BatchDelete}$ to delete all of the level $(i-1)$ clusters other than $X$ in the local tree for the level $i$ parent cluster, and increment $n(X)$ by the combined size of all of the isolated clusters. This yields the following lemma which indicates an efficient implementation of \textbf{PushDownGroup}:

\myparagraph{Batch Push Down}
Now we describe how to implement the \textbf{BatchPushDown} operation using calls to \textbf{PushDownGroup}. The general strategy will be to take a spanning forest of the edges and decompose it into disjoint stars. Then for each star we will merge into the center a maximal prefix of the clusters sorted by size that can be merged into the center without violating the size constraint.

The first step is to compute a spanning forest over the $k$ edges in $E$. To convert the list of edges into a graph representation, we need to determine the cluster that the endpoints of each edge of $E$ are contained in. This step takes $O(k \log n)$ work and $O(\log n)$ depth. The spanning forest computation can then be done in $O(\log k)$ depth and $O(k\log k)$ work~\cite{shiloach1982cc}.
A spanning forest is sufficient because the operation only requires that the clusters containing an endpoint of any edge in $E$ follow the blocked invariant. The spanning forest has at least one edge incident to every such cluster.

Next we define some arbitrary cluster as the root of each tree in the spanning forest, and use the Euler tour technique to compute the level of each cluster relative to its root. This takes $O(\log k)$ depth and $O(k)$ work~\cite{tarjan1984ett}.
Using the levels of each cluster we can partition the rooted trees into disjoint stars where the even level clusters are centers and the odd level clusters form stars around their parent in the rooted tree.


For each star we will take a maximal prefix of the satellites in sorted order by size that can be merged into the center without violating the size constraint. This will ensure that either all of the satellites have merged into the center, or every remaining satellite has a blocked edge to the center.
Sorting the clusters by size and taking a prefix sum around all stars will take $O(\log k)$ depth and $O(k \log k)$ work~\cite{cole1986sort}\cite{cole1989prefix}.
Finally, for each star we will push down the set of edges determined in the previous step. We can use the \textbf{PushDownGroup} operation to do this in $O(\log k)$ depth and $O(k \log n)$ total work across all stars.

Doing this once ensures that the blocked invariant holds for every odd level cluster, or it has been merged into its center. Now we need to repeat the process multiple times until this holds for all clusters. First we update the spanning forest to reflect the edges that were pushed down previously (this can be done using the Euler tour technique also). Then we again do a star decomposition and merge a maximal prefix of the smallest satellites into each center.
Each round halves the number of levels that may still violate the blocked invariant. We repeat this process $O(\log n)$ times to fully restore the blocked invariant for each cluster.

Overall the depth is $O(\log^2n)$. Each edge is processed only in a single round, so in total there is $O(k \log n)$ work.
This yields Lemma~\ref{lem:batch_push_down} which provides an efficient implementation for the \textbf{BatchPushDown} operation.

\begin{lemma} \label{lem:batch_push_down}
    Given a set $E$ of $k$ level $i$ edges, enforcing that every level $(i-1)$ cluster containing an endpoint of any edge in $E$ is incident to a blocked edge or its parent is isolated, can be done in $O(\log^2n)$ depth and $O(k \log n)$ work.
\end{lemma}

\subsection{Restoring the Blocked Invariant in Parallel}
\myparagraph{Restoring the Blocked Invariant}
When a cluster is split during deletion, the algorithm must restore the blocked invariant in that cluster graph before continuing.
If the split cluster was a satellite, we just need to find and/or push down a single outbound edge incident to both fragment clusters. Lemma~\ref{lem:doubling_fetch} proves that we can find such an edge in $O(\log^2 n)$ depth and $O(k \log n)$ work which can be charged to the self-loops that are found and pushed down.

If the split cluster was the center, the sequential strategy was to fetch an outbound edge incident to each satellite until we found that two blocked edges incident to one of the fragment clusters, or we found an outbound edge from each satellite.
In parallel, we will start considering satellites in $O(\log n)$ rounds of geometrically increasing size. Each satellite can fetch an outbound edge in $O(\log n)$ rounds of doubling search over its incident edges. Similar to the technique used by Acar et al.~\cite{acar2019parallel} we can interleave the rounds of doubling edge search with the rounds of increasing number of satellites.
This allows us to do this in $O(\log n)$ rounds for a total depth of $O(\log^2 n)$.
We can track the edges found incident to each center fragment using a concurrent set data structure.

If this process ends with neither center fragment finding two blocked edges, we just push down all of the unblocked edges to center fragments. We can use two calls to \textbf{PushDownGroup} to do this in $O(\log n)$ depth.
Then we need to find an outbound edge incident to both $P_1$ and $P_2$, which can be done in $O(\log^2 n)$ depth by Lemma~\ref{lem:doubling_fetch}.

If two blocked edges were found incident to a center fragment, then we fetch all of the edges incident to the other center fragment. Lemma~\ref{lem:doubling_fetch} proves that this can be done in $O(\log^2 n)$ depth.
Then we push down some edges, all of which are incident to $P_1$ or $P_2$, so we can use two calls to \textbf{PushDownGroup} to do this in $O(\log n)$ depth.
Finally, we need to ensure that the smallest cluster is incident to a blocked edge. We can do this by fetching geometrically increasing sized prefixes of the smallest clusters, and finding an outbound edge for all of them. Once again we can interleave the rounds of doubling edge search with the rounds of increasing number of clusters to achieve $O(\log^2 n)$ depth.

\subsection{Analysis of Parallel Deletion}

\myparagraph{Cost Analysis}
During both the upward and downward sweep, the depth is $O(\log^2n)$ per level, yielding an overall depth of $O(\log^3n)$.

To analyze the work in restoring the blocked invariant we prove the following two lemmas that bound the work done during the search for outbound edges from satellites, and the work done to ensure that the smallest cluster is incident to a blocked edge:

\begin{lemma} \label{lem:par_sat_work}
    During the interleaved doubling edge search over the satellites, the work done is $O((x+y)\log n)$ where $x$ is the number of edges fetched that can be pushed down, and $y$ is the number of edges fetched in all but the last round that cannot be pushed down.
\end{lemma}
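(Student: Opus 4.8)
The plan is to bound the total work of the interleaved search by a charging argument against the set of fetched edges, leveraging the fact that the per-round fetch budget grows geometrically. First I would make the round structure explicit: the search proceeds in rounds $r = 1, 2, \ldots, R$; in round $r$ it considers $\min(2^r, s)$ of the $s$ satellites and advances each active satellite's doubling search over its incident edges by one step, so that, by interleaving the two doubling sequences in the style of Acar et al.~\cite{acar2019parallel}, the total number of edges fetched via \textbf{FetchEdges} in round $r$ is $\Theta(2^r)$ (if fewer incident edges remain undiscovered they are all fetched and $r$ is the last round). By Lemma~\ref{lem:batch_fetch}, a \textbf{FetchEdges} call returning $k'$ edges costs $O(k'\log n)$ work, so the work in round $r$ is $O(2^r\log n)$, and the work summed over all rounds is a geometric series dominated by its last term, $O(2^R\log n)$.

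It then remains to show $2^R = O(x+y)$. Every edge fetched in a round $r < R$ is either unblocked, in which case it can be pushed down and so contributes to $x$ (note that self-loops are always unblocked, so they legitimately count toward $x$), or blocked, in which case it cannot be pushed down and, since $r \ne R$, it contributes to $y$; this holds uniformly, including for parallel edges between a common pair of clusters. If $R = 1$, the search fetched $O(1)$ edges and performed $O(\log n)$ work, which is absorbed by the lower-order terms in the lemmas that invoke this bound. Otherwise the search did not terminate during round $R-1$, so neither stopping condition (some center fragment having two incident blocked edges among the discovered satellites, or an outbound edge having been found for every satellite) held before round $R$; hence round $R-1$ fetched its full budget of $\Theta(2^{R-1})$ edges, giving $x + y = \Omega(2^{R-1}) = \Omega(2^R)$ and thus total work $O(2^R\log n) = O((x+y)\log n)$.

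I expect the main obstacle to be making the round structure precise enough to support this accounting. In particular, one must pin down the interleaving so that the per-round fetch budget really is geometric despite two nested doublings --- the number of satellites under consideration, and the per-satellite doubling search over incident edges --- which is exactly where the Acar et al.~interleaving technique is needed; and one must verify that the blocked edges fetched in the final round, which are deliberately excluded from $y$, are nonetheless $O(2^R) = O(x+y)$ in number, so that round $R$'s $O(2^R\log n)$ work is already charged (this follows from the same geometric bound, but must be argued explicitly). A secondary point is to confirm that the two termination conditions above are exactly the events that would let round $R-1$ fetch fewer than its full budget; since the search reached round $R$, neither occurred earlier, which is precisely what the charging requires.
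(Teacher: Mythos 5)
Your proposal is correct and follows essentially the same argument as the paper's proof: the total work is $O(t\log n)$ for $t$ fetched edges, and since the per-round fetch counts grow geometrically, the edges fetched before the final round are a constant fraction of $t$ and each of them is counted in $x$ or $y$, giving $t = O(x+y)$. Your explicit handling of the degenerate one-round case (where the bound degrades to $O(\log n)$, absorbed by the additive terms in the lemmas that invoke this one) is a minor refinement the paper leaves implicit, but the substance is identical.
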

\begin{proof}
    During the interleaved doubling edge search over satellites, the number of edges we fetch increases geometrically each round. Let $t$ be the total number of edges fetched across all rounds, then the work done is $O(t \log n)$.
    Once the search ends, we know that all of the edges fetched in previous rounds (excluding the final round) can be pushed down except for at most $y$ edges.
    The number of edges fetched in previous rounds is at least a constant fraction of the edges fetched in the last round. So the number of edges that can be pushed down is $x \geq c \cdot t - y$, for some constant $0 < c < 1$. Then the work done is $O((x+y)\log n)$.
\end{proof}

\begin{lemma} \label{lem:par_smallest_work}
    Ensuring that the smallest cluster in a cluster graph $CG(c)$ is incident to a blocked edge can be done in $O(\log^2 n)$ depth and $O((x+1)\log n)$ work where $x$ is the number of edges pushed down by this process.
\end{lemma}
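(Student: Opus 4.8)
\emph{Proof idea (proposal).} The plan is to recognize the sequential clean-up loop at the end of Algorithm~\ref{alg:restore_invariant} (lines~\ref{line:fix_smallest_start}--\ref{line:fix_smallest_end}) as a single \emph{maximal-prefix absorption}: it greedily merges the smallest clusters into the center until no more fit. This can be executed in parallel with one \textbf{GetMaximalPrefix} query, one parallel round of outbound-edge fetches, and one \textbf{PushDownGroup}.

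First I would pin down the setting. This lemma is invoked only once the ``two blocked edges found'' branch has forced $CG(c)$ into the configuration of Lemma~\ref{lem:smallest_blocked}: there is a center cluster $C$, every other (satellite) cluster has an edge to $C$, at least two edges incident to $C$ are blocked, and $C$ is the unique largest cluster. In this configuration the sequential loop ``fetch an edge incident to the current smallest cluster and push it down while it is unblocked'' never merges two satellites with each other, since that would place $n(C)$ plus two satellite sizes over the level's size bound (equivalently, create a second blocked edge, contradicting Lemma~\ref{lem:blocked_matching_size}); so every push-down absorbs the current smallest satellite into $C$. Because the satellites are processed in increasing order of size and $n(C)$ only grows, the loop absorbs exactly the maximal prefix $Q$, in increasing size order, of satellites whose sizes can all be added to $C$ without violating the size invariant. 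When the loop stops, the center has grown to some cluster $C'\supseteq C$, and the smallest remaining satellite $s^{*}$ is, by maximality, too large to be absorbed: $n(C')+n(s^{*})$ exceeds the level's size bound, so the edge $(s^{*},C')$ is blocked; since $s^{*}$ is then the smallest cluster in $CG(c)$, Lemma~\ref{lem:smallest_blocked} gives that every cluster is incident to a blocked edge, restoring Invariant~\ref{inv:blocked}. (If $Q$ exhausts all satellites, $C'$ becomes a single cluster and the invariant holds vacuously; if even the smallest satellite cannot be absorbed, $Q=\emptyset$ and the invariant already held.)

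Given this reduction, the parallel algorithm is: \textbf{(1)} run \textbf{GetMaximalPrefix} on the local tree of $CG(c)$ with budget equal to the level's size bound minus $n(C)$ to obtain $Q$ (the center $C$, being the largest cluster, never appears in the returned prefix unless the prefix is all of $CG(c)$, the degenerate ``$C$ swallows everything'' case); \textbf{(2)} in parallel over all $q\in Q$, fetch a single outbound edge incident to $q$ via the doubling search of Lemma~\ref{lem:doubling_fetch} (in this star such an edge leads to $C$, except possibly for the one stray blocked satellite--satellite edge permitted by Lemma~\ref{lem:one_blocked}, which is detected and left alone); \textbf{(3)} push all $|Q|$ of these edges down in a single \textbf{PushDownGroup} call, which is admissible because the edges are all incident to $C$ and the total size of their endpoints, $n(C)+\sum_{q\in Q}n(q)$, is within the level's size bound by construction of $Q$. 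For the bounds: step (1) is $O(\log n)$ depth and $O((|Q|+1)\log n)$ work by Lemma~\ref{lem:wb_get_prefix}; step (2) is $O(\log^2 n)$ depth (parallel over $Q$, each fetch $O(\log^2 n)$) and $\sum_{q\in Q}O(k_q\log n)$ work, where $k_q$ is the number of edges inspected out of $q$ --- but each inspected edge is either a self-loop that gets pushed down or the outbound edge pushed down in step (3), so $\sum_q k_q=O(x)$; step (3) is $O(\log n)$ depth and $O(|Q|\log n)$ work by Lemma~\ref{common_push_down}. Since each cluster of $Q$ contributes a distinct pushed-down edge in step (3), $|Q|\le x$, so the whole procedure runs in $O(\log^2 n)$ depth and $O((x+1)\log n)$ work, the $+1$ absorbing the case $Q=\emptyset$.

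The step I expect to be the main obstacle is the equivalence claimed in the second paragraph. Making it precise requires (a) using the star structure and the monotonicity of $n(C)$ to argue the sequential loop never merges two satellites, so that what it absorbs is genuinely a size-sorted \emph{prefix} of the satellites; (b) phrasing ``fetch an edge incident to the smallest cluster'' carefully, since it may repeatedly return self-loops --- one should route it through the outbound-edge primitive of Lemma~\ref{lem:doubling_fetch} and charge the pushed-down self-loops to $x$; and (c) dispatching the boundary cases ($Q$ empty, $Q$ all satellites) and the single stray blocked satellite--satellite edge of Lemma~\ref{lem:one_blocked}. Once the equivalence and the applicability of \textbf{PushDownGroup} are established, the depth and work bounds follow by routine accounting over the black-box guarantees of \textbf{GetMaximalPrefix}, Lemma~\ref{lem:doubling_fetch}, and \textbf{PushDownGroup}.
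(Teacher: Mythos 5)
Your reduction rests on an unjustified structural claim, and it is exactly the step you flagged as the ``main obstacle'': that every outbound edge fetched from a small satellite leads to the center $C$ (equivalently, claim (a) that the sequential loop never merges two satellites). The justification you give --- that merging two satellites would create a second blocked edge, contradicting Lemma~\ref{lem:blocked_matching_size} --- is backwards: what gets pushed down is an \emph{unblocked} satellite--satellite edge, and such edges may exist in abundance (the invariant before the split only forces them to be unblocked; Lemma~\ref{lem:one_blocked} limits how many can \emph{become blocked}, not how many exist). \textbf{FetchOutboundEdge} merely avoids self-loops, so a satellite $q\in Q$ may return an unblocked edge to another satellite. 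This breaks your argument in two places at once. First, step (3) is no longer admissible: \textbf{PushDownGroup} requires all edges in the group to be incident to a common cluster, and your size-budget certificate $n(C)+\sum_{q\in Q}n(q)\le \sizeconst{i-1}$ only certifies pushability for edges incident to $C$. Second, and more seriously, the correctness argument collapses: if some clusters of $Q$ merge with satellites outside $Q$ rather than with $C$, the center does not grow to $C'=C\cup Q$, and the smallest remaining cluster may still have an \emph{unblocked} edge to the center, so the invariant is not restored by one round of ``maximal-prefix absorption.'' Iterating the procedure to fix this would reintroduce the multi-round structure and a fresh depth/work analysis. (A smaller, fixable issue: mid-restoration the invariant does not hold, so Lemma~\ref{lem:heavy_center} does not apply and the center need not be the largest cluster; it can therefore land inside the prefix returned by \textbf{GetMaximalPrefix} even in non-degenerate cases.)

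The paper avoids committing to where the fetched edges lead. It runs a doubling search: prefixes of the $k$ smallest clusters for geometrically increasing $k$, interleaved with rounds of doubling edge fetches per cluster, stopping as soon as the first \emph{blocked} edge is found; every edge fetched in earlier rounds is unblocked and is pushed down wherever it goes (to the center or to another satellite), and the geometric growth guarantees those pushed-down edges are a constant fraction of all fetched edges, giving $O((x+1)\log n)$ work and $O(\log^2 n)$ depth, with correctness supplied by Lemma~\ref{lem:smallest_blocked} once the smallest cluster meets a blocked edge. If you want to keep your one-shot flavor, you would need a primitive that fetches specifically an edge from $q$ to $C$ (not just any outbound edge), or an argument that pushing down arbitrary unblocked edges out of the size-sorted prefix still leaves the smallest surviving cluster blocked; as written, neither is established.
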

\begin{proof}
    The modified local trees support getting a prefix of the $k$ smallest clusters in $O(\log n)$ depth and $O(k \log n)$ work. The total depth is $O(\log^2 n)$ because we interleave the rounds of doubling prefix size with the rounds of fetching edge batches of doubling size. Each round takes $O(\log n)$ depth.
    The number of edges we fetch increases geometrically each round. Let $t$ be the total number of edges fetched across all rounds, then the total work done is $O(t \log n)$. Each cluster fetches at least one edge so the work in getting prefixes is accounted for.
    The search ends when the first blocked edge is found. The number of edges fetched in previous rounds is at least a constant fraction of the edges fetched in the last round. All of the edges in the previous rounds can be pushed down, unless the blocked edge was found in the first round. So the number of edges that will be pushed down is $x \geq c \cdot t - 1$, for some constant $0 < c < 1$. Then the work done is $O((x+1)\log n)$.
\end{proof}

During the parallel search for outbound edges from satellites, at most $y=3$ edges can not be pushed down (other than the last round), at most one to both center fragments, and at most one between satellites. So applying Lemma~\ref{lem:par_sat_work}, the work during the search for outbound edges from satellites is $O((x_0+1)\log n)$ where $x_0$ is the number of edges that can be pushed down.

Let's consider the work when two blocked edges were found. Let $t_1$ be the number of edges fetched when fetching all of the edges incident to $P_2$. Then the total work is $O(t_1 \log n)$ by Lemma~\ref{lem:doubling_fetch}.
Then when we push down the larger of sets $A$, $B$, and $C$, and possibly some edges in the other sets. Let $x_1$ be the number of edges pushed down from these sets. Then $x_1 \geq 1/3 \cdot t_1$, The work done was $O(x_1 \log n)$, and the work to push them down is the same.
At this point $O(x_0 + x_1)$ edges have been pushed down, since $x_1 \geq x_0$.
Finally, applying Lemma~\ref{lem:par_smallest_work}, the work done in the process to ensure the smallest cluster is incident to a blocked edge is $O(x_2 \log n)$ where $x_2$ is the number of edges pushed down.
So the total work in the case where two blocked edges were found is $O((x_0 + x_1 + x_2 + 1) \log n) = O((x+1)\log n)$ where $x = \Omega(x_0 + x_1 + x_2)$ is the total number of edges pushed down.

Now we consider the work when two blocked edges were not found. Pushing down the $x_1$ edges fetched from satellites takes $O(x_1 \log n)$ work.
Let $t_3$ be the total number of edges fetched when fetching an outbound edge incident to $P_1$ and $P_2$. The work done is $O(t_3 \log n)$ by Lemma~\ref{lem:doubling_fetch}. Each edge fetched is either an outbound edge or a self-loop. All of the self-loops will be pushed down, and there are $2$ outbound edges. So the number of edges pushed down is $x_3 \geq t_3 - O(1)$, so the work is $O((x_3 + 1)\log n)$.
Finally, possibly pushing down the two outbound edges may take $O(\log n)$ work.
So the total work in the case where two blocked edges were not found is $O((x_0 + x_3 + 1) \log n) = O((x+1)\log n)$ where $x = \Omega(x_0 + x_3)$ is the total number of edges pushed down.

Of the $O((x+1)\log n)$ work done to restore the blocked invariant per level, we can charge $O(x \log n)$ of it to the $x$ edges pushed down, leaving $O(\log n)$ uncharged work. The rest of the work in the upward sweep is also $O(\log n)$ per level. Across all levels, the $O(\log^2 n)$ work can be charged to the deletion itself.

During the downward sweep we perform $O(|E_\ell| \log n)$ work per level which can be charged to the edges being pushed down from the previous level.
This yields the following theorem for edge deletion in the blocked cluster forest:

\begin{theorem} \label{thm:par_del}
    A deletion in the blocked cluster forest can be done in $O(\log^3n)$ depth and $O(k\log^2n)$ work where $k$ is the total number of times an edge is pushed down during deletion.
\end{theorem}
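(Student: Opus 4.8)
The target, Theorem~\ref{thm:par_del}, is obtained by a careful accounting over the two phases of the parallel deletion algorithm of Section~\ref{sec:pardel}: the bottom-up sweep through the $O(\log n)$ levels of the cluster forest (restoring the blocked invariant, running the connectivity search, and possibly splitting the parent cluster at each level), followed by the top-down sweep of \textbf{BatchPushDown} calls that finishes pushing the deferred edges down as far as possible. The plan is to prove an $O(\log^2 n)$ depth bound and an amortized work bound \emph{per level}, and then sum over the $O(\log n)$ levels.

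For the depth, I would argue that each level of the upward sweep costs $O(\log^2 n)$: restoring the invariant is $O(\log^2 n)$ depth by Lemma~\ref{lem:par_restore_inv}; the connectivity search fetches one outbound edge for each of $C_u$ and $C_v$ using the doubling fetch of Lemma~\ref{lem:doubling_fetch}, also $O(\log^2 n)$ depth; and removing/creating parent clusters is a constant number of batch-dynamic local-tree operations at $O(\log n)$ depth each. Each level of the downward sweep is a single \textbf{BatchPushDown}, which is $O(\log^2 n)$ depth by Lemma~\ref{lem:batch_push_down}. Multiplying by the $O(\log n)$ levels gives the $O(\log^3 n)$ depth bound.

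For the work, the strategy is an amortization: I would classify all work into (a) work charged to an edge whose level is decreased during the deletion, at $O(\log n)$ per such edge, contributing $O(k\log n)$ in total, and (b) ``uncharged'' work that is only $O(\log n)$ per level and hence $O(\log^2 n)$ across the whole deletion. Summing gives $O(k\log n + \log^2 n) \subseteq O(k\log^2 n)$. The parts that need the structural lemmas are: the doubling search for outbound edges incident to satellites fetches at most a constant factor more edges than are ultimately pushed down, with only $O(1)$ un-pushable exceptions (Lemmas~\ref{lem:par_sat_work} and~\ref{lem:one_blocked}); the analogous doubling search over the smallest clusters is handled by Lemma~\ref{lem:par_smallest_work}; fetching all edges incident to a center fragment fetches at most a constant factor more edges than the largest of the three push-down sets $A, B, C$, so pushing the largest set charges that work (using that blocked matchings have size $\le 1$, Lemma~\ref{lem:blocked_matching_size}); and the downward sweep does $O(|D_\ell|\log n)$ work at level $\ell$, each edge of $D_\ell$ being charged to the edge of $D_{\ell+1}$ that was pushed into it.

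The main obstacle is the work bound rather than the depth: one must verify that essentially every fetched or examined edge, beyond a constant per-operation budget, is genuinely pushed down so that its cost can be charged against the monotone decrease of its level. This needs (i) the geometric doubling schedule, so the edges fetched in all-but-the-last round dominate those in the last round; (ii) the Acar et al.\ interleaving of the two nested doubling loops (over the number of candidate clusters and over the edges fetched per cluster) so the per-level depth stays $O(\log^2 n)$ instead of degrading to $O(\log^3 n)$; and (iii) the blocked-matching structure (Lemmas~\ref{lem:blocked_matching_size} and~\ref{lem:one_blocked}) to cap the number of blocked, un-pushable edges encountered at each invariant restoration by $O(1)$. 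With these in hand, the per-level bounds follow and the summation is routine; the statement is then essentially a repackaging of Lemma~\ref{lem:par_del} with the simplified bound $O(k\log n+\log^2 n) \subseteq O(k\log^2 n)$.
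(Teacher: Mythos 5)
Your proposal is correct and follows essentially the same route as the paper's own analysis: $O(\log^2 n)$ depth per level for both sweeps times $O(\log n)$ levels, plus the same charging scheme ($O(\log n)$ per pushed-down edge via Lemmas~\ref{lem:par_sat_work}, \ref{lem:par_smallest_work}, \ref{lem:one_blocked}, and the largest-of-$A,B,C$ argument, with $O(\log n)$ uncharged work per level summing to $O(\log^2 n)$, and the downward sweep charged to the edges pushed from the level above). The final repackaging of $O(k\log n + \log^2 n)$ into $O(k\log^2 n)$ matches how the paper itself derives the theorem from Lemma~\ref{lem:par_del}.
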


\subsection{Restoring the Blocked Invariant for Batches}\label{sec:batchrestore}
Here we describe how to restore the blocked invariant in a cluster graph $CG(c)$ that previously maintained the invariant, but just had some number of clusters split into $k$ cluster fragments, $P=\{P_0,...,P_{k-1}\}$. This will result in the proof of Lemma~\ref{lem:batch_restore_invariant}.

\batchrestore*

Similar to the sequential algorithm for restoring the blocked invariant (Section~\ref{sec:restore}), we will treat each cluster fragment differently depending on if it was previously part of a satellite cluster or a center cluster.
If there was no cluster fragment that came from the center cluster of $CG(c)$ splitting, then the conditions for Lemma~\ref{lem:smallest_blocked} hold. Then we can use the technique described in Section~\ref{sec:pardel} to ensure that the smallest cluster is incident to a blocked edge, thus ensuring that all clusters are.

If there was a cluster fragment that came from the center cluster of $CG(c)$ splitting, then the process for restoring the blocked invariant will ensure every cluster is incident to a blocked edge, so we do not have to consider cluster fragments from satellite clusters separately.
Therefore we will assume that all of $P=\{P_0,...,P_{k-1}\}$ are fragments of the original center cluster.

Now we will prove an extension to Lemma~\ref{lem:one_blocked} that applies to the case where the center cluster was split into multiple fragments.

\begin{lemma}\label{lem:batch_one_blocked}
    Given a cluster graph $CG'$ which was formed from splitting the center of a cluster graph $CG$ (possibly into multiple parts) that previously maintained Invariant~\ref{inv:blocked} and had $k \geq 4$ clusters, only pushing down unblocked edges between the previous satellites can cause at most one blocked edge to form between the previous satellites.
\end{lemma}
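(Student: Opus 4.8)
The idea is to push the counting argument behind Lemma~\ref{lem:one_blocked} one step further so that it no longer uses the center being a \emph{single} cluster. Let $i$ be the level of $c$, fix the center $C$ of the original cluster graph $CG$ (the cluster that is split; it exists by Lemma~\ref{lem:radius}), and call every other node of $CG$ a \emph{previous satellite}. Since $CG$ satisfied Invariant~\ref{inv:blocked}, every previous satellite $S$ is joined to $C$ by a blocked level-$i$ edge, so $n(S) > \sizeconst{i-1} - n(C) =: \beta$ (and $\beta \ge 0$ since $n(C) \le \sizeconst{i-1}$ by Invariant~\ref{inv:size}). The nodes of $CG$ are exactly the children of $c$, so $n(C) + \sum_S n(S) = n(c) \le \sizeconst{i}$, and hence the total mass of the previous satellites is at most $\sizeconst{i} - n(C) = \sizeconst{i-1} + \beta$. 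Two facts make these the only inputs we need: (a) splitting $C$ and then pushing down unblocked level-$i$ edges between previous satellites never changes $n(c)$ (it only reorganizes children of $c$), so Invariant~\ref{inv:size} keeps holding for $c$; and (b) because we only ever merge satellites with satellites, at every point of the process the satellite-side nodes form a partition of the previous satellites, so each such merged cluster is a disjoint union of previous satellites — therefore still has size $> \beta$ — and their total mass is still at most $\sizeconst{i-1} + \beta$.

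Next I would argue by contradiction: suppose at some point there are two distinct blocked level-$i$ edges $e_1, e_2$ between (merged) satellites. If $e_1$ and $e_2$ are vertex-disjoint, they form a matching of size $2$ over blocked edges, and exactly as in the proof of Lemma~\ref{lem:blocked_matching_size} their four distinct endpoint clusters are disjoint children of $c$ whose sizes sum to more than $2\cdot\sizeconst{i-1} = \sizeconst{i}$, contradicting $n(c) \le \sizeconst{i}$. Otherwise $e_1$ and $e_2$ share a vertex, say $e_1 = (T, T_1)$ and $e_2 = (T, T_2)$ with $T, T_1, T_2$ pairwise-distinct merged satellites. Adding $n(T) + n(T_1) > \sizeconst{i-1}$ and $n(T) + n(T_2) > \sizeconst{i-1}$ gives $2n(T) + n(T_1) + n(T_2) > \sizeconst{i}$; since $n(T)+n(T_1)+n(T_2)$ is at most the total satellite mass $\sizeconst{i-1}+\beta$, subtracting yields $n(T) > \sizeconst{i-1} - \beta = n(C)$. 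But then the merged satellites other than $T$ carry total mass at most $(\sizeconst{i-1}+\beta) - n(T) < (\sizeconst{i-1}+\beta) - n(C) = 2\beta$, whereas $T_1$ and $T_2$ are disjoint, both sit among those other merged satellites, and each has size $> \beta$, so together they already carry more than $2\beta$ — a contradiction. Hence at most one blocked edge can form between the previous satellites, which is the claim.

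The step that does not carry over from Lemma~\ref{lem:one_blocked} is the shared-vertex case: with the center broken into arbitrarily many fragments there is no longer a single large cluster to pair against a blocked edge and expose a size-$2$ matching, so the contradiction has to come from a global mass balance, the crux being that $n(T) > n(C)$ leaves strictly less than $2\beta$ of satellite mass for everything else — which two satellite-sized clusters cannot fit into. I expect the only real care needed is tracking the constants so that the satellite-mass bound $\sizeconst{i-1}+\beta$ and the deficit $2\beta$ line up exactly (this is precisely where the per-satellite bound $n(S) > \beta$ and $n(C) \le \sizeconst{i-1}$ are used); the disjoint case is verbatim the argument already used for Lemma~\ref{lem:blocked_matching_size}, and reducing to these two cases is immediate since two edges either share a vertex or do not.
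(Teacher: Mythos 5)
Your proof is correct, but it takes a genuinely different route from the paper's. The paper proves the single-split version (Lemma~\ref{lem:one_blocked}) by observing that any two blocked satellite--satellite edges must share an endpoint (else Lemma~\ref{lem:blocked_matching_size} is violated), and then killing the shared-endpoint configuration $(S_1,S_2),(S_2,S_3)$ by pairing $(S_2,S_3)$ with the blocked edge $(C,S_1)$ to the \emph{intact} center, again a matching of size two; it then disposes of the split case---including the multi-way split of the present lemma, whose proof it declares ``identical''---in one line, by noting that whether a satellite--satellite edge is blocked depends only on the satellites' sizes, which the split leaves untouched, so the pre-split argument applies verbatim. You instead argue directly in the post-split graph with a global mass balance: every previous satellite has size $> \beta = \sizeconst{i-1} - n(C)$, the total satellite mass is at most $\sizeconst{i-1}+\beta$, and in the shared-vertex case the two blocked inequalities force $n(T) > n(C)$, leaving strictly less than $2\beta$ of mass for the remaining merged satellites, which $T_1$ and $T_2$ alone already exceed; the disjoint case is the matching argument, valid since $n(c)$ is unchanged by splitting and merging. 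This is self-contained (it in fact reproves Lemma~\ref{lem:one_blocked} as well) and avoids the reduction, at the cost of the bookkeeping in your items (a)--(b); note, though, that your remark that the shared-vertex step ``does not carry over'' is not quite right---the paper's size-only observation is precisely what makes it carry over unchanged. One shared caveat: both your argument and the paper's implicitly read ``at most one blocked edge'' as ``blocked edges between at most one pair of (merged) satellites,'' since parallel edges between the same pair are simultaneously blocked or unblocked; your case analysis (disjoint, or sharing exactly one endpoint with $T,T_1,T_2$ pairwise distinct) covers exactly this reading, which is also how the lemma is used in Section~\ref{sec:batchrestore}.
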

\begin{proof}
    The proof is identical to that of Lemma~\ref{lem:one_blocked}.
\end{proof}

The process for restoring the blocked invariant will begin exactly the same as the parallel process for restoring the invariant with a single split as described in Section~\ref{sec:pardel}.
We fetch outbound edges incident to satellites, doubling the number of satellites we look at in each round and interleaving the rounds of edge searching with the rounds of increasing numbers of satellites.
In the batch case, for each cluster fragment $P_i$, we store a set $S_i$ and total size $n_i$.
the search may stop if any of the $k$ fragments finds that it could form two blocked edges incident to it.

If edges are found between two satellites, we can once again push them down immediately. Now Lemma~\ref{lem:batch_one_blocked} gives us the same guarantee that only one blocked edge may form between the satellites.
For edges found to a center fragment $P_i$, we add the satellite to $S_i$ and update $n_i$ just as described previously.

\myparagraph{Two Blocked Edges Found}
If we found that two blocked edges could be formed incident to, without loss of generality, $P_0$, then we know that every other center fragment can merge with its entire neighborhood of satellites.
For each of those center fragments, we fetch all of the edges incident to it. Ignoring self-loops, the remaining edges of those we have fetched so far fall in to three sets: ($A$) edges between center fragments, ($B$) edges from center fragments other than $P_0$ to satellites, and ($C$) outbound edges from satellites previously fetched.
Once again, we will ensure that the largest of these 3 sets is pushed down, thus at least a constant fraction of the edges are pushed down.

If $A$ was the largest set, it is pushed down. Since we fetched all of the edges incident to each center fragment other than $P_0$, we have found every edge that exists between two center fragments. At this point some, all, or none of the center fragments may have merged into $P_0$.
For those that did not, we will push down all of its incident edges (previously fetched in $B$).
Then for the component with center $P_0$ we will use the same process as Section~\ref{sec:pardel} to ensure that the smallest cluster is incident to a blocked edge, thus restoring the invariant for all clusters by Lemma~\ref{lem:smallest_blocked}.

If $B$ or $C$ was the largest set, we start by pushing down all of $C$, except the two blocked edges between $P_0$ and satellites in $S_0$. Then we push down all of the edges in $B$ that were not also in $C$.
Now each center fragment has contracted with all of its satellites, either forming a single disconnected cluster, or becoming a satellite of $P_0$. We can determine this by attempting to get a single outbound edge incident to them. This may produce $O(k \log n)$ uncharged work if they each find a blocked edge.
Then for the component with center $P_0$ we will use the same process as Section~\ref{sec:pardel} to ensure that the smallest cluster is incident to a blocked edge again, thus restoring the invariant for all clusters by Lemma~\ref{lem:smallest_blocked}.

\myparagraph{Two Blocked Edges Not Found}
In this case we have found an outbound edge incident to each satellite. At most $k$ of them can be blocked, one for each center fragment. Thus there may be $O(k \log n)$ uncharged work for fetching those $k$ edges that will not be pushed down.
We will push down all of these edges that can be. The remaining clusters are the $k$ center fragments and at most one satellite $Q$. There cannot be two (or more) satellites $Q_1$ and $Q_2$ that had blocked edges incident to them because each center fragment only had 1 blocked edge found incident to it, so that would imply a matching of size two over the blocked edges, violating Lemma~\ref{lem:blocked_matching_size}.

To finish restoring the invariant we can now find an outbound edge incident to each of the $O(k)$ clusters remaining, and then call \textbf{BatchPushDown} on those edges. This will take $O(\log^2 n)$ depth, $O(k \log n)$ work, and ensure that all of the clusters are incident to a blocked edge.

\myparagraph{Cost Analysis}
The depth of the interleaved search for outbound edges from satellites is $O(\log^2 n)$.
In the case where two blocked edges are found, finding all edges incident to all but one of the center fragments can be done in parallel with depth $O(\log^2 n)$ by Lemma~\ref{lem:doubling_fetch}, pushing down all of $A$, $B$, and or $C$ can be done using \textbf{BatchPushDown} in $O(\log^2 n)$ depth, getting an outbound edge out of each center fragment takes $O(\log^2 n)$ depth by Lemma~\ref{lem:doubling_fetch}, and the process to ensure the smallest cluster is incident to a blocked edge takes $O(\log^2 n)$ depth.
In the case where two blocked edges are not found, pushing down the fetched edges can be done using \textbf{BatchPushDown} in $O(\log^2 n)$ depth, fetching an outbound edge incident to each remaining cluster can be done in $O(\log^2 n)$ depth by Lemma~\ref{lem:doubling_fetch}, and calling \textbf{BatchPushDown} on these edges takes $O(\log^2 n)$ depth.
Thus the depth to restore the blocked invariant is $O(\log^2 n)$ as stated in Lemma~\ref{lem:batch_restore_invariant}.

Lemma~\ref{lem:par_sat_work} tells us that the work done during the search for outbound edges from satellite is $O((x_0+k)\log n)$ where $x_0$ is the number of edges found that can be pushed down.

Let's consider the work when two blocked edges were found. Let $t_1$ be the total number of edges fetched when fetching all of the edges incident to all but one of the center fragments. Then the total work is $O(t_1 \log n)$ by Lemma~\ref{lem:doubling_fetch}.
Then when we push down the larger of sets $A$, $B$, and $C$, and possibly some of the other sets. Let $x_1$ be the number of edges pushed down from these sets. Then $x_1 \geq 1/3 \cdot t_1$, The work done was $O(x_1 \log n)$, and the work to push them down is the same.
At this point $O(x_0 + x_1)$ edges have been pushed down, since $x_1 \geq x_0$.

Let $t_2$ be the total number of edges fetched when fetching an outbound edge out of each center fragment. Then the total work is $O(t_2 \log n)$ by Lemma~\ref{lem:doubling_fetch}.
Each edge fetched is either an outbound edge or a self-loop. All of the self-loops will be pushed down, and there are at most $k$ outbound edges. So the number of edges pushed down is $x_2 \geq t_2 - k$, so the work is $O((x_2 + k)\log n)$.

Finally, Lemma~\ref{lem:par_smallest_work} tells us that the work done in the process to ensure the smallest cluster is incident to a blocked edge is $O(x_3 \log n)$ where $x_3$ is the number of edges pushed down.
So the total work in the case where two blocked edges were found is $O((x_0 + x_1 + x_2 + x_3 + k) \log n) = O((x+k)\log n)$ where $x = \Omega(x_0 + x_1 + x_2 + x_3)$ is the total number of edges pushed down.

Now we consider the work when two blocked edges were not found. Pushing down the $x_0$ edges fetched from satellites takes $O(x_0 \log n)$ work.
Let $t_4$ be the total number of edges fetched when fetching an outbound edge incident to each of the $O(k)$ remaining clusters. The work done is $O(t_4 \log n)$ by Lemma~\ref{lem:doubling_fetch}. Each edge fetched is either an outbound edge or a self-loop. All of the self-loops will be pushed down, and there are $O(k)$ outbound edges. So the number of edges pushed down is $x_4 \geq t_4 - O(k)$, so the work is $O((x_4 + k)\log n)$.
Finally, calling \textbf{BatchPushDown} on the $O(k)$ outbound edges takes $O(k \log n)$ work.
So the total work in the case where two blocked edges were not found is $O((x_0 + x_4 + k) \log n) = O((x+k)\log n)$ where $x = \Omega(x_0 + x_4)$ is the total number of edges pushed down.
This yields the proof of Lemma~\ref{lem:batch_restore_invariant}.

\subsection{Analysis of Batch Updates}

During both the upward and downward sweeps for deletion, the depth is $O(\log^2 n)$ per level, yielding an overall depth of $O(\log^3 n)$.

To analyze the work first note that the size of the active set of components will remain $O(k)$ because there can be at most $k+1$ components caused by $k$ edge deletions.
During the upward sweep, Lemma~\ref{lem:batch_restore_invariant} tells us that restoring the blocked invariant in a cluster graph with $k'$ active clusters takes $O(k' + (x+1)\log n)$ work where $x$ is the total number of edges pushed down by this process. The uncharged work is $O(k' + \log n)$. Across all the cluster graphs at a given level where $\sum k' = O(k)$, this work is $O(k \log n)$. Across every level this results in $O(k \log^2 n)$ uncharged work. We can charge $O(\log^2 n)$ of this work to each of the $k$ edge deletions.
The parallel connectivity search and splitting of clusters takes $O(\log^2 n)$ depth per level and $O(k' \log n)$ uncharged work per cluster graph. Just like the previous argument, across all cluster graphs at every level the total uncharged work is $O(k \log^2 n)$. We can charge $O(\log^2 n)$ of this work to each of the $k$ edge deletions.
During the downward sweep we perform $O(|D_\ell| \log n)$ work per level which can be charged to the edges being pushed down from the previous level, similar to the analysis of batch insertion.

The main result of this section is the following theorem for batches of updates in the blocked cluster forest:

\batchupdate*
\section{Additional Implementation Details and Experiments}

\subsection{CF Implementation and Optimizations}

We assume the reader is familiar with the CF algorithm as described in Wulff-Nilsen's paper~\cite{wulff2013faster} and also in Section~\ref{sec:seqcf}.
We decompose our implementation into several pieces, starting bottom-up, from the design of {\em leaf nodes}, then, {\em rank and local trees}, and finally describe our {non-tree edge optimization} that enables us to safely delete many edges without performing replacement edge search. We also describe a {\em two-level queue} optimization that decreases the number of traversals of the cluster forest.

\myparagraph{Leaf Design}
A $leaf$ stores all the edges incident to a vertex. It has the following attributes:
\begin{itemize}
    \item $id$, a unique integer identifier of a vertex in the graph
    \item $bitmap$, which is a 64-bit bitset where the $i$-th bit is set if and only if this leaf (i.e., this vertex) has level-$i$ incident edges.
    \item A collection of {\em edge sets}, each corresponding to a different level.
    Each edge set stores all edges belonging to this level that are incident to this leaf. 
\end{itemize}

Theoretically, these edge sets should be stored in a nested map where the key of the outer map is the level, and the value is a set which contains all edges at this level.  
Notice that this approach requires $O(\log n)$ time to locate the specific set that contains level $i$ edges. 
Since the algorithm mostly operates on the {\em inner edge set}, lookups on the outer map can become a bottleneck. Therefore, we replace the outer map with a vector. 
Importantly, we can implement this change without sacrificing the time or space complexity of the data structure. 

Our approach replaces the outer map with a vector as follows: the vector stores a collection of pairs in which the first element is the level, and the second element is a pointer to the edge set that stores the edges corresponding to that level. 
To quickly locate the index of the level $i$ edge set, we can simply take the bitmap for this leaf, zero-out all bits corresponding to levels larger than $i$ (this can be done in $O(1)$ time by performing a bit-shift), and finally counting the number of set bits in this modified bitmask (e.g., using \textsc{popcount}).
This entire operation can be done in $O(1)$ time using bit-operations on words, and is extremely fast, and yields the edge set corresponding to the level $i$ edges.
We note that operations on the collection of edge sets that introduce a new edge set or deleting an empty edge set still require $O(\log n)$ time, matching the theoretical complexity of the naive solution.
However, many operations that work solely within already-existing edge sets can be sped up significantly in practice, although this does not yield an asymptotic improvement in the running time.

%

%
%


\myparagraph{Non-Tree Edge Tracking Optimization}
%
If one considers the HDT algorithm~\cite{holm2001poly} and the earlier Henzinger-King algorithm~\cite{henzinger1995randomized}, both algorithms work by maintaining a spanning forest consisting of \defn{tree edges} that serves as a certificate for the connectivity of the graph.
Both algorithms store this spanning forest in an Euler Tour Tree which is used to answer queries for whether two vertices are connected.
Perhaps more importantly, when processing deletions, both algorithms observe that deletions of edges that are {\em not in this spanning tree}, i.e., of \defn{non-tree edges} are essentially free, since deleting such an edge does not affect the connectivity of the overall graph.

Unfortunately in the cluster forest data structure, there is no clear notion of a tree or non-tree edge, and obtaining a similar speedup as in the HDT/HK algorithms by avoiding work when processing non-tree edge deletions is not at all obvious.
One kind of edge in the cluster forest that can be deleted cheaply is a \defn{self-loop edge}, a level $i$ edge such that the level $(i-1)$ cluster representatives of both $u$ and $v$ are identical.
However, testing whether an edge is a self-loop edge requires $O(\log n)$ time in order to traverse the cluster forest and find the level $(i-1)$ cluster representatives of the level $i$ edge.
However, in the HDT algorithm checking if an edge is a non-tree edge takes $O(1)$ time since this information is stored in a hash-table. This fact makes HDT much more practical on dense graphs where most of the deletions target non-tree edges.

Before introducing our optimization, it is instructive to think about a simplistic approach to identify tree vs. non-tree edges. Suppose that we mark the first insertion that connects two components as a tree edge, and all subsequent insertions that don't affect connectivity as non-tree edges.
Now, suppose we delete some other tree edge and perform a replacement edge search.
During the replacement edge search, the algorithm may end up pushing down some of the non-tree edges that are parallel to a tree edge (thus merging the clusters that this tree edge goes between), while leaving this tree edge at its original level.
Subsequently, one of these pushed-down non-tree edges could get deleted, but since this edge is marked as a non-tree edge, we would not perform a replacement search, and could leave the cluster-graph at a lower level disconnected (also note that the tree edge becomes a self-loop at its unmodified level, and the notion of doing a ``replacement search'' is ill defined since both level $(i-1)$ representatives of the deleted edge are actually the same).
This example illustrates that any tree/non-tree marking strategy should be extremely careful to ensure that each cluster graph at every level $i$ is {\em connected}.
%

Our optimization based on this idea is called \defn{non-tree edge tracking}, and works as follows.
\begin{itemize}
\item  When inserting an edge $(u,v)$ that connects two different components, we mark this edge as a tree edge.
\item  Otherwise, if the edge goes within the same component, we mark this edge as a non-tree edge.
\end{itemize}
This part of our optimization is extremely natural. What is significantly more interesting is how to update tree/non-tree edge designations during a deletion and replacement edge search.
\begin{itemize}
\item If an edge $e=(u,v)$ is deleted and $e$ is marked as non-tree, we simply delete it, and update the bitmaps for $u$ and $v$.
\item Otherwise, if $e=(u,v)$ is a level $i$ tree edge, we perform replacement edge search from the level $(i-1)$ representatives $C_u$ and $C_v$. When performing the replacement edge search, for each level $(i-1)$ cluster found, we mark the first edge that reaches that cluster as a tree edge. We note that this marking occurs in both of the searches that we perform---both the search out of $C_u$ and the search out of $C_v$.
\end{itemize}

A reader may wonder whether the amount of edges marked as tree is much larger than necessary.
Although we are not sure whether this approach is tight in a theoretical sense (i.e., whether we may be over-marking edges as tree), the important property of our optimization is that it is {\em correct}, and that it is also highly effective in practice and empirically does not seem to overmark too many edges unnecessarily as tree edges (as we mentioned in the main body of the paper, there is a few percentage difference in the number of non-tree edges we detect compared with HDT).

To show correctness, it suffices to prove that the level $i$ cluster graphs (i.e., the graphs consisting of level $(i-1)$ cluster nodes connected by their level $i$ edges) are connected when we only include the tree edges.
Assume that this property holds before a replacement search, and consider the changes that we make after performing replacement search at some level $i$ (using level $i$ edges).
Suppose without loss of generality that we push down the set of vertices searched from $C_u$.
Since the first edge to each of these vertices is marked as a tree edge and is also pushed down, this entire set of vertices will be reachable via level $(i-1)$ edges in the level $(i-1)$ cluster graph.
Why do we mark the first-visit edges incident from the other search (i.e., out of $C_v$, the search that does not get pushed down) as tree?
To see why this is also necessary, consider what happens if the searches connect with each other and $C_v$ connects to $C_u$'s search via a non-tree edge---the issue in a nutshell is that the level $i$ cluster graph now containing $C_v$ may not be connected only using tree edges.
Our strategy of also marking all first-visit edges encountered in $C_v$'s search as tree handles this case, since any non-tree edge that is used to reconnect the two components will correctly be marked as tree in the level $i$ cluster graph.
We note that a more sophisticated version of this optimization could use some more complicated data structures to try and identify {\em which} of the first-visit edges are actually necessary to add, but we found this simpler approach of just marking all first-visit edges to work remarkably well in practice and found these more complicated ideas to have non-trivial overheads that slowed down replacement edge search. 
Exploring this rich optimization space further is an interesting direction for future work.

%

\myparagraph{Two-Level Queue Optimization}
Our last optimization also helps eliminate some unnecessary traversals of the cluster forest hierarchy and thus improves the running time of our implementation.
%
%
%
The motivation for the optimization is the fact that fetching edges in the cluster forest implementation is a major bottleneck.
Note that in an HDT implementation, a fetched level $i$ edge can be pushed down immediately if it is not a replacement edge (notice that no clusters need to be merged in HDT).
By contrast, in a cluster forest implementation pushing a level $i$ edge down to level $i-1$ is not at all a simple task since we have two alternating searches going on, and we don't know which of the two sets should be pushed down until the end of the searches.
However, modifying the cluster forest seems to be necessary (e.g., updating the bitmaps for the endpoints of this edge) since otherwise, the fetch operation cannot find the next level $i$ edge to check.

Our \defn{two-level queue} optimization speeds up the process of fetching level $i$ edges in the cluster forest by maintaining two separate queues of objects. The first {\em cluster queue} consists of level $(i-1)$ clusters that we visit when traversing the cluster graph. The second queue which we call the {\em leaf queue} consists of leaves that we visit which have incident level $i$ edges.
The replacement search procedure maintains two separate two-level queues for $C_u$ and $C_v$'s searches, respectively.
During a search, we first check the leaf queue. If the leaf-queue is non-empty, we take the first leaf, extract a level $i$ edge from this leaf, and traverse it to find the neighboring leaf and corresponding level $(i-1)$ cluster.
If the neighboring leaf and cluster are not already in their respective queues, we insert them into the appropriate queue.
To avoid extracting the same edge again, we simply increment the iterator of this leaf's edge set; if all level $i$ edges incident to this leaf are extracted, we update this leaf's bitmaps and move on.
%

When the search runs out of leaves in the leaf queue, it pops the next level $(i-1)$ cluster from the cluster queue, and uses the bitmaps to find the next unvisited leaf incident to this cluster, adding this leaf to the leaf queue.
If both queues become empty for a search, then the replacement search at this level is done (and we failed to find a replacement edge).
The effect of our optimization is to save a substantial amount of time performing downward traversals in the cluster forest to find nodes containing level $i$ edges, since we explicitly maintain a queue of them.

\subsection{Experiments}
Tables~\ref{tab:full_results1} and~\ref{tab:full_results2} present the raw un-normalized results of all of our experiments.

\begin{table*}
    \centering
    \caption{The first part of the raw results of our experiments. A red $\times$ indicates that the experiment timed out after 24 hours.}
    \label{tab:full_results1}
    \begin{tabular}{|l|l|l l l l l l l l|}
    \hline
        ~ & ~ & GER & USA & HH & CHEM & YT & POKE & WT & EW \\ \hline
        Total Insertion Time(s) & CF-LCA & 37.58 & 69.79 & 13.68 & 39.08 & 5.40 & 53.93 & 61.33 & 263.53 \\
        plus Initialization & CF-ROOT & 40.51 & 77.68 & 13.19 & 34.75 & 4.91 & 48.04 & 56.12 & 233.66 \\
        ~ & HDT & 78.69 & 150.49 & 20.08 & 53.00 & 10.19 & 67.50 & 77.05 & 354.72 \\
        ~ & DTree & 165.85 & 302.87 & 735.97 & {\color{red}$\times$} & 20.76 & 151.89 & 135.50 & 488.71 \\ \hline
        Total Deletion Time(s) & CF-LCA & 144.25 & 242.11 & 32.46 & 107.61 & 10.72 & 78.62 & 79.79 & 315.11 \\
        ~ & CF-ROOT & 224.74 & 440.04 & 53.48 & 167.14 & 15.20 & 106.27 & 108.50 & 384.05 \\
        ~ & HDT & 879.74 & 1792.04 & 171.29 & 534.76 & 34.22 & 208.75 & 196.19 & 733.93 \\
        ~ & DTree & 389.72 & 671.36 & 7234.59 & {\color{red}$\times$} & 61.78 & 561.40 & 456.83 & 1393.01 \\ \hline
        Total Update Time (s) & CF-LCA & 181.83 & 311.90 & 46.14 & 146.69 & 16.11 & 132.55 & 141.12 & 578.65 \\
        ~ & CF-ROOT & 265.25 & 517.72 & 66.67 & 201.88 & 20.11 & 154.31 & 164.62 & 617.71 \\
        ~ & HDT & 958.43 & 1942.53 & 191.37 & 587.76 & 44.41 & 276.25 & 273.24 & 1088.65 \\
        ~ & DTree & 555.56 & 974.23 & 7970.56 & {\color{red}$\times$} & 82.54 & 713.29 & 592.32 & 1881.71 \\ \hline
        Total Query Time (s) & CF-LCA & 16.65 & 15.88 & 21.93 & 31.03 & 13.25 & 21.40 & 20.19 & 25.79 \\
        ~ & CF-ROOT & 10.36 & 9.67 & 11.44 & 16.37 & 8.42 & 15.13 & 15.20 & 20.43 \\
        ~ & HDT & 17.12 & 16.99 & 19.52 & 27.37 & 14.04 & 25.46 & 25.74 & 38.35 \\
        ~ & DTree & 44.83 & 43.81 & 570.77 & {\color{red}$\times$} & 33.10 & 57.89 & 46.38 & 51.63 \\ \hline
        Peak Space Usage (GB) & CF-LCA & 4.85 & 9.24 & 0.98 & 2.06 & 0.45 & 1.07 & 1.12 & 3.37 \\
        ~ & CF-ROOT & 3.91 & 7.54 & 0.81 & 1.69 & 0.41 & 0.95 & 1.03 & 3.14 \\
        ~ & HDT & 68.99 & 140.59 & 11.45 & 25.75 & 6.20 & 8.51 & 9.11 & 27.96 \\
        ~ & DTree & 12.32 & 23.58 & 5.52 & 6.30 & 1.32 & 4.42 & 5.03 & 15.81 \\ \hline
    \end{tabular}
\end{table*}

\begin{table*}
    \centering
    \caption{The second part of the raw results of our experiments. A red $\times$ indicates that the experiment timed out after 24 hours.}
    \label{tab:full_results2}
    \begin{tabular}{|l|l|l l l l l l l l|}
    \hline
        ~ & ~ & SKIT & SO & LJ & ORK & TWIT & FR & GRID & RMAT \\ \hline
        Total Insertion Time(s) & CF-LCA & 24.90 & 66.48 & 115.70 & 337.34 & 7507.84 & 13136.54 & 19.69 & 3789.38 \\
        plus Initialization & CF-ROOT & 21.68 & 63.85 & 107.11 & 289.87 & 6311.46 & 12044.12 & 23.56 & 3208.91 \\
        ~ & HDT & 33.71 & 107.66 & 165.54 & 433.26 & 8925.71 & 17770.34 & 51.91 & 5221.15 \\
        ~ & DTree & 68.65 & 149.56 & 304.81 & 692.15 & 6977.36 & 12927.69 & 101.65 & 5159.71 \\ \hline
        Total Deletion Time(s) & CF-LCA & 36.70 & 76.52 & 168.64 & 353.04 & 9375.22 & 14557.69 & 53.67 & 5457.89 \\
        ~ & CF-ROOT & 51.14 & 105.88 & 241.87 & 422.26 & 9645.76 & 18242.89 & 122.22 & 8292.00 \\
        ~ & HDT & 110.12 & 224.97 & 566.28 & 732.94 & 14433.76 & 57307.62 & 325.34 & 20624.58 \\
        ~ & DTree & 231.77 & 436.07 & 1049.29 & 2196.70 & 16521.94 & 54575.80 & 208.84 & 16840.09 \\ \hline
        Total Update Time (s) & CF-LCA & 61.61 & 143.00 & 284.35 & 690.37 & 16883.06 & 27694.24 & 73.36 & 9247.27 \\
        ~ & CF-ROOT & 72.81 & 169.73 & 348.98 & 712.13 & 15957.22 & 30287.02 & 145.77 & 11500.91 \\
        ~ & HDT & 143.83 & 332.63 & 731.82 & 1166.21 & 23359.47 & 75077.96 & 377.25 & 25845.73 \\
        ~ & DTree & 300.42 & 585.63 & 1354.10 & 2888.86 & 23499.31 & 67503.50 & 310.49 & 21999.80 \\ \hline
        Total Query Time (s) & CF-LCA & 19.03 & 21.03 & 28.49 & 27.72 & 59.87 & 82.87 & 7.01 & 73.69 \\
        ~ & CF-ROOT & 12.91 & 15.64 & 19.87 & 18.81 & 47.21 & 74.98 & 4.07 & 57.06 \\
        ~ & HDT & 22.27 & 28.27 & 35.16 & 35.00 & 67.33 & 102.24 & 8.08 & 78.46 \\
        ~ & DTree & 44.00 & 40.02 & 59.72 & 60.71 & 59.92 & 83.14 & 37.87 & 71.51 \\ \hline
        Peak Space Usage (GB) & CF-LCA & 0.82 & 1.89 & 2.68 & 3.30 & 41.46 & 56.54 & 3.64 & 37.47 \\
        ~ & CF-ROOT & 0.75 & 1.81 & 2.37 & 3.02 & 39.80 & 52.88 & 3.01 & 33.62 \\
        ~ & HDT & 8.39 & 33.07 & 29.01 & 18.73 & 305.77 & 495.55 & 59.38 & 493.32 \\
        ~ & DTree & 2.92 & 5.54 & 9.72 & 17.72 & 184.26 & 274.39 & 10.79 & 140.56 \\ \hline
    \end{tabular}
\end{table*}

\end{appendix}

\end{document}
\endinput